\def\EMAIL#1{\href{mailto:#1}{#1}}
\def\URL#1{\href{#1}{#1}}         
\renewcommand{\Pr}{\mathbb{P}} 
\newcommand{\expect}{\mathbb{E}} 
\newcommand{\satlinks}{\mathcal{L}_s}
\newcommand{\steady}{\overline{\bm{N}}}
\newcommand{\steadyperp}{\steady_{\perp}}
\newcommand{\steadyparaspace}{\steady_{\shortparallel}^s}
\newcommand{\nperp}{\bm{n}_{\perp}}
\newcommand{\npara}{\bm{n}_{\shortparallel}}
\newcommand{\nperpspace}{\bm{n}_{\perp}^s}
\newcommand{\nparaspace}{\bm{n}_{\shortparallel}^s}
\newcommand{\weight}{\kappa}
\newcommand{\servrate}{\mu}
\newcommand{\Phase}{K}
\newcommand{\phase}{k}
\newcommand{\transition}{S}
\begin{document}


\RUNAUTHOR{Wang et al.}

\RUNTITLE{Heavy-Traffic Insensitive Bounds}

\TITLE{Heavy-Traffic Insensitive Bounds for Weighted Proportionally Fair Bandwidth Sharing Policies}

\ARTICLEAUTHORS{%
\AUTHOR{Weina Wang}
\AFF{Carnegie Mellon University, \EMAIL{weinaw@cs.cmu.edu}, \URL{}}
\AUTHOR{Siva Theja Maguluri}
\AFF{Georgia Institute of Technology, \EMAIL{siva.theja@gatech.edu}, \URL{}}
\AUTHOR{R.\ Srikant}
\AFF{University of Illinois at Urbana-Champaign, \EMAIL{rsrikant@illinois.edu}, \URL{}}
\AUTHOR{Lei Ying}
\AFF{University of Michigan, \EMAIL{leiying@umich.edu}, \URL{}}
} 

\ABSTRACT{%
We consider a connection-level model proposed by Massouli\'{e} and Roberts for bandwidth sharing among file transfer flows in a communication network.  We study weighted proportionally fair sharing policies and establish explicit-form bounds on the weighted sum of the expected numbers of flows on different routes in heavy traffic.  The bounds are linear in the number of critically loaded links in the network, and they hold for a class of phase-type file-size distributions; i.e., the \emph{bounds} are \emph{heavy-traffic insensitive} to the distributions in this class.  Our approach is Lyapunov-drift based, which is different from the widely used diffusion approximation approach.  A key technique we develop is to construct a novel inner product in the state space, which then allows us to obtain a multiplicative type of state-space collapse in steady state.  Furthermore, this state-space collapse result implies the interchange of limits as a by-product for the diffusion approximation of the \emph{equal-weight} case under phase-type file-size distributions, demonstrating the heavy-traffic insensitivity of the \emph{stationary distribution}.
}%


\KEYWORDS{bandwidth sharing; weighted proportionally fair sharing; heavy-traffic analysis; drift method; state-space collapse; phase-type distributions}

\maketitle

\section{Introduction.}\label{sec-intro}
We study the following connection-level model proposed by \citet{MasRob_00} for bandwidth sharing among file transfer flows in a communication network, illustrated in Figure~\ref{fig-network} {(see details in Section~\ref{sec-model})}.  File transfer requests arrive to a network, and the transfer of each file, also referred to as a \emph{flow}, is through a predetermined route that consists of a set of consecutive links connecting the source and the destination. Each link in the network has a finite bandwidth capacity that needs to be allocated to the flows on the link by a bandwidth sharing policy. The bandwidth/rate received by a flow determines the speed at which its data can be transferred, thus determines the delay of the file transfer, namely the time from when the file arrives until the completion of the transfer.

In this paper, we consider an important class of bandwidth sharing policies called \emph{weighted proportionally fair sharing policies} \cite{Kel_97,MoWal_00}, where each route is associated with a weight that represents the importance of this route.  When the weights of all the routes are equal, the policy is simply called the \emph{proportionally fair sharing policy}.  We are interested in characterizing the delay performance, reflected by the numbers of files present on the routes in steady state.  We study the \emph{heavy-traffic} regime, where the loads on some links are close to their capacities.  Heavy-traffic analysis is an approach that has been widely adopted to study queueing systems.  It provides approximations on the performance of a system and gives insights into policy design since it examines a policy in the critical scenario of heavy load.

\subsection{Existing work.}
Bandwidth sharing policies have been extensively studied in the literature.  See a recent survey paper by \citet{Wil_16} for an overview of this topic.

The question of stability was first posed by \citet{deVLeeKon_01}, and they gave results for weighted proportionally fair policies and another family of policies called weighted max-min fair policies under the assumption of exponentially distributed file sizes.  \citet{BonMas_01} generalized the results to a more general family of policies called weighted $\alpha$-fair policies \cite{MoWal_00}, which includes weighted proportionally fair policies and max-min fair policies as special cases (corresponding to $\alpha=1$ and $\alpha=\infty$, respectively).  Stability results for more general file size distributions have also been derived \cite{LakBecSri_05,Mas_07,Bra_10,GroWil_08,PagTanFer_12}.
In particular, \citet{Mas_07} showed the stability of the (equal-weight) proportionally fair policy under phase-type file size distributions, and \citet{PagTanFer_12} showed the stability of a fluid limit of this system under general file size distributions (although the fluid approximation of the original stochastic system itself has not been rigorously established). 

Heavy-traffic analysis of weighted proportionally fair policies has mainly focused on the \emph{diffusion approximation} approach.  Assuming exponentially distributed file sizes and Poisson arrivals, \citet{KanKelLee_09} derived the diffusion approximations for weighted proportionally fair policies under a \emph{local traffic assumption}, which requires that each link has at least one route that uses this link only.  The local traffic assumption may not be appropriate in some application scenarios such as in data centers.  \citet{YeYao_12} replaced the local traffic assumption with a weaker assumption that requires the routing matrix to have a full row-rank.  We refer to this assumption as the \emph{full-rank assumption}. For the (equal-weight) proportionally fair policy, \cite{KanKelLee_09} and \cite{YeYao_12} obtained the stationary distribution of the diffusion process explicitly, using results on product-form invariant measures for diffusion processes in \cite{HarWil_87,Wil_87}.  For weighted proportionally fair policies, explicit stationary distributions of the limiting diffusion processes are unknown.

In the above work \cite{KanKelLee_09} and \cite{YeYao_12}, the convergence of the scaled systems to the diffusion process is shown over finite time intervals. To establish that the stationary distribution of the original system can be approximated by the stationary distribution of the diffusion process, a so-called \emph{interchange-of-limits} argument is required.  \citet{ShaTsiZho_14} and \citet{YeYao_16} proved the interchange-of-limits for the diffusion approximations in \cite{KanKelLee_09} under the local traffic assumption and in \cite{YeYao_12} under the full-rank assumption, respectively.

A question that is of importance both in theory and in practice is whether the performance metrics of a policy are \emph{insensitive}, where insensitivity means no dependence on the forms of the file size distributions except for their means. Insensitivity is a highly desirable property since file size distributions in practice may not be exponential, and they may change over time with the evolution of application scenarios.  
{\citet{BonPro_03} showed that a necessary and sufficient condition for a bandwidth sharing policy to have an insensitive stationary distribution is the so-called balance property, which is generally not satisfied by the proportionally fair policy except for very special network structures.  \citet{BonPro_03} further proposed an insensitive policy named balanced fairness policy that is maximally stable and Pareto-efficient.  However, although the proportionally fair policy does not have the balance property in general, \citet{Mas_07} uncovered a deeper connection between proportional fairness and balanced fairness.  Namely, \citet{Mas_07} noted that balanced fairness converges to proportional fairness when the number of flows are large if such a limit exists.  Additionally, \citet{Mas_07} proposed a modified proportional fairness policy that has the same large-deviation rate function as the balanced fairness, and also converges to proportional fairness when the number of flows is large.  \citet{Wal_11} further proved that, when the number of flows goes to infinity, the only possible limit of a maximum stable, insensitive policy is the proportionally fair policy.
}

A relaxed goal of interest is \emph{insensitivity in heavy-traffic}.  Notably, \citet{VlaZhaZwa_14} recently derived the diffusion approximation for the proportionally fair policy under a class of phase-type file size distributions, and showed that the stationary distribution of the diffusion process is insensitive to the distributions in this class.  However, an interchange-of-limits argument was not provided, i.e., the result in \cite{VlaZhaZwa_14} only holds over finite time intervals.
{ In an earlier version \cite{WanMagSri_18} of this paper, we also considered the proportionally fair policy under phase-type file size distributions.  We derived bounds on the expected total number of flows in steady state, where the dominant terms in the bounds in heavy traffic are insensitive to the file size distributions.  In this paper, we focus on the more general weighted proportionally fair sharing, under a much broader
heavy-traffic regime.
}
{A weighted proportionally fair policy is in general sensitive in heavy traffic.  It is known that in the setting of a single link, where a weighted proportionally fair policy is known as discriminatory processor-sharing, the limiting distribution in heavy traffic depends on the second moments of file size distributions \cite{RegSen_96,vanNunBor_04}.
}

\subsection{Our results.}
\begin{sloppypar}
We take an approach that is different from the diffusion approximation approach, where we directly analyze the steady state of the system.  We obtain explicit-form bounds on the weighted sum of the expected numbers of flows on different routes.  Specifically, let $\overline{N}_r$ denote the number of flows on route $r$ in steady state, and $\weight_r$ denote the weight associated with route $r$.  We show the following bounds, assuming a class of phase-type distributions for the file sizes, Poisson arrivals, and the full-rank assumption:
\begin{equation}\label{eq-bounds}
\frac{L_s\cdot \min_r \weight_r}{\epsilon}+o\biggl(\frac{1}{\epsilon}\biggr)\le \expect\Biggl[\sum_r \weight_r \overline{N}_r\Biggr]
\le\frac{L_s\cdot \max_r \weight_r}{\epsilon}+o\biggl(\frac{1}{\epsilon}\biggr),
\end{equation}
where $L_s$ is the number of critically loaded links in the network in the heavy-traffic regime, and $\epsilon>0$ is the heavy-traffic parameter, depending only on the \emph{mean} file sizes and representing how far away the traffic load is from the boundary of the system capacity.  We note that in the weighted proportionally fair policy, the rates allocated to the flows remain the same if we multiply all the weights by the same number.  So without loss of generality we may assume that the weights are normalized such that $\max_r\weight_r=1$.
\end{sloppypar}

The weighted sum in \eqref{eq-bounds}, i.e., $\expect\bigl[\sum_r \weight_r \overline{N}_r\bigr]$, 
can be viewed as a cost for delay incurred by the system when each flow on route $r$ has a cost of $\weight_r$ for each unit of its delay.  A notable feature of our bounds is that these unit costs are the same as the weights in the weighted proportionally fair policy.  Suppose the performance goal of the system is to provide a guarantee on this total cost, with the unit costs representing the relative importance of the flows on different routes.  Then our bounds imply that the system is able to achieve this goal with a weighted proportionally fair policy by choosing the weights in the policy to be the same as the unit costs.

Furthermore, our bounds hold for a class of phase-type file size distributions that can approximate any file size distribution arbitrarily closely.  The dominant terms in the upper and lower bounds, i.e., $L_s\cdot\max_r\weight_r/\epsilon$ and $L_s\cdot\min_r\weight_r/\epsilon$ respectively, do not depend on the specific forms of the file size distributions in this class except for their means.  Therefore, we say that the upper and lower \emph{bounds} are \emph{heavy-traffic insensitive}.  For the proportionally fair policy where the weights are equal, the dominant terms in the upper and lower bounds coincide, and thus the total expected number of flows is heavy-traffic insensitive.  
{Interestingly, for weighted proportionally fair policies, when we consider the special case of a single link, the dominant terms in our upper and lower bounds match the upper and lower bounds derived for the heavy-traffic limiting distribution in discriminatory processor-sharing \cite{AalAyeBor_07}.  But to the best of our knowledge, results of explicit forms for weighted proportionally fair policies in general networks are scarce.
}
Note that while these results yield heavy-traffic insensitive bounds, they do \emph{not} imply that the weighted sum of expected numbers of flows under a general weighted proportionally fair policy is heavy-traffic insensitive.

Our bounds for the (equal-weight) proportionally fair policy complements the diffusion approximation result in \cite{VlaZhaZwa_14} since it justifies the validity of the approximation on the total expected number of flows given by the diffusion process in steady state. Furthermore, our state-space collapse result, which will be explained in the next section, implies the interchange of limits as a by-product for the diffusion approximation in \cite{VlaZhaZwa_14}, thus demonstrating the heavy-traffic insensitivity of the \emph{distribution} of the numbers of flows on different routes.

We remark that our upper bound in \eqref{eq-bounds} scales linearly with the \emph{number of links}, while static planning for bandwidth sharing would yield a result that scales linearly with the \emph{number of routes}. This scaling behavior of weighted proportionally fair policies is very appealing in applications such as data centers and the Internet, since the number of links is typically several orders of magnitude smaller than the number of routes.

\subsection{Our techniques.}
\begin{sloppypar}
Our approach is under a framework called the \emph{drift method} for studying the steady state of a queueing system in heavy traffic, developed in \cite{ErySri_12} and \cite{MagSri_16}. The basic idea is to obtain bounds on expected steady-state queue lengths by setting the drift of an appropriately chosen Lyapunov function equal to zero in steady state. In this approach, a critical step is to establish a \emph{state-space collapse} result. In prior work of the drift method, the state-space collapse is in the following sense. Consider the steady state of the system, which usually lies in a multi-dimensional vector space, and a lower-dimensional subspace of the state space. The state space is said to \emph{collapse} to this lower-dimensional subspace if the moments of the distance between the steady state and the lower-dimensional subspace are upper bounded by \emph{constants} as the heavy-traffic parameter $\epsilon$ goes to $0$. This intuitively means that the steady state concentrates around the lower-dimensional subspace in heavy traffic, hence the term collapse. In \cite{ErySri_12,MagSriYin_14,WanZhuYin_16,XieLu_15}, the state spaces collapse to single-dimensional subspaces. Papers \cite{MagSri_16} and \cite{MagBurSri_16} generalized the drift method to the case where the state space collapses to a multi-dimensional subspace, and resolved the open problem on the scaling behavior of backlog in a switch under the MaxWeight algorithm.
\end{sloppypar}

In this paper, our state-space collapse result is of a slightly different type from the prior work above of the drift method.  In contrast to the constant moment bounds above for the distance to the lower-dimensional subspace, our state-space collapse result shows that these moments may grow to infinity as the heavy-traffic parameter $\epsilon$ goes to $0$, but at a speed slower than the corresponding moments of the norm of the state vector.  Specifically, the $m$-th moment of this distance grows as $O\bigl((1/\sqrt{\epsilon})^m\bigr)$, while the $m$-th moment of the norm of the state vector grows as $\Theta\bigl((1/\epsilon)^m\bigr)$.  Therefore, their ratio still goes to $0$.  In this sense, the state-space collapse in this paper is of a \emph{multiplicative} type, which has a similar flavor to the multiplicative state-space collapse in the diffusion approximation literature (see, e.g., \cite{Bra_98,Wil_98,KanKelLee_09}).
We remark that a recent work \cite{WanMagJav_17} that studies switches with reconfiguration delay also deals with multiplicative type of state-space collapse, but the technique there cannot quantify how fast the moments grow in terms of $\epsilon$.

A key technique that allows us to establish the state-space collapse for phase-type file size distributions is an inner product we construct, which is different from the usual dot product.  The construction is inspired by the Lyapunov function in \cite{PagTanFer_12}, {where it is used to establish fluid stability.}  But the induced norm under our constructed inner product is slightly different from the Lyapunov function in \cite{PagTanFer_12} since we need the inner product to satisfy some sufficient conditions for the heavy-traffic analysis.  Our constructed inner product rotates the space in a way such that the utilization of resources under a weighted proportionally fair policy is reflected by quantities with clear geometric interpretations. This enables us to study the dynamics of geometric quantities such as the aforementioned distance between the state vector and a lower-dimensional subspace, which are needed in the drift-based approach.

\subsection{Organization.}
The rest of this paper is organized as follows. We introduce the model and notation in details in Section~\ref{sec-model}. For ease of exposition, we start with results and proofs for exponential file size distributions in Sections~\ref{sec-main}--\ref{sec-proofs-main-exp}. Our main results for exponential file size distributions are summarized in Section~\ref{sec-main}, the proof outlines are given in Section~\ref{sec-proof-outlines} and the detailed proofs are given in Section~\ref{sec-proofs-main-exp}.  Here we intend to use the proof outlines in Section~\ref{sec-proof-outlines} as a tutorial for illustrating the steps in the drift method.
In Section~\ref{sec-phase-type}, we generalize our results to a general class of phase-type distributions and present the implication on interchange of limits for the diffusion approximation in \cite{VlaZhaZwa_14}. We conclude our paper in Section~\ref{sec-conclusions}.

\section{System model.}\label{sec-model}
\paragraph{Basic notation.}
Let $\mathbb{R}$, $\mathbb{R}_+$ and $\mathbb{Z}_+$ denote the sets of real numbers, nonnegative real numbers and nonnegative integers, respectively. Let $[K]$ denote the set $\{1,2,\dots,K\}$ for a positive integer $K$. We use $\bm{1}_{K\times 1}$ to denote a $K\times 1$ vector whose entries all equal $1$, and omit the subscript when the dimension is clear from the context. We use $\mathbbm{1}_{\{\cdot\}}$ to denote an indicator function that equals $1$ when the event in the subscript is true and equals $0$ otherwise.  Vectors are column vectors unless otherwise stated. We use ``$\Rightarrow$'' to denote weak convergence (convergence in distribution) of random elements.

\paragraph{Bandwidth sharing.}
\begin{figure}
\centering
\includegraphics[scale=0.6]{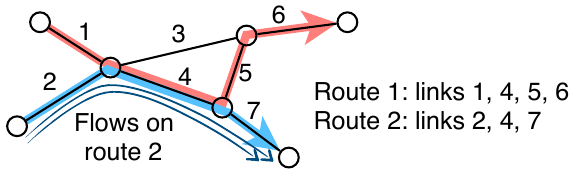}
\caption{A Bandwidth Sharing Network.}
\label{fig-network}
\end{figure}
We consider a network where nodes are connected by a set of \emph{links} $\mathcal{L}=\{1,2,\dots,L\}$, illustrated in Figure~\ref{fig-network}.  Data file transfer requests arrive to the system.  Each file transfer request, also referred to as a \emph{flow}, transfers a data file from a source node to a destination node, through a predetermined \emph{route} that consists of a set of consecutive links connecting the source and destination nodes. We consider a fixed set of routes indexed as $\mathcal{R}=\{1,2,\dots,R\}$. We write $\ell\in r$ if link~$\ell$ is on route~$r$. The relation between links and routes can be represented by the \emph{routing matrix} $H=(h_{\ell r})_{\ell\in\mathcal{L},r\in\mathcal{R}}$ with $h_{\ell r}=1$ if $\ell\in r$, and $h_{\ell r}=0$ otherwise. We assume that the routing matrix has full row-rank, referred to as the \emph{full-rank assumption}.

The system is operated in continuous time. Each link $\ell$ in the network has a bandwidth capacity $C_{\ell}$, which needs to be allocated to the flows on the link by a \emph{bandwidth sharing policy}. A bandwidth sharing policy specifies how much bandwidth/rate each flow receives according to the number of flows present on all the routes, subject to bandwidth capacity constraints. The rate a flow receives determines the speed at which its data can be transferred. We are interested in the delay of a file transfer, namely the time from when the file arrives until the completion of the transfer. Specifically, if we allocate a rate of $x(t)$ at time $t$ to a flow that arrives at time $A$ and has a file size $F$, then its delay $D$ is given by the equation $\int_{A}^{A+D}x(t)dt = F$; I.e., the transfer completes when the accumulative rate equals to the file size. Therefore, the bandwidth allocation policy affects the delay by specifying the rates $x(t)$'s for the flows.

\paragraph{Weighted proportionally fair policy.}
We consider an important class of bandwidth sharing policies called \emph{weighted proportionally fair policies}, where each route $r$ is associated with a positive weight $\weight_r$ that represents the importance of this route. When all the weights are equal (in which case we may assume that $\weight_r=1$ for all route $r$ without loss of generality), the policy is simply called the \emph{proportionally fair policy}.
We denote by $N_r(t)$ the total number of flows (flow count) on route $r$ at time $t$. A weighted proportionally fair policy allocates a rate of $x_r(t)$ to each flow on route $r$, where $(x_r(t))_{r\in\mathcal{R}}$ is the optimal solution of the following optimization problem with each $n_r$ equal to $N_r(t)$:
\begin{align}
\max_{(x_1,\dots,x_R)}
&\mspace{18mu} \sum_r \weight_rn_r\log x_r\label{eq-prop-fair-obj}\\
\text{subject to}&\mspace{13mu}\sum_{r:\ell\in r}n_rx_r\le C_{\ell},\forall \ell,\label{eq-prop-fair-constr-capacity}\\
&\mspace{18mu}x_r\ge 0,\forall r,
\end{align}
and $x_r(t)=0$ when $n_r=0$. The constraints in \eqref{eq-prop-fair-constr-capacity} are the bandwidth capacity constraints of the links, which requires that the total rate allocated to the flows on the link cannot exceed the link's capacity $C_{\ell}$. Let $p_{\ell}$ denote the Lagrange multiplier for the capacity constraint of link $\ell$.  {Note that the $p_{\ell}$'s are nonnegative.} Then the rate allocation $(x_r(t))_{r\in\mathcal{R}}$ satisfies
\begin{equation}\label{eq-allocation}
x_r(t)=
\begin{cases}
\frac{\weight_r}{\sum_{\ell:\ell\in r}p_{\ell}} & \text{when }n_{r}>0,\\
0 & \text{otherwise}.
\end{cases}
\end{equation}
For simplicity, we will just write $x_r(t)=x_r$ in the remainder of this paper, keeping in mind that $x_r$ implicitly depends on the flow counts at time $t$.

\paragraph{Arrivals and service.}
Flows arrive at route $r$ as a Poisson process with rate $\lambda_r$, and the arrival processes for different routes are independent. We assume that the file sizes are i.i.d.\ for flows on the same route and are independent across different routes. 
We remark that if the flows on a route have multiple classes, we can view each class as a different route and then our model still applies. For ease of exposition, we assume for now that the file size distribution of route $r$ is an exponential distribution with rate $\mu_r$. We will discuss more general file size distributions in Section~\ref{sec-phase-type}. We define the \emph{load on route $r$} to be $\rho_r=\lambda_r/\servrate_r$. Let $\bm{\lambda}=
(\lambda_r)_{r\in\mathcal{R}}$, $\bm{\mu}=
(\mu_r)_{r\in\mathcal{R}}$ and $\bm{\rho}=(\rho_r)_{r\in\mathcal{R}}$ denote the arrival rate vector, service rate vector and load vector, respectively.

\paragraph{Flow count process.}
Let $\bm{N}(t)=[N_1(t),N_2(t),\dots,N_R(t)]^T$ be the flow count vector.  The flow count process $(\bm{N}(t)\colon t\ge 0)$ is a continuous-time Markov chain. The transition rate $q_{\bm{n}\bm{n}'}$ from a state $\bm{n}$ to a state $\bm{n}'\neq\bm{n}$ is as follows:
\begin{equation}\label{eq-transition-rates-exp}
q_{\bm{n}\bm{n}'}=
\begin{cases}
\lambda_r &\text{if }\bm{n}'=\bm{n}+\bm{e}^{(r)},\\
n_{r}x_r\servrate_r&\text{if }\bm{n}'=\bm{n}-\bm{e}^{(r)},n_{r}>0,\\
0&\text{otherwise},
\end{cases}
\end{equation}
where $\bm{e}^{(r)}$ is the $r$th standard basis vector in $\mathbb{R}^{R}$, i.e., its $r$th entry is $1$ and other entries are $0$.
{We say that the system is stable if the flow count process is irreducible and positive recurrent.}

\paragraph{Heavy-traffic regime.}
We are interested in the stationary distribution of the flow count process in the heavy-traffic regime. Specifically, we consider a sequence of systems with the arrival rate vectors approaching the boundary of the capacity region.  {Here the capacity region refers to the interior of the set that consists of all the arrival rate vectors such that the system is stable under some policy}. Let the systems be indexed by a nonnegative parameter $\epsilon$, which represents how far away the arrival rate vector is from the boundary of the system capacity. Smaller $\epsilon$ means closer to the boundary and $\epsilon=0$ means on the boundary. The precise definition of $\epsilon$ will be given later. For clarity, we append the superscript $^{(\epsilon)}$ to the quantities that depend on $\epsilon$ in the $\epsilon$-th system. We say a quantity is a \emph{constant} if it does not depend on either the system state or $\epsilon$. Our goal is to analyze the flow counts in steady state for each system and then look at how they scale in the heavy-traffic regime where $\epsilon\to 0^+$.

Now we specify how the systems are parameterized by $\epsilon$. Consider an arrival rate vector $\bm{\lambda}^{(0)}$ on the boundary of the capacity region. We assume that in the $\epsilon$-th system, the arrival rate vector is given by $\bm{\lambda}^{(\epsilon)}=(1-\epsilon)\bm{\lambda}^{(0)}$. Let $\satlinks\subseteq \mathcal{L}$ denote the set of critically loaded links under the arrival rate vector $\bm{\lambda}^{(0)}$, i.e.,
\begin{align}
\sum_{r:\ell\in r}\rho^{(0)}_r&=C_\ell,\quad\text{for }\ell\in\satlinks,\\
\sum_{r:\ell'\in r}\rho^{(0)}_r&<C_{\ell'},\quad\text{for }\ell'\in\mathcal{L}\setminus\satlinks.
\end{align}
Let $\delta_\ell=C_\ell-\sum_{r:\ell\in r}\rho^{(0)}_r$. Then $\delta_\ell=0$ for $\ell\in\satlinks$ and $\delta_{\ell'}>0$ for $\ell'\in\mathcal{L}\setminus\satlinks$. Let $L_s=|\satlinks|$. We also assume that $\lambda^{(0)}_r>0$ for any $r\in\mathcal{R}$.

\paragraph{Inner product.}
For the space $\mathbb{R}^{R}$ where the states of the flow count process lie in, we consider the following weighted inner product:
\begin{equation}\label{eq-inner-exp}
\langle \bm{y},\bm{z}\rangle =\sum_{r} \frac{\weight_r}{\lambda^{(0)}_r}y_rz_r, \quad \bm{y},\bm{z}\in\mathbb{R}^R,
\end{equation}
and its induced norm. For conciseness we just use $\langle \cdot,\cdot\rangle$ and $\|\cdot\|$ to denote this weighted inner product and the induced norm in $\mathbb{R}^R$.
When we study more general service time distributions in Section~\ref{sec-phase-type}, we will introduce a generalization of this inner product and we will see that the choice of inner product is crucial.

\section{Main results for exponential file size distributions.}\label{sec-main}
In this section we present two main results for exponential file sizes distributions: state-space collapse and bounds on a weighted sum of the flow counts. We present this simpler setting with exponential file size distributions first since it better reveals the insights of our approach.  We will generalize the results to the more general setting with phase-type file size distributions
in Section~\ref{sec-phase-type}, where we construct a very complicated inner product.

\subsection{State-space collapse.}
Our first main result is state-space collapse, which intuitively means that the steady-state flow count vector concentrates around a lower-dimensional subspace of the state space in heavy traffic.

Specifically, the states of the flow count process $(\bm{N}(t):t\ge 0)$ lie in the $R$-dimensional space $\mathbb{R}^{R}$. We introduce an $L_s$-dimensional cone $\mathcal{K}$ in $\mathbb{R}^{R}$, which is where the state space collapses to in heavy traffic. Note that $L_s\le R$ by the full-rank assumption. 
The cone $\mathcal{K}$ is finitely generated by a set of vectors $\{\bm{b}^{(\ell)},\ell\in\satlinks\}\subseteq\{\bm{b}^{(\ell)},\ell\in\mathcal{L}\}$, i.e.,
\begin{equation}\label{eq-cone-exp}
\mathcal{K}=\biggl\{\bm{y}\in\mathbb{R}^{R}\colon \bm{y}=\sum_{\ell\in\satlinks} \alpha_{\ell}\bm{b}^{(\ell)},\alpha_{\ell}\ge 0\text{ for all }\ell\in\satlinks\biggr\}.
\end{equation}
Here each $\bm{b}^{(\ell)}=(b^{(\ell)}_r)_{r\in\mathcal{R}}$ with $\ell\in\mathcal{L}$ is defined as
\begin{equation}
b^{(\ell)}_{r}=\frac{\rho_{r}^{(0)}\mathbbm{1}_{\{\ell\in r\}}}{\weight_r},
\end{equation}
where recall that $\mathbbm{1}_{\{\ell\in r\}}$ is equal to $1$ when route $r$ uses link $\ell$ and equal to $0$ otherwise, and recall that $\weight_r$'s are the weights in the weighted proportionally fair policy.

The intuition for the steady-state flow count vector to concentrate around the cone $\mathcal{K}$ is as follows. Recall that the rate allocation $\bm{x}=(x_r)_{r\in\mathcal{R}}$ under the weighted proportionally fair sharing satisfies
\begin{equation*}
x_r=
\begin{cases}
\frac{\weight_r}{\sum_{\ell:\ell\in r}p_{\ell}} & \text{when }n_r>0,\\
0 & \text{otherwise},
\end{cases}
\end{equation*}
where $p_{\ell}$ is the Lagrange multiplier of the capacity constraint of link $\ell$. Then the flow count $n_r$ can be written as $n_r=\frac{n_rx_r}{\weight_r}\sum_{\ell:\ell\in r} p_{\ell}$.
Since the system is stable, on average, the arrival rate of data in files for each route is equal to the departure rate.  In the heavy-traffic regime, this means $n_rx_r\approx \rho_r^{(0)}$, which gives $n_r\approx\frac{\rho_r^{(0)}}{\weight_r}\sum_{\ell:\ell\in r} p_{\ell}$. Writing it in a vector form, we have
\begin{equation*}
\bm{n}\approx\sum_{\ell}p_{\ell}\bm{b}^{(\ell)}.
\end{equation*}
We further note that the links in $\mathcal{L}\setminus\satlinks$ are not critically loaded in the heavy-traffic regime.  So intuitively, the capacity constraints for those links are non-binding, and thus $p_{\ell}\approx 0$ for $\ell\in\mathcal{L}\setminus\satlinks$ by complementary slackness. Therefore,
\begin{equation}\label{eq-appr-ssc}
\bm{n}\approx\sum_{\ell:\ell\in\satlinks}p_{\ell}\bm{b}^{(\ell)}.
\end{equation}
If we view the $p_{\ell}$'s as constant coefficients, then \eqref{eq-appr-ssc} indicates that the flow count vector $\bm{n}$ roughly lies within the cone $\mathcal{K}$.

We now precisely define our notion of state-space collapse.  For any state~$\bm{n}$, we consider the following decomposition:
\begin{equation}
\bm{n}=\npara+\nperp,
\end{equation}
where $\npara$ is the \emph{projection} of $\bm{n}$ onto the cone $\mathcal{K}$, referred to as the \emph{parallel component}, and the remainder $\nperp$ is referred to as the \emph{perpendicular component} since it is perpendicular to $\npara$. Then $\npara\in\mathcal{K}$ and $\|\nperp\|$ is the distance between $\bm{n}$ and $\mathcal{K}$. Note that here the norm and projection are under the weighted inner product defined in \eqref{eq-inner-exp}.

Let $\steady^{(\epsilon)}$ denote a random vector whose distribution is the stationary distribution of the flow count process $(\bm{N}^{(\epsilon)}(t)\colon t\ge 0)$. We consider its parallel and perpendicular components, $\steady^{(\epsilon)}_{\shortparallel}$ and $\steadyperp^{(\epsilon)}$.  The state-space collapse indicates that as the arrival rate vector approaches the boundary of the capacity region, the perpendicular component $\steadyperp^{(\epsilon)}$ becomes negligible compared to the parallel component $\steady^{(\epsilon)}_{\shortparallel}$. We formally state this result in terms of moments in the following theorem.

\begin{theorem}[State-Space Collapse]\label{THM-SSC-EXP}
Consider a sequence of bandwidth sharing networks under a weighted proportionally fair policy, indexed by a parameter $\epsilon$ with $0<\epsilon<1$.  The file sizes have exponential distributions.  The arrival rate vector in the $\epsilon$-th system satisfies that $\bm{\lambda}^{(\epsilon)}=(1-\epsilon)\bm{\lambda}^{(0)}$ for some $\bm{\lambda}^{(0)}$ such that a set of $L_s$ links are critically loaded. 
Let $\steady^{(\epsilon)}$ denote a random vector whose distribution is the stationary distribution of the flow count process $(\bm{N}^{(\epsilon)}(t)\colon t\ge 0)$, and $\|\steadyperp^{(\epsilon)}\|$ denote its distance to the $L_s$-dimensional cone $\mathcal{K}$ defined in \eqref{eq-cone-exp} under the weighted inner product. Then in the heavy-traffic regime where $\epsilon\to 0^+$, the $m$-th moment of $\|\steadyperp^{(\epsilon)}\|$ for any nonnegative integer $m$ can be bounded as follows:
\begin{equation*}
\expect\Bigl[\Bigl\|\steadyperp^{(\epsilon)}\Bigr\|^m\Bigr]=O\Biggl(\biggl(\frac{1}{\sqrt\epsilon}\biggr)^m\Biggr).
\end{equation*}
\end{theorem}

We remark that $\expect[\|\steadyperp^{(\epsilon)}\|]/\expect[\|\steady^{(\epsilon)}\|]\to 0$ as $\epsilon\to 0^+$ since it can be proved that $\expect[\|\steady^{(\epsilon)}\|]=\Theta(1/\epsilon)$. Therefore, our state-space collapse is of a \emph{multiplicative} type.

\subsection{Bounds on flow counts.}
Based on the state-space collapse result, we establish the following bounds on the weighted sum of the flow counts.
\begin{theorem}[Bounds on Flow Counts]\label{THM-BOUNDS-EXP}
Consider a sequence of bandwidth sharing networks indexed by a parameter $\epsilon$ with $0<\epsilon<1$.  The file sizes have exponential distributions.  The arrival rate vector in the $\epsilon$-th system satisfies that $\bm{\lambda}^{(\epsilon)}=(1-\epsilon)\bm{\lambda}^{(0)}$ for some $\bm{\lambda}^{(0)}$ such that a set of $L_s$ links are critically loaded. Suppose that a weighted proportionally fair policy with weights $\weight_1,\dots,\weight_R$ is used.  Let $\steady^{(\epsilon)}$ denote a random vector whose distribution is the stationary distribution of the flow count process $(\bm{N}^{(\epsilon)}(t)\colon t\ge 0)$. Then in the heavy-traffic regime where $\epsilon\to 0^+$,
\begin{equation*}
\frac{L_s\cdot \min_r \weight_r}{\epsilon}+o\biggl(\frac{1}{\epsilon}\biggr)\le \expect\left[\sum_r \weight_r \overline{N}_r^{(\epsilon)}\right]
\le\frac{L_s\cdot \max_r \weight_r}{\epsilon}+o\biggl(\frac{1}{\epsilon}\biggr).
\end{equation*}
\end{theorem}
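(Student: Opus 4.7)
The plan is to apply the drift method, using a quadratic Lyapunov function together with the state-space collapse result of Theorem~\ref{THM-SSC-EXP} to control error terms. Concretely, I will take $V(\bm{n})=\|\bm{n}\|^2=\sum_r(\weight_r/\lambda_r^{(0)})n_r^2$ in the weighted inner product~\eqref{eq-inner-exp}, and compute its generator via the transition rates~\eqref{eq-transition-rates-exp}, using the identity $\|\bm{n}\pm\bm{e}^{(r)}\|^2-\|\bm{n}\|^2=(\weight_r/\lambda_r^{(0)})(\pm 2n_r+1)$.

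Setting the stationary drift $\expect[GV(\steady^{(\epsilon)})]=0$ and applying the flow-balance identity $\expect[\overline{N}_r^{(\epsilon)}x_r\mathbbm{1}_{\{\overline{N}_r^{(\epsilon)}>0\}}]=(1-\epsilon)\rho_r^{(0)}$ on each route should rearrange the equation into
\[
\epsilon\cdot\expect\Bigl[\sum_r\weight_r\overline{N}_r^{(\epsilon)}\Bigr]=(1-\epsilon)\sum_r\weight_r-\expect\bigl[T_*^{(\epsilon)}\bigr],
\]
where $T_*^{(\epsilon)}\ge 0$ is a quadratic slack term measuring how much the realized rate vector $(\overline{N}_r^{(\epsilon)}x_r)_r$ fluctuates around the nominal-load vector $\bm{\rho}^{(0)}$. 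The two bounds in Theorem~\ref{THM-BOUNDS-EXP} then reduce to matching two-sided bounds on $\expect[T_*^{(\epsilon)}]$, which I would obtain by exploiting the KKT identity $\sum_r\weight_r n_r=\sum_\ell p_\ell C_\ell$ implied by \eqref{eq-prop-fair-obj}--\eqref{eq-allocation} and a Cauchy-Schwarz decomposition of $T_*^{(\epsilon)}$ at each critically loaded link. Writing $\steady^{(\epsilon)}=\steady^{(\epsilon)}_{\shortparallel}+\steadyperp^{(\epsilon)}$ with $\steady^{(\epsilon)}_{\shortparallel}=\sum_{\ell\in\satlinks}\alpha_\ell\bm{b}^{(\ell)}$ for some nonnegative $\alpha_\ell$, the state-space collapse estimate $\expect[\|\steadyperp^{(\epsilon)}\|^m]=O(\epsilon^{-m/2})$ from Theorem~\ref{THM-SSC-EXP} together with Cauchy-Schwarz controls all error terms involving $\steadyperp^{(\epsilon)}$ as $o(1/\epsilon)$, reducing the problem to the in-cone contribution.

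The main obstacle I anticipate is extracting the dimensional factor $L_s$ rather than $R$ or $\sum_r\weight_r$ from this in-cone contribution. A crude bound $T_*^{(\epsilon)}\ge 0$ gives only the loose estimate $\expect[\sum_r\weight_r\overline{N}_r^{(\epsilon)}]\le\sum_r\weight_r/\epsilon$, so sharpening it to $L_s\max_r\weight_r/\epsilon$ (and symmetrically to $L_s\min_r\weight_r/\epsilon$ for the lower bound) will require carefully tracking how per-route weights aggregate through the link-incidence structure and using complementary slackness to discard the $L-L_s$ non-critically-loaded links. In the end, the per-link contribution to $\epsilon\cdot\expect[\sum_r\weight_r\overline{N}_r^{(\epsilon)}]$ should sit between $\min_r\weight_r$ and $\max_r\weight_r$, with exactly $L_s$ such terms surviving to yield the claimed bounds.
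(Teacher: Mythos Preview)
Your choice of Lyapunov function $V(\bm{n})=\|\bm{n}\|^2$ is the source of the obstacle you correctly identify. With that choice, the ``constant'' term after setting the stationary drift to zero is $\sum_r(\lambda_r+\expect[\overline{N}_r x_r]\mu_r)\|\bm{e}^{(r)}\|^2=2(1-\epsilon)\sum_r\weight_r$, which scales with $R$. The gap between $\sum_r\weight_r$ and $L_s\cdot(\text{weight})$ must then be absorbed into your slack term $T_*^{(\epsilon)}=-\langle\steady,\bm{\lambda}^{(0)}-\steady\bm{x\mu}\rangle$. But when you split $\steady=\steadyparaspace+\steadyperp^s$, it is the \emph{perpendicular} piece $-\langle\steadyperp^s,\bm{\lambda}^{(0)}-\steady\bm{x\mu}\rangle$ that carries this $\Theta(1)$ contribution, while the parallel piece is $o(1)$ (by the same complementary-slackness argument you allude to). Your plan of bounding all perpendicular terms via Cauchy--Schwarz gives only $O(\|\steadyperp^s\|)=O(\epsilon^{-1/2})$ for this piece, which does not even show $T_*^{(\epsilon)}$ is bounded, let alone pin it to the exact constant needed. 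So the in-cone reduction you describe is backwards: the perpendicular contribution is not an error term here.

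The paper sidesteps this by taking $V(\bm{n})=\|\nparaspace\|^2$, the squared norm of the projection onto the $L_s$-dimensional subspace $\mathcal{S}=\mathrm{span}\{\bm{b}^{(\ell)}:\ell\in\satlinks\}$. Two things then happen. First, the constant term becomes $\sum_r\lambda_r\|(\bm{e}^{(r)})_\shortparallel^s\|^2$; after pulling out a factor between $\min_r\weight_r$ and $\max_r\weight_r$ and writing the projection in matrix form, one is left with $(1-\epsilon)$ times $\mathrm{tr}\bigl(B_s^T(B_sMB_s^T)^{-1}B_sM\bigr)=L_s$, the trace of a rank-$L_s$ projector. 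This trace identity is precisely how the dimension $L_s$ enters, and there is no way to read it off from $\|\bm{n}\|^2$ without effectively also setting the drift of $\|\nperpspace\|^2$ to zero. Second, the cross term is now $\langle\steadyparaspace,\bm{\lambda}^{(0)}-\steady\bm{x\mu}\rangle=\sum_{\ell\in\satlinks}\alpha_\ell^s U_\ell=\sum_{\ell\in\satlinks}(\alpha_\ell^s-p_\ell)U_\ell$ by complementary slackness; controlling it to $o(1)$ requires the pointwise continuity estimate $|\alpha_\ell^s-p_\ell|\le B_2\|\nperp\|^{1/2}\bigl(\sum_r\weight_r n_r\bigr)^{1/2}$ of Lemma~\ref{lem-continuity2-exp}, which uses the full-rank assumption on the routing matrix and is a non-trivial ingredient your plan does not anticipate.
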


\section{Proof outlines of Theorems~\ref{THM-SSC-EXP} and \ref{THM-BOUNDS-EXP}---Steps in the drift method.}\label{sec-proof-outlines}
In this section we present proof outlines of the main results for exponential file size distributions in Theorems~\ref{THM-SSC-EXP} and \ref{THM-BOUNDS-EXP}. We intend to use the proof outlines as a tutorial for illustrating the steps in the drift method. The detailed proofs are given in Section~\ref{sec-proofs-main-exp}.

Consider the flow count process $(\bm{N}^{(\epsilon)}(t)\colon t\ge 0)$.  For any Lyapunov function $V\colon \mathbb{Z}_+^R\rightarrow \mathbb{R}_+$, the drift of $V(\cdot)$ at a state $\bm{n}$ is defined as
\begin{equation}
\Delta^{(\epsilon)} V(\bm{n})=\sum_{\bm{n}':\bm{n}'\neq\bm{n}}q^{(\epsilon)}_{\bm{n}\bm{n}'}(V(\bm{n}')-V(\bm{n})).
\end{equation}
where $q^{(\epsilon)}_{\bm{n}\bm{n}'}$ is the transition rate from state $\bm{n}$ to $\bm{n}'$ of the flow count process $(\bm{N}^{(\epsilon)}(t)\colon t\ge 0)$, given in \eqref{eq-transition-rates-exp}. The drift method studies the steady state of the flow count process in heavy traffic in the following two steps. 
\begin{enumerate}[leftmargin=4.1em]
\item[\textsc{Step 1.}] Establish state-space collapse: bound the moments of $\|\steadyperp^{(\epsilon)}\|$ by analyzing the drift $\Delta^{(\epsilon)}\|\nperp\|$.
\item[\textsc{Step 2.}] Bound the weighted sum of the flow counts in steady state by setting 
$\expect\left[\Delta^{(\epsilon)}\|(\steady^{(\epsilon)})_{\shortparallel}^s\|^2\right]=0$, where $(\cdot)_{\shortparallel}^s$ denotes the projection onto the \emph{subspace} where the cone $\mathcal{K}$ lies in.
\end{enumerate}

Before elaborating these two steps, below we identify two properties of the inner product we define in \eqref{eq-inner-exp}.  These two properties are easy to verify so the corresponding proofs are omitted.  However, they provide a basis for understanding the role of the inner product in the proofs of the main results.  Later we will develop generalizations of these two properties when we study more general file size distributions in Section~\ref{sec-phase-type}.  These two properties are concerned with the difference between the loads on the routes and the instantaneous rate allocation.  We introduce a set of vectors $\{\widehat{\bm{b}}^{(\ell)},\ell\in\mathcal{L}\}$, where each $\widehat{\bm{b}}^{(\ell)}=(\widehat{b}^{(\ell)}_r)_{r\in\mathcal{R}}$ is defined based on the rate allocation as follows:
\begin{equation*}
\widehat{b}^{(\ell)}_r=\frac{n_rx_r\mathbbm{1}_{\{\ell\in r\}}}{\weight_r}.
\end{equation*}
Recall that the vector $\bm{b}^{(\ell)}$ is defined as $b_r^{(\ell)}=\rho_r^{(0)}\mathbbm{1}_{\{\ell\in r\}}/\weight_r$. So the vector $\widehat{\bm{b}}^{(\ell)}$ replaces the $\rho_r^{(0)}$'s in $\bm{b}^{(\ell)}$ with $n_rx_r$'s.  We claim that the inner product satisfies the following two properties:
\begin{enumerate}[label=(P\arabic*),leftmargin=3.5em]
\item \label{property-P1} For each link $\ell$,
\begin{equation*}
\langle \bm{b}^{(\ell)},\bm{\lambda}^{(0)}-\bm{nx\mu}\rangle = U_{\ell}-\delta_{\ell},
\end{equation*}
where $\bm{nx\mu}$ denotes the entrywise product of $\bm{n}$, $\bm{x}$ and $\bm{\mu}$, i.e., $\bm{nx\mu}=(n_rx_r\mu_r)_{r\in\mathcal{R}}$, and $U_{\ell}$ is the \emph{unused bandwidth} on link $\ell$, i.e., the amount of bandwidth that is not allocated to any flow.
\item \label{property-P2} For each link $\ell$,
\begin{equation*}
\langle\bm{b}^{(\ell)}-\widehat{\bm{b}}^{(\ell)},\bm{\lambda}^{(0)}-\bm{nx\mu}\rangle\ge\weight_{\min}\mu_{\min}\|\bm{b}^{(\ell)}-\widehat{\bm{b}}^{(\ell)}\|^2,
\end{equation*}
where $\weight_{\min}=\min_r\{\weight_r\}>0$, $\mu_{\min}=\min_r\{\mu_r\}>0$.
\end{enumerate}

\subsection{Proof outline of Theorem~\ref{THM-SSC-EXP} (State-Space Collapse).}\label{subsec-ssc-exp}
In this section we elaborate Step~1 of the drift method and give the proof outline of the state-space collapse result in Theorem~\ref{THM-SSC-EXP}. The detailed proof is given in Section~\ref{subsec-proof-thm-ssc-exp}.

In the prior work of the drift method, moment bounds on $\|\steadyperp^{(\epsilon)}\|$ are usually obtained by studying the drift of the Lyapunov function $\|\nperp\|$ and applying the moment bounds in \cite{Haj_82} or the more refined tail bounds in \cite{BerGamTsi_01}. 
\iftoggle{complete}{We include a continuous-time version of the results in \cite{BerGamTsi_01} and its proof in Appendix~\ref{app-tail} for easy reference.
}{%
}%
However, for the flow count process $(\bm{N}^{(\epsilon)}(t)\colon t\ge 0)$ under a weighted proportionally fair policy, it is hard if not impossible to obtain a drift bound for the Lyapunov function $\|\nperp\|$ such that the drift conditions in \cite{BerGamTsi_01} are satisfied. Specifically, one condition in \cite{BerGamTsi_01} requires the drift to be negative whenever the value of the Lyapunov function is large enough, but for $(\bm{N}^{(\epsilon)}(t)\colon t\ge 0)$, large $\|\nperp\|$ alone may not be enough to give a negative drift.

Although the results in \cite{BerGamTsi_01} are not directly applicable, we will show that we can still obtain moment bounds by studying the drift. We first prove two drift bounds in Lemma~\ref{lem-nperp-drift-exp}, which mainly state that the drift is negative under the \emph{additional} condition that the ratio $\|\nperp\|/\|\bm{n}\|$ is also large enough. We then show in Lemma~\ref{lem-distr-bound-exp} that similar to the results in \cite{BerGamTsi_01}, the drift bounds lead to certain tail bounds and then further the moment bounds in Theorem~\ref{THM-SSC-EXP}.
\begin{lemma}[Drift Bounds for $\|\nperp\|$]\label{lem-nperp-drift-exp}
In the $\epsilon$-th system, the drift of the Lyapunov function $\|\nperp\|$ satisfies that
\begin{equation}
\Delta^{(\epsilon)} \|\nperp\|\le -\sqrt{\epsilon}
\end{equation}
when
\begin{equation}\label{eq-conditions-drift-exp}
\epsilon\le \epsilon_{\max},\quad\|\nperp\|\ge\frac{A_1}{\xi_1\sqrt{\epsilon}},\quad\frac{\|\nperp\|}{\sum_r\weight_r n_r}\ge\frac{\xi_2\sqrt{\epsilon}}{A_2},
\end{equation}
and
\begin{equation}
\Delta^{(\epsilon)} \|\nperp\|\le(\xi_1+1)\sqrt{\epsilon}
\end{equation}
when
\begin{equation}
\epsilon\le \epsilon_{\max},\quad\|\nperp\|\ge\frac{A_1}{\xi_1\sqrt{\epsilon}},
\end{equation}
where $\epsilon_{\max},\xi_1,\xi_2,A_1,A_2$ are positive constants.
\end{lemma}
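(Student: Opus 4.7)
My plan is to analyze the squared Lyapunov function $\|\nperp\|^2$ first, since its interaction with the transition structure is cleaner, and then pass to $\Delta^{(\epsilon)}\|\nperp\|$ via the elementary inequality $\Delta^{(\epsilon)}\|\nperp\| \le \Delta^{(\epsilon)}\|\nperp\|^2/(2\|\nperp\|)$, which holds whenever $\|\nperp\|>0$ (a consequence of $b^2-a^2\ge 2a(b-a)$ for $a,b\ge 0$). The hypothesis $\|\nperp\|\ge A_1/(\xi_1\sqrt\epsilon)$ keeps us safely away from zero, so this conversion is legitimate.

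Using the transition rates in \eqref{eq-transition-rates-exp} together with the cone-projection bound $\|\bm{n}'_\perp\|\le \|\bm{n}'-\npara\|$ (since $\npara\in\mathcal{K}$), and expanding $\|\bm{n}'-\npara\|^2-\|\nperp\|^2 = 2\langle\nperp,\bm{n}'-\bm{n}\rangle + \|\bm{n}'-\bm{n}\|^2$, summation against the rates yields
\begin{equation*}
\Delta^{(\epsilon)}\|\nperp\|^2 \le 2\langle\nperp,\bm{\lambda}^{(\epsilon)}-\bm{nx\mu}\rangle + K_1,
\end{equation*}
where $K_1$ is a uniform constant dominating $\sum_r(\lambda_r^{(0)}+n_r x_r\mu_r)\|\bm{e}^{(r)}\|^2$ (finite because $n_rx_r\le C_{\max}$). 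Writing $\bm{\lambda}^{(\epsilon)}=(1-\epsilon)\bm{\lambda}^{(0)}$ and splitting, $2\langle\nperp,\bm{\lambda}^{(\epsilon)}-\bm{nx\mu}\rangle = -2\epsilon\langle\nperp,\bm{\lambda}^{(0)}\rangle + 2\langle\nperp,\bm{\lambda}^{(0)}-\bm{nx\mu}\rangle$, and Cauchy--Schwarz controls the first piece in absolute value by $2\epsilon\|\bm{\lambda}^{(0)}\|\,\|\nperp\|$.

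The main obstacle is bounding $\langle\nperp,\bm{\lambda}^{(0)}-\bm{nx\mu}\rangle$ by a quadratic negative quantity of the form $-C_1\|\nperp\|^2/\sum_r\weight_r n_r$ for a positive constant $C_1$. This is where the weighted inner product, \ref{property-P1}, \ref{property-P2}, and the cone geometry all have to cooperate: the convex-cone projection identities $\langle\nperp,\npara\rangle=0$ and $\langle\nperp,\bm{b}^{(\ell)}\rangle\le 0$ for $\ell\in\satlinks$ control the pairing of $\nperp$ with the generators; \ref{property-P1} identifies $\langle\bm{b}^{(\ell)},\bm{\lambda}^{(0)}-\bm{nx\mu}\rangle$ with the slack $U_\ell-\delta_\ell$ (nonnegative at saturated links, strictly positive at non-saturated ones); and \ref{property-P2} supplies a quadratic restoring contribution from the mismatch $\bm{b}^{(\ell)}-\widehat{\bm{b}}^{(\ell)}$ between nominal and instantaneous allocations. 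The natural scale $\sum_r\weight_r n_r = \langle\bm{n},\bm{\lambda}^{(0)}\rangle$ emerges when $\bm{nx\mu}$ is expanded through the KKT-based formula \eqref{eq-allocation}.

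Assembling everything and dividing by $2\|\nperp\|$,
\begin{equation*}
\Delta^{(\epsilon)}\|\nperp\| \le -\frac{C_1\|\nperp\|}{\sum_r\weight_r n_r} + \epsilon\|\bm{\lambda}^{(0)}\| + \frac{K_1}{2\|\nperp\|}.
\end{equation*}
The ratio condition $\|\nperp\|/\sum_r\weight_r n_r\ge \xi_2\sqrt\epsilon/A_2$ bounds the first term by $-C_1\xi_2\sqrt\epsilon/A_2$; the condition $\|\nperp\|\ge A_1/(\xi_1\sqrt\epsilon)$ bounds $K_1/(2\|\nperp\|)$ by $K_1\xi_1\sqrt\epsilon/(2A_1)$; and $\epsilon\|\bm{\lambda}^{(0)}\|\le\sqrt\epsilon\,\|\bm{\lambda}^{(0)}\|$ for $\epsilon\le 1$. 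Calibrating $\epsilon_{\max},\xi_1,\xi_2,A_1,A_2$ so the negative term dominates yields $\Delta^{(\epsilon)}\|\nperp\|\le-\sqrt\epsilon$. For the second bound, simply discard the (nonpositive) first term, leaving $\epsilon\|\bm{\lambda}^{(0)}\| + K_1/(2\|\nperp\|)$, which is at most $(\xi_1+1)\sqrt\epsilon$ under the remaining two conditions once the constants are fixed.
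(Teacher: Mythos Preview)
Your plan is correct and follows essentially the same route as the paper. Your shortcut $\|\bm{n}'_\perp\|^2-\|\nperp\|^2\le\|\bm{n}'-\npara\|^2-\|\nperp\|^2=2\langle\nperp,\bm{n}'-\bm{n}\rangle+\|\bm{n}'-\bm{n}\|^2$ is a slight streamlining of the paper's decomposition $\Delta\|\nperp\|^2=\Delta\|\bm{n}\|^2-\Delta\|\npara\|^2$ (where the lower bound on $\Delta\|\npara\|^2$ uses the polar-cone inequality $\langle\npara,(\bm{n}')_\perp\rangle\le 0$), but both arrive at the same intermediate bound $\Delta\|\nperp\|\le\frac{1}{\|\nperp\|}\langle\nperp,\bm{\lambda}^{(0)}-\bm{nx\mu}\rangle+\epsilon\|\bm{\lambda}^{(0)}\|+A_1/\|\nperp\|$.

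For the ``main obstacle'' you correctly identify, the paper's concrete execution is to split $\langle\nperp,\cdot\rangle=\langle\bm{n},\cdot\rangle-\langle\npara,\cdot\rangle$: the $\npara$-piece is $\sum_{\ell\in\satlinks}\alpha_\ell U_\ell\ge 0$ by \ref{property-P1}, while for the $\bm{n}$-piece one writes $\bm{n}=\sum_\ell p_\ell\widehat{\bm{b}}^{(\ell)}$ from KKT, applies \ref{property-P1}, \ref{property-P2}, and complementary slackness to get $\langle\bm{n},\bm{\lambda}^{(0)}-\bm{nx\mu}\rangle\le-\weight_{\min}\mu_{\min}\sum_\ell p_\ell\|\widehat{\bm{b}}^{(\ell)}-\bm{b}^{(\ell)}\|^2-\sum_{\ell\notin\satlinks}p_\ell\delta_\ell$, and then a Cauchy--Schwarz comparison of $\|\nperp\|^2\le\|\bm{n}-\sum_{\ell\in\satlinks}p_\ell\bm{b}^{(\ell)}\|^2$ against this sum, together with $\sum_\ell p_\ell C_\ell=\sum_r\weight_r n_r$, delivers exactly your $-C_1\|\nperp\|^2/\sum_r\weight_r n_r$. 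Your mention of $\langle\nperp,\bm{b}^{(\ell)}\rangle\le 0$ is true but not the working mechanism here; make the $\bm{n}/\npara$ split explicit and the argument goes through as you outlined.
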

The proof of Lemma~\ref{lem-nperp-drift-exp} is given in Section~\ref{subsec-proof-lem-nperp-drift-exp}.
We remark that the last condition on $\|\nperp\|/\sum_r\weight_r n_r$ in \eqref{eq-conditions-drift-exp} is equivalent to that $\|\nperp\|/\|\bm{n}\|$ is large enough, since all norms are equivalent in $\mathbb{R}^{\Phase}$ and thus there exist positive constants $a_1$ and $a_2$ such that $a_1\|\bm{n}\|\le \sum_r\weight_r n_r\le a_2\|\bm{n}\|$.

\begin{lemma}[Tail and Moment Bounds for $\|\steadyperp^{(\epsilon)}\|$]\label{lem-distr-bound-exp}
For any nonnegative $\epsilon\le \epsilon_{\max}$, the tail distribution of $\|\steadyperp^{(\epsilon)}\|$ is bounded by an exponential term plus an additional term as follows: for any nonnegative integer $j$,
\begin{equation}\label{eq-nperp-tail-bounds-exp}
\Pr\biggl(\|\steadyperp^{(\epsilon)}\|>\frac{A_1}{\xi_1\sqrt{\epsilon}}+2\nu j\biggr)
\le\alpha^{j+1}+\xi_2(1-\alpha)\sum_{i=0}^j\alpha^i\Bigl(\beta^{\theta/\sqrt{\epsilon}}\Bigr)^{j-i},
\end{equation}
where $\epsilon_{\max},A_1,\xi_1,\xi_2$ are the constants in Lemma~\ref{lem-nperp-drift-exp}, $\nu $ and $\theta$ are positive constants, and
\begin{equation}\label{eq-alpha-beta-exp}
\alpha=\frac{a}{a+\sqrt{\epsilon}},\quad \beta=\frac{b}{b+\epsilon},
\end{equation}
for some positive constants $a$ and $b$. As a result, in the heavy-traffic regime where $\epsilon\to 0^+$, the $m$-th moment of $\|\steadyperp^{(\epsilon)}\|$ for any $m\in\mathbb{Z}_+$ can be bounded as follows:
\begin{equation}\label{eq-nperp-moment-bounds-exp}
\expect\Bigl[\Bigl\|\steadyperp^{(\epsilon)}\Bigr\|^m\Bigr]=O\Biggl(\biggl(\frac{1}{\sqrt\epsilon}\biggr)^m\Biggr).
\end{equation}
\end{lemma}
The proof of Lemma~\ref{lem-distr-bound-exp} is given in Section~\ref{subsec-proof-lem-distr-bound-exp}.
Compared with the exponential-type tail bounds in \cite{BerGamTsi_01}, the tail bounds in Lemma~\ref{lem-distr-bound-exp} have an additional term (the second term) due to the additional requirement on $\|\nperp\|/\sum_r\weight_rn_r$ in the drift bound in Lemma~\ref{lem-nperp-drift-exp}. In the proof of Lemma~\ref{lem-distr-bound-exp}, this additional requirement will be addressed by properly bounding $\sum_r\weight_r \overline{N}_r$, also through studying a Lyapunov drift. 
The derivation of the moment bounds based on the tail bounds is intuitive and similar to Lemma~3 in \cite{MagSri_16}.

\subsection{Proof outline of Theorem~\ref{THM-BOUNDS-EXP} (Bounds on Flow Counts).}\label{subsec-bounds-exp}
In this section we elaborate Step~2 of the drift method and give proof outlines of the bounds on flow counts in Theorem~\ref{THM-BOUNDS-EXP}. The detailed proof is given in Section~\ref{subsec-proof-thm-bounds-exp}. We obtain these bounds by setting the steady-state drift of the Lyapunov function $V(\bm{n})=\|\nparaspace\|^2$ to $0$, where $\nparaspace$ is the projection of the state $\bm{n}$ onto the \emph{subspace} where the cone $\mathcal{K}$ lies in, i.e., the subspace spanned by $\bm{b}^{(\ell)}$'s, denoted by $\mathcal{S}$. Note that in this section, we often consider the projection onto the \emph{subspace} instead of the projection onto the \emph{cone}. We use the superscript $^s$ to indicate when the projection is onto the subspace. Then $\nparaspace$ can be written as
\begin{equation}
\nparaspace=\sum_{\ell}\alpha_{\ell}^s\bm{b}^{(\ell)},
\end{equation}
where the coefficients $\alpha_{\ell}^s$'s can be negative. The projection onto the subspace is a linear operator, i.e., $(\bm{y}+\bm{z})_{\shortparallel}^s=\bm{y}_{\shortparallel}^s+\bm{z}_{\shortparallel}^s$ for any $\bm{y},\bm{z}\in\mathbb{R}^{\Phase}$. Note that $\|\nperpspace\|\le\|\nperp\|$ since the cone $\mathcal{K}$ is a subset of the subspace.

Below we fix an $\epsilon>0$ and temporarily omit the superscript $^{(\epsilon)}$ for conciseness. The drift of $\|\nparaspace\|^2$ can be written as follows:
\begin{align}
\Delta \|\nparaspace\|^2&= \sum_r\lambda_r\Bigl(\|(\bm{n}+\bm{e}^{(r)})_{\shortparallel}^s\|^2-\|\nparaspace\|^2\Bigr)+\sum_rn_rx_r\mu_r\Bigl(\|(\bm{n}-\bm{e}^{(r)})_{\shortparallel}^s\|^2-\|\nparaspace\|^2\Bigr)\nonumber\\
&=2\sum_r(\lambda_r-n_rx_r\mu_r)\langle \nparaspace,(\bm{e}^{(r)})_{\shortparallel}^s\rangle+\sum_r(\lambda_r+n_rx_r\mu_r)\|(\bm{e}^{(r)})_{\shortparallel}^s\|^2\nonumber\\
&=2\sum_r(\lambda_r-n_rx_r\mu_r)\langle \nparaspace,\bm{e}^{(r)}\rangle+\sum_r(\lambda_r+n_rx_r\mu_r)\|(\bm{e}^{(r)})_{\shortparallel}^s\|^2\label{eq-perp}\\
&=2\langle \nparaspace,\bm{\lambda}-\bm{nx\mu}\rangle +\sum_r(\lambda_r+n_rx_r\mu_r)\|(\bm{e}^{(r)})_{\shortparallel}^s\|^2\label{eq-nxmu}\\
&=-2\epsilon\langle \nparaspace,\bm{\lambda}^{(0)}\rangle + 2\langle \nparaspace,\bm{\lambda}^{(0)}-\bm{nx\mu}\rangle +\sum_r(\lambda_r+n_rx_r\mu_r)\|(\bm{e}^{(r)})_{\shortparallel}^s\|^2\nonumber\\
&=-2\epsilon\langle\bm{n},\bm{\lambda}^{(0)}\rangle+2\epsilon\langle\nperpspace,\bm{\lambda}^{(0)}\rangle + 2\langle \nparaspace,\bm{\lambda}^{(0)}-\bm{nx\mu}\rangle+B_1(\bm{n}),\label{eq-const-term}
\end{align}
where \eqref{eq-perp} follows from $\langle \nparaspace,(\bm{e}^{(r)})_{\perp}^s\rangle=0$, the $\bm{nx\mu}$ in \eqref{eq-nxmu} denotes the entrywise product of $\bm{n}$, $\bm{x}$ and $\bm{\mu}$, and in \eqref{eq-const-term}, $B_1(\bm{n})\triangleq\sum_r(\lambda_r+n_rx_r\mu_r)\|(\bm{e}^{(r)})_{\shortparallel}^s\|^2$. When the system is in steady state, we have $\expect[\Delta \|\steadyparaspace\|^2]=0$. Note that this is true since we can show that $\|\steady\|$ has finite moments (Lemma~\ref{LEM-SUM-NR-EXP}). Also note that by the definition of the inner product we choose, $\langle \steady,\bm{\lambda}^{(0)}\rangle=\sum_r\weight_r\overline{N}_r$.
Then setting $\expect[\Delta \|\steadyparaspace\|^2]=0$ yields
\begin{equation}\label{eq-drift-equal-0}
\epsilon\expect\left[\sum_r \weight_r \overline{N}_r\right]=\epsilon\expect[\langle \steadyperp^s,\bm{\lambda}^{(0)}\rangle]+\expect[\langle\steadyparaspace,\bm{\lambda}^{(0)}-\steady\bm{x\mu}\rangle]+\frac{1}{2}\expect[B_1(\steady)].
\end{equation}
The proof of Theorem~\ref{THM-BOUNDS-EXP} analyzes the terms on the right-hand-side of \eqref{eq-drift-equal-0}. A sketch is given below:
\begin{enumerate}[leftmargin=0em,itemindent=3em,label=({\roman*})]
\item We show $\epsilon\expect[\langle \steadyperp^s,\bm{\lambda}^{(0)}\rangle]=O(\sqrt{\epsilon})$ using state-space collapse.
\item Consider the term $\expect[\langle\steadyparaspace,\bm{\lambda}^{(0)}-\steady\bm{x\mu}\rangle]$. Since $\steadyparaspace$ is in the subspace $\mathcal{S}$, it can be written as $\steadyparaspace=\sum_{\ell\in\satlinks}\alpha_{\ell}^s\bm{b}^{(\ell)}$.
Then as before,
\begin{align*}
\langle \steadyparaspace,\bm{\lambda}^{(0)}-\steady\bm{x\mu}\rangle=\sum_{\ell\in\satlinks}\alpha_{\ell}^s\langle \bm{b}^{(\ell)},\bm{\lambda}^{(0)}-\steady\bm{x\mu}\rangle=\sum_{\ell\in\satlinks}\alpha_{\ell}^sU_{\ell}.
\end{align*}
Recall that due to complementary slackness, $p_{\ell}U_{\ell}=0$. So we can bound $\expect[\langle\steadyparaspace,\bm{\lambda}^{(0)}-\steady\bm{x\mu}\rangle]$ by showing that $\alpha_{\ell}$ and $p_{\ell}$ are close to each other in heavy-traffic for any $\ell\in\satlinks$. We know that for a state $\bm{n}$ such that $\bm{n}=\nparaspace$, i.e., $\|\nperp\|=\|\nperpspace\|=0$, the Lagrange multipliers $p_{\ell}$'s are equal to the coefficients $\alpha_{\ell}^s$'s of the projection. Then intuitively, when $\|\nperp\|$ is small, the rate allocation based on $\bm{n}$ should be not far away from the rate allocation based on $\nparaspace$, and thus the $p_{\ell}$'s should not be far away from the $\alpha_{\ell}$'s. Then we can use the state-space collapse result to bound the difference $|\alpha_{\ell}^s-p_{\ell}|$ in heavy traffic. Specifically, the following lemma bounds the difference $|\alpha_{\ell}^s-p_{\ell}|$ using $\|\nperp\|$, where notice that $\nperp$ is the projection onto the cone.

\begin{lemma}\label{lem-continuity2-exp}
There exist positive constants $B_2$ and $B_3$ such that for any state $\bm{n}$,
\begin{equation}\label{eq-diff-alpha-p-exp}
|\alpha_{\ell}^s-p_{\ell}|\le B_2\|\nperp\|^{1/2}\Biggl(\sum_r\weight_r n_r\Biggr)^{1/2},\quad \forall \ell \in\satlinks,
\end{equation}
and
\begin{equation}\label{eq-bound-p-exp}
\sum_{\ell\in\mathcal{L}\setminus\satlinks}p_{\ell}\le B_3\|\nperp\|,
\end{equation}
where the $\alpha_{\ell}^s$'s are the coefficients in the projection $\nparaspace=\sum_{\ell\in\satlinks}\alpha_{\ell}\bm{b}^{(\ell)}$ and the $p_{\ell}$'s are the Lagrange multipliers for the capacity constraints.
\end{lemma}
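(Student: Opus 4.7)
The plan is to reduce both inequalities to consequences of a single vector identity coming from the proportionally fair rate allocation, and then extract the claimed $\|\nperp\|$ scaling from Properties~\ref{property-P1} and \ref{property-P2}. The identity is obtained by rewriting the allocation formula $x_r=\weight_r/\sum_{\ell'\in r}p_{\ell'}$ as $n_r=(n_rx_r/\weight_r)\sum_{\ell'\in r}p_{\ell'}$, which is equivalent to the vector equation $\bm{n}=\sum_{\ell\in\mathcal{L}}p_\ell\widehat{\bm{b}}^{(\ell)}$. Taking the weighted inner product with $\bm{b}^{(\ell')}$ for each $\ell'\in\satlinks$, and using that $\nperpspace$ is orthogonal to $\mathcal{S}$, turns this into the linear system
\begin{equation*}
M(\alpha^s-p_{\satlinks})=(\widetilde{M}_{\satlinks,\satlinks}-M)\,p_{\satlinks}+\widetilde{M}_{\satlinks,\mathcal{L}\setminus\satlinks}\,p_{\mathcal{L}\setminus\satlinks},
\end{equation*}
where $M_{\ell\ell'}=\langle\bm{b}^{(\ell)},\bm{b}^{(\ell')}\rangle$ and $\widetilde{M}_{\ell\ell'}=\langle\widehat{\bm{b}}^{(\ell)},\bm{b}^{(\ell')}\rangle$. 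By the full-rank assumption, $M$ is positive definite with a bounded inverse, so controlling the right-hand side controls $\|\alpha^s-p_{\satlinks}\|$.

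I would prove \eqref{eq-bound-p-exp} first. Weighting Property~\ref{property-P2} by the nonnegative multipliers $p_\ell$ and summing yields
\begin{equation*}
0\le\weight_{\min}\mu_{\min}\sum_\ell p_\ell\|\bm{b}^{(\ell)}-\widehat{\bm{b}}^{(\ell)}\|^2\le\langle\widetilde{\bm{n}}-\bm{n},\bm{\lambda}^{(0)}-\bm{nx\mu}\rangle,\quad\widetilde{\bm{n}}:=\sum_\ell p_\ell\bm{b}^{(\ell)}.
\end{equation*}
A direct calculation using Property~\ref{property-P1} together with complementary slackness $p_\ell U_\ell=0$ simplifies $\langle\widetilde{\bm{n}},\bm{\lambda}^{(0)}-\bm{nx\mu}\rangle$ to $-\sum_{\ell\notin\satlinks}p_\ell\delta_\ell$, while decomposing $\bm{n}=\npara+\nperp$ with $\npara\in\mathcal{K}$ and applying Property~\ref{property-P1} to the parallel part gives $\langle\npara,\bm{nx\mu}-\bm{\lambda}^{(0)}\rangle=-\sum_{\ell\in\satlinks}\alpha_\ell U_\ell\le 0$. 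Combining, with Cauchy--Schwarz on the $\nperp$-contribution, produces $\delta_{\min}\sum_{\ell\notin\satlinks}p_\ell\le\|\nperp\|\cdot\|\bm{\lambda}^{(0)}-\bm{nx\mu}\|$; since $n_rx_r\le\max_\ell C_\ell$ makes $\|\bm{\lambda}^{(0)}-\bm{nx\mu}\|$ a bounded constant, \eqref{eq-bound-p-exp} follows.

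For \eqref{eq-diff-alpha-p-exp}, the same co-coercivity yields $\sum_\ell p_\ell\|\bm{b}^{(\ell)}-\widehat{\bm{b}}^{(\ell)}\|^2\le\text{const}\cdot\|\nperp\|$. A weighted Cauchy--Schwarz then gives $\|\widetilde{\bm{n}}-\bm{n}\|^2\le\|\bm{p}\|_1\cdot\sum_\ell p_\ell\|\bm{b}^{(\ell)}-\widehat{\bm{b}}^{(\ell)}\|^2$, and the dual identity $\sum_\ell p_\ell C_\ell=\sum_r\weight_rn_r$ (a consequence of the KKT relations, obtained by multiplying the allocation formula by $p_\ell$ and summing) provides $\|\bm{p}\|_1\le\text{const}\cdot\sum_r\weight_rn_r$, whence $\|\widetilde{\bm{n}}-\bm{n}\|\le\text{const}\cdot\|\nperp\|^{1/2}(\sum_r\weight_rn_r)^{1/2}$. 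Separating the saturated and non-saturated pieces and invoking \eqref{eq-bound-p-exp} for the latter shows $\|\sum_{\ell\in\satlinks}p_\ell(\widehat{\bm{b}}^{(\ell)}-\bm{b}^{(\ell)})\|$ obeys the same bound; pairing with $\bm{b}^{(\ell')}$ by Cauchy--Schwarz bounds the first term on the right-hand side of the Step~1 linear system, and the second term is $O(\|\nperp\|)$ by \eqref{eq-bound-p-exp}, which is absorbed since $\sum_r\weight_rn_r$ is bounded below by a constant times $\|\nperp\|$. Multiplying by $\|M^{-1}\|$ then delivers \eqref{eq-diff-alpha-p-exp}.

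\emph{Main obstacle.} The crux lies in the last step: Property~\ref{property-P2} naturally produces a sum-of-squares estimate, whereas the lemma asks for a geometric-mean bound. Turning one into the other hinges on the weighted Cauchy--Schwarz split $\|\widetilde{\bm{n}}-\bm{n}\|^2\le\|\bm{p}\|_1\sum_\ell p_\ell\|\bm{b}^{(\ell)}-\widehat{\bm{b}}^{(\ell)}\|^2$, where one factor is controlled by $\|\nperp\|$ via \eqref{eq-bound-p-exp} (which must therefore be proved first) and the other by $\sum_r\weight_r n_r$ via the dual identity $\sum_\ell p_\ell C_\ell=\sum_r\weight_r n_r$. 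Arranging the two bounds to slot into the linear system of Step~1 with the correct constants, while simultaneously absorbing the non-saturated contributions, is where the bulk of the technical work will lie.
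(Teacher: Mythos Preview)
Your proposal is correct and follows essentially the same strategy as the paper. For \eqref{eq-bound-p-exp} the two arguments are virtually identical: both combine Properties~\ref{property-P1}, \ref{property-P2} with complementary slackness to get
\[
\weight_{\min}\mu_{\min}\sum_{\ell}p_\ell\|\widehat{\bm{b}}^{(\ell)}-\bm{b}^{(\ell)}\|^2+\sum_{\ell\notin\satlinks}p_\ell\delta_\ell\le\|\nperp\|\cdot\|\bm{\lambda}^{(0)}-\bm{nx\mu}\|,
\]
the paper phrasing this via the duality formula $\|\nperp\|=\sup_{\bm{y}\in\mathcal{K}^{\circ},\|\bm{y}\|\le 1}\langle\bm{n},\bm{y}\rangle$ and you via the equivalent split $\bm{n}=\npara+\nperp$ with $\langle\npara,\bm{nx\mu}-\bm{\lambda}^{(0)}\rangle\le 0$. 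For \eqref{eq-diff-alpha-p-exp} the only real difference is in the final linear inversion: the paper works route-by-route, bounding $|\sum_{\ell\in r}(\alpha_\ell^s-p_\ell)|$ for each $r$ and then recovering the individual differences through $(H_sH_s^T)^{-1}H_s$, whereas you pair with each $\bm{b}^{(\ell')}$ and invert the Gram matrix $(\langle\bm{b}^{(\ell)},\bm{b}^{(\ell')}\rangle)_{\ell,\ell'\in\satlinks}$. Both inversions are licensed by the full-rank assumption at the same point and yield the same bound; your Gram-matrix packaging is slightly more geometric but not materially different.
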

Based on Lemma~\ref{lem-continuity2-exp}, we show that $\expect[\langle\steadyparaspace,\bm{\lambda}^{(0)}-\steady\bm{x\mu}\rangle]=O(\epsilon^{\frac{1}{4}-\frac{1}{\tau_1}})$ for an even integer $\tau_1$ with $\tau_1>4$. We remark that the proof of Lemma~\ref{lem-continuity2-exp} is where the full-rank assumption is needed.

\item To bound the last term $\expect[B_1(\steady)]/2$, we first notice that $\expect[\overline{N}_rx_r]=\rho_r$ by the fact that $\expect[\Delta \overline{N}_r]=0$. Then $\frac{1}{2}\expect[B_1(\steady)]=\sum_r \lambda_r\|(\bm{e}^{(r)})_{\shortparallel}^s\|^2$.
We show that it can be bounded as follows,
\begin{equation*}
(1-\epsilon)L_s\cdot\min_r\weight_r\le \frac{1}{2}\expect[B_1(\steady)]\le (1-\epsilon)L_s\cdot\max_r\weight_r.
\end{equation*}
\end{enumerate}

Combining (i), (ii) and (iii) will yield the bounds in Theorem~\ref{THM-BOUNDS-EXP}.

\section{Proofs of Theorems~\ref{THM-SSC-EXP} and \ref{THM-BOUNDS-EXP}.}\label{sec-proofs-main-exp}

\subsection{Proof of Theorem~\ref{THM-SSC-EXP} (State-Space Collapse).}\label{subsec-proof-thm-ssc-exp}
By the arguments in Section~\ref{subsec-ssc-exp}, it suffices to prove Lemmas~\ref{lem-nperp-drift-exp} and \ref{lem-distr-bound-exp}.
\subsubsection{Proof of Lemma~\ref{lem-nperp-drift-exp} (Drift Bounds for \texorpdfstring{$\|\nperp\|$}{perp}).}\label{subsec-proof-lem-nperp-drift-exp}
In this proof we fix an $\epsilon>0$ and temporarily omit the superscript~$^{(\epsilon)}$ for conciseness.
Recall that $\bm{n}=[n_1,n_2,\dots,n_R]^T$ is the flow count vector, $\bm{x}=[x_1,x_2,\dots,x_R]^T$ is the rate allocation vector under a weighted proportionally fair policy, and $\bm{nx\mu}$ is the entrywise product of $\bm{n}$, $\bm{x}$ and $\bm{\mu}$.

We start by stating the following claim, the proof of which is given at the end of this proof.
\begin{claim}\label{claim-nperp-drift-exp}
The drift, $\Delta \|\nperp\|$, is upper bounded as follows:
\begin{equation*}
\Delta \|\nperp\|\le \frac{1}{\|\nperp\|}\langle \bm{n}-\npara,\bm{\lambda}^{(0)}-\bm{nx\mu}\rangle+\epsilon\|\bm{\lambda}^{(0)}\|+\frac{A_1}{\|\nperp\|},
\end{equation*}
where $A_1$ is a constant.
\end{claim}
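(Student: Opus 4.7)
The plan is to reduce $\Delta\|\nperp\|$ to $\Delta\|\nperp\|^2$ via concavity of the square root, and then exploit the variational property of the projection onto the convex cone $\mathcal{K}$ to control the squared drift. First I would apply the inequality $\sqrt{a+b}-\sqrt{a}\le b/(2\sqrt{a})$, valid for $a>0$ and $a+b\ge 0$, to each transition term, giving
\begin{equation*}
\|(\bm{n}\pm\bm{e}^{(r)})_{\perp}\| - \|\nperp\| \le \frac{\|(\bm{n}\pm\bm{e}^{(r)})_{\perp}\|^2 - \|\nperp\|^2}{2\|\nperp\|}.
\end{equation*}
Since $\npara\in\mathcal{K}$, the distance-minimizing property of the projection then furnishes
\begin{equation*}
\|(\bm{n}\pm\bm{e}^{(r)})_{\perp}\|^2 \le \|\nperp\pm\bm{e}^{(r)}\|^2 = \|\nperp\|^2 \pm 2\langle\nperp,\bm{e}^{(r)}\rangle + \|\bm{e}^{(r)}\|^2,
\end{equation*}
where the crucial point is that the squared projection of $\bm{n}\pm\bm{e}^{(r)}$ is bounded against the squared distance to the \emph{fixed} point $\npara$, rather than against the moving projection of the shifted state.

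Next, summing with the transition rates from \eqref{eq-transition-rates-exp}, the sign-asymmetric cross terms assemble into $\langle\nperp,\bm{\lambda}-\bm{nx\mu}\rangle/\|\nperp\|$, while the $\|\bm{e}^{(r)}\|^2$ pieces give $\frac{1}{2\|\nperp\|}\sum_r(\lambda_r+n_rx_r\mu_r)\|\bm{e}^{(r)}\|^2$. To promote the latter sum into the state-independent constant $A_1/\|\nperp\|$, I would invoke the capacity constraint \eqref{eq-prop-fair-constr-capacity}: for each route $r$ and any link $\ell\in r$, feasibility of $\bm{x}$ forces $n_rx_r\le C_\ell\le C_{\max}$, while $\lambda_r\le\lambda_r^{(0)}$ is fixed, so the sum is bounded by a constant depending only on the system parameters. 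For the former term, I would substitute $\bm{\lambda}=(1-\epsilon)\bm{\lambda}^{(0)}$ to split
\begin{equation*}
\langle\nperp,\bm{\lambda}-\bm{nx\mu}\rangle = \langle\nperp,\bm{\lambda}^{(0)}-\bm{nx\mu}\rangle - \epsilon\langle\nperp,\bm{\lambda}^{(0)}\rangle,
\end{equation*}
and Cauchy--Schwarz on the second piece yields $-\epsilon\langle\nperp,\bm{\lambda}^{(0)}\rangle/\|\nperp\|\le\epsilon\|\bm{\lambda}^{(0)}\|$. Writing $\nperp=\bm{n}-\npara$ in the remaining piece then produces the stated inequality.

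The main obstacle I anticipate is purely bookkeeping: making sure $A_1$ is genuinely state- and $\epsilon$-independent relies on the uniform bound $n_rx_r\le C_{\max}$ coming from the capacity constraints, and the concavity step for the departure transitions requires $\|(\bm{n}-\bm{e}^{(r)})_\perp\|^2\ge 0$ (automatic) together with some care when $n_r=0$, handled by the fact that $n_rx_r\mu_r=0$ in that case. The substantive idea is that concavity of $\sqrt{\cdot}$ combined with the convex-cone projection inequality collapses the potentially awkward ``change-of-projection'' quantity into the clean linear expression $\langle\nperp,\bm{\lambda}-\bm{nx\mu}\rangle$, after which everything is arithmetic.
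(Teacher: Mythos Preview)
Your proposal is correct and leads to the same bound as the paper. The organization differs slightly: the paper writes $\Delta\|\nperp\|^2=\Delta\|\bm{n}\|^2-\Delta\|\npara\|^2$ via the Pythagorean identity and then lower-bounds $\Delta\|\npara\|^2$ using the polar-cone inequality $\langle\npara,(\bm{n}\pm\bm{e}^{(r)})_\perp\rangle\le 0$, whereas you bound each increment $\|(\bm{n}\pm\bm{e}^{(r)})_\perp\|^2-\|\nperp\|^2$ directly using the distance-minimizing property $\|(\bm{n}\pm\bm{e}^{(r)})_\perp\|\le\|\bm{n}\pm\bm{e}^{(r)}-\npara\|$. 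If you unwind the paper's argument, its bound on $\|(\bm{n}')_\perp\|^2-\|\nperp\|^2$ is exactly $2\langle\nperp,\bm{n}'-\bm{n}\rangle+\|\bm{n}'-\bm{n}\|^2=\|\bm{n}'-\npara\|^2-\|\nperp\|^2$, which is precisely your inequality; the polar-cone property and the projection-as-minimizer property are two faces of the same convexity fact. Your route is arguably the more transparent of the two, since it avoids passing through $\|\npara\|^2$ at all, but both arrive at the identical linear expression $\langle\nperp,\bm{\lambda}-\bm{nx\mu}\rangle/\|\nperp\|$ plus the $A_1/\|\nperp\|$ remainder, and the subsequent split $\bm{\lambda}=(1-\epsilon)\bm{\lambda}^{(0)}$ with Cauchy--Schwarz is handled the same way in both.
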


Next we analyze the terms in Claim~\ref{claim-nperp-drift-exp}, utilizing the two properties \ref{property-P1} and \ref{property-P2} of the inner product.
We first consider the term $\langle \bm{n},\bm{\lambda}^{(0)}-\bm{nx\mu}\rangle$ in Claim~\ref{claim-nperp-drift-exp}. Recall that $n_r$ can be written in the following form according to the proportionally fair policy: $n_r=\frac{n_rx_r}{\weight_r}\sum_{\ell:\ell\in r} p_{\ell}$,
where $\weight_r$ is the weight used and $p_{\ell}$ is the Lagrange multiplier of the capacity constraint of link $\ell$. We can further write this equality in a vector form using the vectors $\{\widehat{\bm{b}}^{(\ell)},\ell\in\mathcal{L}\}$:
\begin{equation*}
\bm{n}=\sum_{\ell} p_{\ell}\widehat{\bm{b}}^{(\ell)}.
\end{equation*}
By property \ref{property-P1} and complementary slackness, $p_{\ell}\langle \bm{b}^{(\ell)},\bm{\lambda}^{(0)}-\bm{nx\mu}\rangle = p_{\ell}U_{\ell}-p_{\ell}\delta_{\ell}=-p_{\ell}\delta_{\ell}$, where recall that $\delta_{\ell}=0$ for $\ell\in\satlinks$.
Thus
\begin{align*}
\langle \bm{n},\bm{\lambda}^{(0)}-\bm{nx\mu}\rangle&=\sum_{\ell} p_{\ell} \langle \widehat{\bm{b}}^{(\ell)}-\bm{b}^{(\ell)},\bm{\lambda}^{(0)}-\bm{nx\mu}\rangle+\sum_{\ell\in\mathcal{L}\setminus\satlinks}p_{\ell}\langle\bm{b}^{(\ell)},\bm{\lambda}^{(0)}-\bm{nx\mu}\rangle\\
&\le -\weight_{\min}\mu_{\min}\sum_{\ell}p_{\ell}\|\widehat{\bm{b}}^{(\ell)}-\bm{b}^{(\ell)}\|^2-\sum_{\ell\in\mathcal{L}\setminus\satlinks}p_{\ell}\delta_{\ell},
\end{align*}
where the inequality follows from \ref{property-P2}. Note that
\begin{align}
\|\nperp\|^2&\le \biggl\|\bm{n}-\sum_{\ell\in\satlinks}p_{\ell}\bm{b}^{(\ell)}\biggr\|^2\label{eq-perp-bound-1-exp}\\
&=\biggl\|\sum_{\ell} p_{\ell}\Bigl(\widehat{\bm{b}}^{(\ell)}-\bm{b}^{(\ell)}\Bigr)+\sum_{\ell\in\mathcal{L}\setminus\satlinks}p_{\ell}\bm{b}^{(\ell)}\biggr\|^2\nonumber\\
&\le\biggl(\sum_{\ell} p_{\ell}\|\widehat{\bm{b}}^{(\ell)}-\bm{b}^{(\ell)}\|+\sum_{\ell\in\mathcal{L}\setminus\satlinks}p_{\ell}\|\bm{b}^{(\ell)}\|\biggr)^2\nonumber\\
&\le\biggl(\sum_{\ell} p_{\ell}+\sum_{\ell\in\mathcal{L}\setminus\satlinks}p_{\ell}\biggr)\biggl(\sum_{\ell} p_{\ell}\|\widehat{\bm{b}}^{(\ell)}-\bm{b}^{(\ell)}\|^2+\sum_{\ell\in\mathcal{L}\setminus\satlinks}p_{\ell}\|\bm{b}^{(\ell)}\|^2\biggr)\label{eq-perp-bound-2-exp}\\
&\le\frac{2}{C_{\min}}\biggl(\sum_r\weight_rn_r\biggr)\biggl(\sum_{\ell} p_{\ell}\|\widehat{\bm{b}}^{(\ell)}-\bm{b}^{(\ell)}\|^2+\sum_{\ell\in\mathcal{L}\setminus\satlinks}p_{\ell}\|\bm{b}^{(\ell)}\|^2\biggr),\label{eq-perp-bound-3-exp}
\end{align}
where \eqref{eq-perp-bound-1-exp} follows from the definition of projection, \eqref{eq-perp-bound-2-exp} follows from Cauchy-Schwarz inequality, and \eqref{eq-perp-bound-3-exp} is due to the equality $\sum_{\ell} p_{\ell}C_{\ell}=\sum_r \weight_rn_r$ derived from the weighted proportionally fair sharing policy with $C_{\min}=\min_{\ell} C_{\ell}$ {with complementary slackness}. Let $A_2=\min\biggl\{\weight_{\min}\mu_{\min},\min_{\ell\in\mathcal{L}\setminus\satlinks}\frac{\delta_{\ell}}{\|\bm{b}^{(\ell)}\|^2}\biggr\}\frac{C_{\min}}{2}$.  Then $A_2$ is a positive constant independent of $\epsilon$ and
\begin{equation}\label{eq-bound-npart-exp}
\langle \bm{n},\bm{\lambda}^{(0)}-\bm{nx\mu}\rangle\le -A_2\frac{\|\nperp\|^2}{\sum_r\weight_rn_r}.
\end{equation}

We then consider the term $\langle\npara,\bm{\lambda}^{(0)}-\bm{nx\mu}\rangle$ in Claim~\ref{claim-nperp-drift-exp}. Since $\npara\in\mathcal{K}$, we can write it as
\begin{equation*}
\npara=\sum_{\ell\in\satlinks} \alpha_{\ell}\bm{b}^{(\ell)},\quad\text{for some $\alpha_{\ell}$'s where }\alpha_{\ell}\ge 0\text{ for each $\ell\in\satlinks$}.
\end{equation*}
Then
\begin{align}
\langle \npara,\bm{\lambda}^{(0)}-\bm{nx\mu}\rangle&=\sum_{\ell\in\satlinks} \alpha_{\ell}\langle \bm{b}^{(\ell)},\bm{\lambda}^{(0)}-\bm{nx\mu}\rangle\nonumber\\
&=\sum_{\ell\in\satlinks} \alpha_{\ell}U_{\ell}\label{eq-bound-nparapart-1-exp}\\
&\ge 0,\label{eq-bound-nparapart-exp}
\end{align}
where \eqref{eq-bound-nparapart-1-exp} follows from \ref{property-P1}.

Combining the bounds for the terms $\langle \bm{n},\bm{\lambda}^{(0)}-\bm{nx\mu}\rangle$ and $\langle\npara,\bm{\lambda}^{(0)}-\bm{nx\mu}\rangle$ in \eqref{eq-bound-npart-exp} and \eqref{eq-bound-nparapart-exp} yields:
\begin{align*}
\Delta \|\nperp\|&\le -A_2\frac{\|\nperp\|}{\sum_r\weight_r n_r} +\epsilon\|\bm{\lambda}^{(0)}\|+\frac{A_1}{\|\nperp\|}.
\end{align*}
We choose any constants $\xi_1>0,\xi_2>0$ such that $\xi_2-\xi_1=2$.
Then when
\begin{equation*}
\epsilon\le\epsilon_{\max}\triangleq \frac{1}{\|\bm{\lambda}^{(0)}\|^2},\quad\|\nperp\|\ge\frac{A_1}{\xi_1\sqrt{\epsilon}},\quad
\frac{\|\nperp\|}{\sum_r\weight_r n_r}\ge\frac{\xi_2\sqrt{\epsilon}}{A_2},
\end{equation*}
we have $\Delta \|\nperp\|\le -\xi_2\sqrt{\epsilon}+\sqrt{\epsilon}+\xi_1\sqrt{\epsilon}=-\sqrt{\epsilon}$, and when
\begin{equation*}
\epsilon\le\epsilon_{\max},\quad\|\nperp\|\ge\frac{A_1}{\xi_1\sqrt{\epsilon}},
\end{equation*}
we have $\Delta \|\nperp\|\le \sqrt{\epsilon}+\xi_1\sqrt{\epsilon}=(\xi_1+1)\sqrt{\epsilon}$, which are the drift bounds in Lemma~\ref{lem-nperp-drift-exp}.

Lastly, we prove the Claim~\ref{claim-nperp-drift-exp} at the beginning of this proof. We first bound $\Delta\|\nperp\|$ in the following form
\begin{align*}
\Delta \|\nperp\|&\le\frac{1}{2\|\nperp\|}\Delta \|\nperp\|^2=\frac{1}{2\|\nperp\|}(\Delta \|\bm{n}\|^2-\Delta\|\npara\|^2),
\end{align*}
where the inequality follows from the fact that $\|\nperp\|=\sqrt{\|\nperp\|^2}$ and that the square-root function is concave. Below we analyze the drifts $\Delta\|\bm{n}\|^2$ and $\Delta\|\npara\|^2$.
\begin{align}
\Delta \|\bm{n}\|^2 &= \sum_r\lambda_r\Bigl(\|\bm{n}+\bm{e}^{(r)}\|^2-\|\bm{n}\|^2\Bigr)\nonumber+\sum_rn_rx_r\mu_r\Bigl(\|\bm{n}-\bm{e}^{(r)}\|^2-\|\bm{n}\|^2\Bigr)\nonumber\\
&=2\sum_r(\lambda_r-n_rx_r\mu_r)\langle \bm{n},\bm{e}^{(r)}\rangle+\sum_r(\lambda_r+n_rx_r\mu_r)\|\bm{e}^{(r)}\|^2\nonumber\\
&\le 2\sum_r(\lambda_r-n_rx_r\mu_r)\langle \bm{n},\bm{e}^{(r)}\rangle+2A_1\nonumber\\
&=2\langle \bm{n},\bm{\lambda}-\bm{nx\mu}\rangle+2A_1,\label{eq-bound-n2-drift-exp}
\end{align}
where the first equality follows from the transition rates of the flow count process, and $A_1$ is a constant. We can derive a lower bound on $\Delta\|\npara\|^2$ in a similar way:
\begin{align}
\Delta \|\npara\|^2 &= \sum_r\lambda_r\Bigl(\|(\bm{n}+\bm{e}^{(r)})_{\shortparallel}\|^2-\|\npara\|^2\Bigr)+\sum_rn_rx_r\mu_r\Bigl(\|(\bm{n}-\bm{e}^{(r)})_{\shortparallel}\|^2-\|\npara\|^2\Bigr)\nonumber\\
&=\sum_r\lambda_r\cdot 2\langle \npara,(\bm{n}+\bm{e}^{(r)})_{\shortparallel}-\npara\rangle+\sum_rn_rx_r\mu_r\cdot 2\langle \npara,(\bm{n}-\bm{e}^{(r)})_{\shortparallel}-\npara\rangle\nonumber\\
&\mspace{23mu}+\sum_r\lambda_r\|(\bm{n}+\bm{e}^{(r)})_{\shortparallel}-\npara\|^2+\sum_rn_rx_r\mu_r\|(\bm{n}-\bm{e}^{(r)})_{\shortparallel}-\npara\|^2\nonumber\\
&\ge\sum_r\lambda_r\cdot 2\langle \npara,(\bm{n}+\bm{e}^{(r)})-\bm{n}\rangle+\sum_rn_rx_r\mu_r\cdot 2\langle \npara,(\bm{n}-\bm{e}^{(r)})-\bm{n}\rangle\label{eq-proj-lower}\\
&=2\langle \npara,\bm{\lambda}-\bm{nx\mu}\rangle,\nonumber
\end{align}
where \eqref{eq-proj-lower} follows from that $\langle \npara,\nperp\rangle = 0$ and $\langle \npara,(\bm{n}+\bm{e}^{(r)})_{\perp}\rangle\le 0$, $\langle \npara,(\bm{n}-\bm{e}^{(r)})_{\perp}\rangle\le 0$ since perpendicular components are in the polar cone of the cone $\mathcal{K}$. Combining the above bounds yields
\begin{align*}
\Delta\|\nperp\|
&\le\frac{1}{\|\nperp\|}\langle \bm{n}-\npara,\bm{\lambda}-\bm{nx\mu}\rangle+\frac{A_1}{\|\nperp\|}\\
&=\frac{1}{\|\nperp\|}\langle \bm{n}-\npara,\bm{\lambda}^{(0)}-\bm{nx\mu}\rangle-\epsilon\frac{\langle \nperp,\bm{\lambda}^{(0)}\rangle}{\|\nperp\|}+\frac{A_1}{\|\nperp\|}\\
&\le\frac{1}{\|\nperp\|}\langle \bm{n}-\npara,\bm{\lambda}^{(0)}-\bm{nx\mu}\rangle+\epsilon\|\bm{\lambda}^{(0)}\|+\frac{A_1}{\|\nperp\|},
\end{align*}
which completes the proof of the claim.
\Halmos

\subsubsection{Proof of Lemma~\ref{lem-distr-bound-exp} (Tail and Moment Bounds for \texorpdfstring{$\|\steadyperp\|$}{perp}).}\label{subsec-proof-lem-distr-bound-exp}
We first introduce some notation that will be needed in the proof. We again fix an $\epsilon>0$ and omit the superscript $^{(\epsilon)}$ for conciseness. Recall that $q_{\bm{n}\bm{n}'}$ is the transition rate from state $\bm{n}$ to $\bm{n}'$ of the flow count process $(\bm{N}(t)\colon t\ge 0)$. Let
\begin{gather*}
\overline{q}= \sup_{\bm{n}}(-q_{\bm{nn}}),\quad
\nu=\max_r\frac{\weight_r}{\lambda_r^{(0)}},\quad\zeta=2\left(\max_{\ell}C_{\ell}\right)\cdot\left(\max_r\mu_r\right),\quad
\alpha=\frac{\zeta \nu }{\zeta \nu +\sqrt{\epsilon}}.
\end{gather*}
{Here $\overline{q}$ is the maximum total transition rate out of a state.}
It can be verified that $\overline{q}<+\infty$, and that $\nu$ and $\zeta$ are positive constants such that
\begin{gather*}
\sup_{\bm{n},\bm{n}'\colon q_{\bm{n}\bm{n}'}>0} \bigl|\|\bm{n}'\|-\|\bm{n}\|\bigr|\le \nu,\quad \sup_{\bm{n},\bm{n}'\colon q_{\bm{n}\bm{n}'}>0} \bigl|\|\nperp'\|-\|\nperp\|\bigr|\le \nu,\\
\sup_{\bm{n}}\sum_{\bm{n}'\colon \|\bm{n}\|<\|\bm{n}'\|}q_{\bm{n}\bm{n}'}\le\zeta,\quad \sup_{\bm{n}}\mspace{-9mu}\sum_{\bm{n}'\colon \|\nperp\|<\|\nperp'\|}\mspace{-9mu}q_{\bm{n}\bm{n}'}\le \zeta.
\end{gather*}
Note that for $\alpha$, the constant $a$ in \eqref{eq-alpha-beta-exp} of Lemma~\ref{lem-distr-bound-exp} equals to $\zeta \nu $.

\begin{sloppypar}
We need the Lemma~\ref{LEM-SUM-NR-EXP} below, which gives tail bounds for $\sum_r\weight_r \overline{N}_r$.
{Note that these tail bounds are similar to the tail bounds in \cite{ShaTsiZho_14} for the maximum flow count among different routes.
}
\iftoggle{complete}{%
The proof of Lemma~\ref{LEM-SUM-NR-EXP} is given in Appendix~\ref{app-proof-lem-sum-nr-exp}, which analyzes the drift $\Delta\|\bm{n}\|$ and applies Lemma~\ref{LEM-TAIL} (a continuous-time version of Theorem~1 in \citet{BerGamTsi_01}).
}{%
Lemma~\ref{LEM-SUM-NR-EXP} can be proven using a continuous-time version of Theorem~1 in \citet{BerGamTsi_01}.  The proof is given in our technical report \cite{WanMagSri_20} due to space limitations.
}%
Note that the definition of $\beta$ below corresponds to $b=2\zeta \nu/A_3$ for the constant $b$ in \eqref{eq-alpha-beta-exp} of Lemma~\ref{lem-distr-bound-exp}.
\end{sloppypar}
\begin{lemma}\label{LEM-SUM-NR-EXP}
For any nonnegative $\epsilon$ with $\epsilon\le 1$, the distribution of $\sum_r \weight_r\overline{N}_r$ has the following exponential tail bounds: for any nonnegative integer $j$,
\begin{align*}
\Pr\biggl(\sum_r\weight_r\overline{N}_r>\frac{2A_1A_4}{\epsilon A_3}+2\nu A_4j\biggr)\le\beta^{j+1},
\end{align*}
where $A_3, A_4$ are positive constants, and $\beta=\frac{\zeta \nu}{\zeta \nu+\epsilon A_3/2}<1$.
\end{lemma}

\begin{sloppypar}
Note that  $\expect[\|\steadyperp\|]<+\infty$ since $\expect[\sum_r \weight_r\overline{N}_r]<+\infty$ by Lemma~\ref{LEM-SUM-NR-EXP}. Let $A$ denote $\frac{A_1}{\xi_1\sqrt{\epsilon}}$. Fix a $c\ge A-\nu $. Let $\hat{V}(\bm{n})=\max\{c,\|\nperp\|\}$. Let $\overline{\pi}$ denote the distribution of $\steady$. Note that $\expect[\hat{V}(\steady)]<+\infty$ and $\overline{q}<+\infty$.  
{
Then since $\overline{\pi}$ is the stationary distribution, we have
\begin{equation*}
\sum_{\bm{n}}\overline{\pi}(\bm{n})\sum_{\bm{n}'\colon \bm{n}'\neq\bm{n}}Q_{\bm{n}\bm{n}'}\hat{V}(\bm{n}')=\sum_{\bm{n}'}\hat{V}(\bm{n}')\sum_{\bm{n}\colon \bm{n}\neq\bm{n}'}\overline{\pi}(\bm{n})Q_{\bm{n}\bm{n}'}
=-\sum_{\bm{n}'}\hat{V}(\bm{n}')\overline{\pi}(\bm{n}')Q_{\bm{n}'\bm{n}'},
\end{equation*}
and
\begin{equation*}
\sum_{\bm{n}}\overline{\pi}(\bm{n})\sum_{\bm{n}'\colon \bm{n}'\neq\bm{n}}Q_{\bm{n}\bm{n}'}\hat{V}(\bm{n})
=\sum_{\bm{n}}\overline{\pi}(\bm{n})\hat{V}(\bm{n})\sum_{\bm{n}'\colon \bm{n}'\neq\bm{n}}Q_{\bm{n}\bm{n}'}
=-\sum_{\bm{n}}\hat{V}(\bm{n})\overline{\pi}(\bm{n})Q_{\bm{n}\bm{n}},
\end{equation*}
which imply that $\sum_{\bm{n}}\overline{\pi}(\bm{n})\sum_{\bm{n}'\colon \bm{n}'\neq\bm{n}}Q_{\bm{n}\bm{n}'}(\hat{V}(\bm{n}')-\hat{V}(\bm{n}))=0$.
}
Then similar to the proof of the exponential-type tail bounds in \cite{BerGamTsi_01}, we have
\begin{align}
0&=\sum_{\bm{n}}\overline{\pi}(\bm{n})\sum_{\bm{n}'\colon \bm{n}'\neq\bm{n}}Q_{\bm{n}\bm{n}'}(\hat{V}(\bm{n}')-\hat{V}(\bm{n}))\nonumber\\
&=\sum_{\bm{n}\colon \|\nperp\|\le c-\nu }\overline{\pi}(\bm{n})\sum_{\bm{n}'\colon \bm{n}'\neq\bm{n}}Q_{\bm{n}\bm{n}'}(\hat{V}(\bm{n}')-\hat{V}(\bm{n}))\label{eq-term1-exp}\\
&\mspace{23mu}+\mspace{-9mu}\sum_{\bm{n}\colon c-\nu <\|\nperp\|\le c+\nu }\mspace{-12mu}\overline{\pi}(\bm{n})\sum_{\bm{n}'\colon \bm{n}'\neq\bm{n}}Q_{\bm{n}\bm{n}'}(\hat{V}(\bm{n}')-\hat{V}(\bm{n}))\label{eq-term2-exp}\\
&\mspace{23mu}+\sum_{\bm{n}\colon\|\nperp\|>c+\nu }\overline{\pi}(\bm{n})\sum_{\bm{n}'\colon \bm{n}'\neq\bm{n}}Q_{\bm{n}\bm{n}'}(\hat{V}(\bm{n}')-\hat{V}(\bm{n})).\label{eq-term3-exp}
\end{align}
\end{sloppypar}
\begin{enumerate}[leftmargin=0em,itemindent=3.2em,label=(\roman*)]
\item The first summand \eqref{eq-term1-exp} is $0$ since when $\|\nperp\|\le c-\nu $, $\hat{V}(\bm{n}')=\hat{V}(\bm{n})=c$ for $\bm{n}'$ with $Q_{\bm{n}\bm{n}'}>0$.
	
\item Consider the second summand \eqref{eq-term2-exp}. We can check that for any two states $\bm{n}$ and $\bm{n}'$, either $0\le \hat{V}(\bm{n}')-\hat{V}(\bm{n})\le \|\nperp'\|-\|\nperp\|$, or $\|\nperp'\|-\|\nperp\|\le \hat{V}(\bm{n}')-\hat{V}(\bm{n})\le 0$, regardless of the relation between $c$ and $\|\nperp'\|$, $\|\nperp\|$. Then,
\begin{align*}
\sum_{\bm{n}'\colon \bm{n}'\neq \bm{n}}Q_{\bm{n}\bm{n}'}(\hat{V}(\bm{n}')-\hat{V}(\bm{n}))
&=\sum_{\bm{n}'\colon \hat{V}(\bm{n}')>\hat{V}(\bm{n})}Q_{\bm{n}\bm{n}'}(\hat{V}(\bm{n}')-\hat{V}(\bm{n}))+\mspace{-18mu}\sum_{\bm{n}'\colon \hat{V}(\bm{n}')\le\hat{V}(\bm{n})}Q_{\bm{n}\bm{n}'}(\hat{V}(\bm{n}')-\hat{V}(\bm{n}))\\
&\le\sum_{\bm{n}'\colon \hat{V}(\bm{n}')>\hat{V}(\bm{n})}Q_{\bm{n}\bm{n}'}(\hat{V}(\bm{n}')-\hat{V}(\bm{n}))\\
&\le \sum_{\bm{n}'\colon \|\nperp'\|>\|\nperp\|}Q_{\bm{n}\bm{n}'}\nu \\
&\le \zeta \nu .
\end{align*}
Thus the second summand satisfies
\begin{align*}
\sum_{\bm{n}\colon c-\nu <\|\nperp\|\le c+\nu }\overline{\pi}(\bm{n})\sum_{\bm{n}'\colon \bm{n}'\neq\bm{n}}Q_{\bm{n}\bm{n}'}(\hat{V}(\bm{n}')-\hat{V}(\bm{n}))\le \zeta \nu \Bigl(\Pr(\|\steadyperp\|>c-\nu )-\Pr(\|\steadyperp\|> c+\nu )\Bigr).
\end{align*}

\item Consider the third summand \eqref{eq-term3-exp}. Note that this is the part that differentiates our tail bounds from the bounds in \cite{BerGamTsi_01}.  Here we need to utilize the multiplicative condition in \eqref{eq-conditions-drift-exp}.
When $\|\nperp\|>c+\nu $, $\hat{V}(\bm{n})=\|\nperp\|$ and $\hat{V}(\bm{n}')=\|\nperp'\|$ for $\bm{n}'$ with $Q_{\bm{n}\bm{n}'}>0$. Therefore,
\begin{align}
&\mspace{23mu}\sum_{\bm{n}\colon\|\nperp\|>c+\nu }\overline{\pi}(\bm{n})\sum_{\bm{n}'\colon \bm{n}'\neq\bm{n}}Q_{\bm{n}\bm{n}'}(\hat{V}(\bm{n}')-\hat{V}(\bm{n}))\nonumber\\
&=\sum_{\substack{\bm{n}\colon\|\nperp\|>c+\nu \\\frac{\|\nperp\|}{\sum_r\weight_r n_r}\ge\frac{\xi_2\sqrt{\epsilon}}{A_2}}}
\mspace{-18mu}\overline{\pi}(\bm{n})\Delta \|\nperp\|+\mspace{-24mu}\sum_{\substack{\bm{n}\colon\|\nperp\|>c+\nu \\\frac{\|\nperp\|}{\sum_r\weight_r n_r}<\frac{\xi_2\sqrt{\epsilon}}{A_2}}}\mspace{-18mu}\overline{\pi}(\bm{n})\Delta \|\nperp\|\nonumber\\
&\le-\sqrt{\epsilon}\Pr\biggl(\|\steadyperp\|>c+\nu ,\frac{\|\steadyperp\|}{\sum_r \weight_r\overline{N}_r}\ge\frac{\xi_2\sqrt{\epsilon}}{A_2}\biggr)+(\xi_1+1)\sqrt{\epsilon}\Pr\biggl(\|\steadyperp\|>c+\nu ,\frac{\|\steadyperp\|}{\sum_r \weight_r\overline{N}_r}<\frac{\xi_2\sqrt{\epsilon}}{A_2}\biggr)\label{eq-ineq-1}\\
&=-\sqrt{\epsilon}\Pr(\|\steadyperp\|>c+\nu )+\xi_2\sqrt{\epsilon}\Pr\biggl(\|\steadyperp\|>c+\nu ,\frac{\|\steadyperp\|}{\sum_r \weight_r\overline{N}_r}<\frac{\xi_2\sqrt{\epsilon}}{A_2}\biggr)\label{eq-eq-2}\\
&\le -\sqrt{\epsilon}\Pr(\|\steadyperp\|>c+\nu )+\xi_2\sqrt{\epsilon}\Pr\Biggl(\sum_r \weight_r\overline{N}_r>\frac{(c+\nu )A_2}{\xi_2\sqrt{\epsilon}}\Biggr).\nonumber
\end{align}
The inequality \eqref{eq-ineq-1} follows from the drift bounds given in Lemma~\ref{lem-nperp-drift-exp}, and \eqref{eq-eq-2} follows from the choice of $\xi_1$ and $\xi_2$ in the proof of Lemma~\ref{lem-nperp-drift-exp}.
\end{enumerate}

Combining the three summands we have
\begin{align*}
\Pr(\|\steadyperp\|>c+\nu )
\le\frac{\zeta \nu }{\zeta \nu +\sqrt{\epsilon}}\Pr(\|\steadyperp\|>c-\nu)+\frac{\xi_2\sqrt{\epsilon}}{\zeta \nu +\sqrt{\epsilon}}\Pr\Biggl(\sum_r \weight_r\overline{N}_r>\frac{(c+\nu )A_2}{\xi_2\sqrt{\epsilon}}\Biggr).
\end{align*}
Recall that we let $\alpha$ denote $\frac{\zeta \nu }{\zeta \nu +\sqrt{\epsilon}}$. Let $c = A+(2j-1)\nu $ for a nonnegative integer $j$. Then
\begin{align*}
\Pr(\|\steadyperp\|>A+2\nu j)
\le \alpha\Pr(\|\steadyperp\|>A+2\nu (j-1))+\xi_2(1-\alpha)\Pr\Biggl(\sum_r \weight_r\overline{N}_r>\frac{(A+2\nu j)A_2}{\xi_2\sqrt{\epsilon}}\Biggr).
\end{align*}

\iftoggle{complete}{%
Now we use Lemma~\ref{LEM-SUM-NR-EXP} to bound the last probability above. Recall that we have chosen $\xi_1$ and $\xi_2$ in the proof of Lemma~\ref{lem-nperp-drift-exp} such that $\xi_2-\xi_1=2$. We can further require that $\xi_1\xi_2=\frac{A_2A_3}{2A_4}$.
It can be verified that such constants $\xi_1$ and $\xi_2$ are well-defined since $A_2, A_3$ and $A_4$ are all positive.
Also recall that $A=\frac{A_1}{\xi_1\sqrt{\epsilon}}$. Define a constant $\theta = \frac{\nu A_2}{\xi_2\nu A_4}$.
Then
\begin{align*}
&\mspace{23mu}\Pr\Biggl(\sum_r \weight_r\overline{N}_r>\frac{(A+2\nu j)A_2}{\xi_2\sqrt{\epsilon}}\Biggr)\\
&=\Pr\Biggl(\sum_r \weight_r\overline{N}_r>\frac{A_1A_2}{\xi_1\xi_2\epsilon}+\frac{2\nu A_2}{\xi_2\sqrt{\epsilon}}j\Biggr)\\
&\le \Pr\Biggl(\sum_r \weight_r\overline{N}_r>\frac{2A_1A_4}{\epsilon A_3}+2\nu A_4\biggl\lfloor \frac{j\theta}{\sqrt{\epsilon}}\biggr\rfloor\Biggr)\\
&\le \beta^{\lfloor j\theta/\sqrt{\epsilon}\rfloor+1}\\
&\le \Bigl(\beta^{\theta/\sqrt{\epsilon}}\Bigr)^j.
\end{align*}
Therefore,
\begin{align*}
\Pr(\|\steadyperp\|>A+2\nu j)\le \alpha\Pr(\|\steadyperp\|>A+2\nu (j-1))+\xi_2(1-\alpha)\Bigl(\beta^{\theta/\sqrt{\epsilon}}\Bigr)^j.
\end{align*}
Using this inequality for $k-1,k-2,\cdots$ yields
\begin{align*}
\Pr\biggl(\|\steadyperp\|>\frac{A_1}{\xi_1\sqrt{\epsilon}}+2\nu j\biggr)\le\alpha^{j+1}+\xi_2(1-\alpha)\sum_{i=0}^j\alpha^i\Bigl(\beta^{\theta/\sqrt{\epsilon}}\Bigr)^{j-i}.
\end{align*}
This completes the proof of the tail bound in Lemma~\ref{lem-distr-bound-exp}.

Now we bound the moments of $\|\steadyperp\|$ using the tail bound above. Recall that $A=\frac{A_1}{\xi_1\sqrt{\epsilon}}$. For any nonnegative integer $m$,
\begin{align}
\expect\Bigl[\|\steadyperp\|^m\Bigr]
&=\int_{0}^{\infty}mt^{m-1}\Pr(\|\steadyperp\|>t)dt\nonumber\\
&=\int_{0}^{A}mt^{m-1}\Pr(\|\steadyperp\|>t)dt+\sum_{j=0}^{\infty}\int_{A+2\nu j}^{A+2\nu (j+1)}mt^{m-1}\Pr(\|\steadyperp\|>t)dt\nonumber\\
&\le A^m+\sum_{j=0}^{\infty}\Biggl(\alpha^{j+1}+\xi_2(1-\alpha)\sum_{i=0}^j\alpha^i\Bigl(\beta^{\theta/\sqrt{\epsilon}}\Bigr)^{j-i}\Biggr)\cdot\Bigl((A+2\nu (k+1))^m-(A+2\nu j)^m\Bigr)\label{eq-moment-sum-1}\\
&\le A^m+\sum_{j=0}^{\infty}\Biggl(\alpha^{j+1}+\xi_2(1-\alpha)\sum_{i=0}^j\alpha^i\Bigl(\beta^{\theta/\sqrt{\epsilon}}\Bigr)^{j-i}\Biggr)\cdot\Bigl(\nu 2^{m-1}A^{m-1}+\nu 2^{m-1}(k+1)^{m-1}\Bigr),\label{eq-moment-sum}
\end{align}
where \eqref{eq-moment-sum-1} follows from the tail bound, and \eqref{eq-moment-sum} is due to the convexity of the function $f(x)=x^m$. We bound the summands in \eqref{eq-moment-sum} as follows, where we have used the inequality $\sum_{i=1}^{\infty}a^ii^m\le m!\frac{1}{(1-a)^{m+1}}$ for any $0<a<1$:
\begin{equation}
\sum_{j=0}^{\infty}\alpha^{j+1}\nu 2^{m-1}A^{m-1}=\nu 2^{m-1}\alpha\frac{A^{m-1}}{1-\alpha},
\end{equation}
\begin{equation}
\begin{split}
\sum_{j=0}^{\infty}\alpha^{j+1}\nu 2^{m-1}(k+1)^{m-1}&=\nu 2^{m-1}\sum_{j=1}^{\infty}\alpha^jk^{m-1}\\
&\le\nu 2^{m-1}(m-1)!\frac{1}{(1-\alpha)^{m}},
\end{split}
\end{equation}
\begin{equation}
\begin{split}
&\mspace{23mu}\sum_{j=0}^{\infty}\xi_2(1-\alpha)\sum_{i=0}^j\alpha^i\Bigl(\beta^{\theta/\sqrt{\epsilon}}\Bigr)^{j-i}\nu 2^{m-1}A^{m-1}\\
&=\xi_2(1-\alpha)\nu 2^{m-1}A^{m-1}\sum_{i=0}^{\infty}\alpha^i\sum_{j=i}^{\infty}\Bigl(\beta^{\theta/\sqrt{\epsilon}}\Bigr)^{j-i}\\
&=\xi_2\nu 2^{m-1}\frac{A^{m-1}}{1-\beta^{\theta/\sqrt{\epsilon}}},
\end{split}
\end{equation}
\begin{equation}
\begin{split}
&\mspace{23mu}\sum_{j=0}^{\infty}\xi_2(1-\alpha)\sum_{i=0}^j\alpha^i\Bigl(\beta^{\theta/\sqrt{\epsilon}}\Bigr)^{j-i}\nu 2^{m-1}(k+1)^{m-1}\\
&=\xi_2(1-\alpha)\nu 2^{m-1}\sum_{i=0}^{\infty}\alpha^i\sum_{j=i}^{\infty}\Bigl(\beta^{\theta/\sqrt{\epsilon}}\Bigr)^{j-i}(k+1)^{m-1}\\
&\le\xi_2(1-\alpha)\nu 2^{2m-1}\sum_{i=0}^{\infty}\alpha^i\sum_{j=i}^{\infty}\Bigl(\beta^{\theta/\sqrt{\epsilon}}\Bigr)^{j-i}\cdot\Bigl((k-i)^{m-1}+(i+1)^{m-1}\Bigr)\\
&\le\xi_2\nu 2^{2m-1}(m-1)!\Biggl(\frac{1}{(1-\beta^{\theta/\sqrt{\epsilon}})^{m}}+\frac{1}{\alpha}\frac{1}{(1-\alpha)^{m-1}(1-\beta^{\theta/\sqrt{\epsilon}})}
\Biggr).
\end{split}
\end{equation}

By the definitions of $\alpha,\beta$ and $A$, $A=O\left(\frac{1}{\sqrt{\epsilon}}\right),\frac{1}{1-\alpha}=O\left(\frac{1}{\sqrt{\epsilon}}\right)$.
We also have that
\begin{align*}
1-\beta^{\theta/\sqrt{\epsilon}}&= 1-\biggl(1-\frac{\epsilon}{2\zeta \nu /A_3+\epsilon}\biggr)^{\theta/\sqrt{\epsilon}}\\
&=1-e^{\frac{\theta}{\sqrt{\epsilon}}\ln\Bigl(1-\frac{\epsilon}{2\zeta \nu /A_3+\epsilon}\Bigr)}\\
&=\frac{\theta\sqrt{\epsilon}}{2\zeta \nu /A_3+\epsilon}+o(\sqrt{\epsilon}).
\end{align*}
Thus $1/(1-\beta^{\theta/\sqrt{\epsilon}})=O(1/\sqrt{\epsilon})$.

Now we can see that each summand in \eqref{eq-moment-sum} is $O\bigl((1/{\sqrt{\epsilon}})^m\bigr)$, which implies that $\expect\Bigl[\bigl\|\steadyperp^{(\epsilon)}\bigr\|^m\Bigr]=O\bigl((1/{\sqrt{\epsilon}})^m\bigr)$.
\Halmos
}{%
Now we insert the tail bounds in Lemma~\ref{LEM-SUM-NR-EXP} and further require that $\xi_1\xi_2=\frac{A_2A_3}{2A_4}$.
It can be verified that such constants $\xi_1$ and $\xi_2$ are well-defined since $A_2, A_3$ and $A_4$ are all positive.   Letting $\theta = \frac{\nu A_2}{\xi_2\nu A_4}$ yields the tail bounds \eqref{eq-nperp-tail-bounds-exp} in Lemma~\ref{lem-distr-bound-exp}.  Then the moment bounds \eqref{eq-nperp-moment-bounds-exp} in Lemma~\ref{lem-distr-bound-exp} can be obtained using arguments similar to those in Lemma~3 in \cite{MagSri_16}.  The details are given in our technical report \cite{WanMagSri_20} due to space limitations. \Halmos
}

\subsection{Proof of Theorem~\ref{THM-BOUNDS-EXP} (Bounds on Flow Counts).}\label{subsec-proof-thm-bounds-exp}
Recall that by the arguments in Section~\ref{subsec-bounds-exp}, setting $\expect[\Delta \|\steadyparaspace\|^2]=0$ yields
\begin{equation}\label{eq-drift-equal-1}
\begin{split}
\epsilon\expect\left[\sum_r \weight_r \overline{N}_r\right]&=\epsilon\expect[\langle \steadyperp^s,\bm{\lambda}^{(0)}\rangle]+\expect[\langle\steadyparaspace,\bm{\lambda}^{(0)}-\steady\bm{x\mu}\rangle]+\frac{1}{2}\expect[B_1(\steady)].
\end{split}
\end{equation}
We analyze the terms on the right-hand-side.
\begin{enumerate}[leftmargin=0em,itemindent=3em,label=({\roman*})]
\item We first consider the term $\epsilon\expect[\langle \steadyperp^s,\bm{\lambda}^{(0)}\rangle]$. By Cauchy-Schwarz inequality,
\begin{align*}
\left|\epsilon\expect[\langle \steadyperp^s,\bm{\lambda}^{(0)}\rangle]\right|\le \epsilon\|\steadyperp^s\|\cdot\|\bm{\lambda}^{(0)}\|\le \epsilon\|\steadyperp\|\cdot\|\bm{\lambda}^{(0)}\|=O(\sqrt{\epsilon}).
\end{align*}

\item We next consider the term $\expect[\langle\steadyparaspace,\bm{\lambda}^{(0)}-\steady\bm{x\mu}\rangle]$. Recall that we have shown in Section~\ref{subsec-bounds-exp} that $\langle\steadyparaspace,\bm{\lambda}^{(0)}-\steady\bm{x\mu}\rangle=\sum_{\ell\in\satlinks}\alpha_{\ell}^sU_{\ell}$.
By Lemma~\ref{lem-continuity2-exp} and H\"{o}lder's inequality, for each $\ell\in\satlinks$,
\begin{align*}
\expect[|\alpha_{\ell}^sU_{\ell}|]=\expect[|(\alpha_{\ell}^s-p_{\ell})U_{\ell}|]\le B_2\Biggl(\expect\Biggl[\|\steadyperp\|^{\frac{\tau_1}{2}}\biggl(\sum_r\weight_r \overline{N}_r\biggr)^{\frac{\tau_1}{2}}\Biggr]\Biggr)^{\frac{1}{\tau_1}}\Bigl(\expect[U_{\ell}^{\tau_2}]\Bigr)^{\frac{1}{\tau_2}},
\end{align*}
where we pick $\tau_1$ and $\tau_2$ such that $\tau_1$ is an even integer with $\tau_1>4$ and $\frac{1}{\tau_1}+\frac{1}{\tau_2}=1$. The proof of Lemma~\ref{lem-continuity2-exp} is given at the end of this section. Using Cauchy-Schwarz inequality we have
\begin{align*}
\Biggl(\expect\Biggl[\|\steadyperp\|^{\frac{\tau_1}{2}}\biggl(\sum_r\weight_r \overline{N}_r\biggr)^{\frac{\tau_1}{2}}\Biggr]\Biggr)^{\frac{1}{\tau_1}}
&\le(\expect[\|\steadyperp\|^{\tau_1}])^{\frac{1}{2\tau_1}}
\Biggl(\expect\Biggl[\biggl(\sum_r\weight_r \overline{N}_r\biggr)^{\tau_1}\Biggr]\Biggr)^{\frac{1}{2\tau_1}}\\
&=O(\epsilon^{-\frac{3}{4}}),
\end{align*}
where again the last equality follows from the state-space collapse result in Theorem~\ref{THM-SSC-EXP} and the bound on $\expect[\sum_r\weight_r\overline{N}_r]$ indicated by Lemma~\ref{LEM-SUM-NR-EXP}. Next we bound $\expect[U_{\ell}^{\tau_2}]$. We can prove that $\expect[U_{\ell}]=\epsilon C_{\ell}$ by considering the Lyapunov function $w_{\ell}(\bm{n})=\langle \bm{b}^{(\ell)},\bm{n}\rangle$. Its drift is $\Delta w_{\ell}(\bm{n})=-\epsilon\langle \bm{b}^{(\ell)},\bm{\lambda}^{(0)}\rangle+\langle \bm{b}^{(\ell)},\bm{\lambda}^{(0)}-\bm{nx\mu}\rangle=-\epsilon C_{\ell}+U_{\ell}$.
Since in the steady state $\expect[\Delta w_{\ell}(\steady)]=0$, we have $\expect[U_{\ell}]=\epsilon C_{\ell}$.
Since $0\le U_{\ell}\le C_{\ell}$, there holds $\Bigl(\expect\Bigl[U_{\ell}^{\tau_2}\Bigr]\Bigr)^{\frac{1}{\tau_2}}\le\Bigl(\expect\Bigl[U_{\ell}\cdot C_{\ell}^{\tau_2-1}\Bigr]\Bigr)^{\frac{1}{\tau_2}}=\epsilon^{\frac{1}{\tau_2}} C_{\ell}$.
Combining these bounds we have $\expect[|\alpha_{\ell}^sU_{\ell}|]=O(\epsilon^{\frac{1}{4}-\frac{1}{\tau_1}})$, and thus
\begin{align*}
\expect[|\langle \steadyparaspace,\bm{\lambda}^{(0)}-\steady\bm{x\mu}\rangle|]&=\expect\Biggl[\biggl|\sum_{\ell\in\satlinks}\alpha_{\ell}^sU_{\ell}\biggr|\Biggr]=O(\epsilon^{\frac{1}{4}-\frac{1}{\tau_1}}).
\end{align*}

\item Lastly, we bound the last term $\expect[B_1(\steady)]/2$. Recall that we have shown in Section~\ref{subsec-bounds-exp} that
$\frac{1}{2}\expect[B_1(\steady)]=\sum_r\lambda_r\|(\bm{e}^{(r)})_{\shortparallel}^s\|^2=(1-\epsilon)\sum_r\lambda^{(0)}_r\langle(\bm{e}^{(r)})_{\shortparallel}^s,(\bm{e}^{(r)})_{\shortparallel}^s\rangle$.
Let $M=\textrm{diag}(\bm{\weight}/\bm{\lambda}^{(0)})$ denote the diagonal matrix whose diagonal consists of entries of the vector $\bm{\weight}/\bm{\lambda}^{(0)}\triangleq(\weight_r/\lambda^{(0)}_r)_{r\in\mathcal{R}}$. Then the inner product can be written in a matrix form: $\langle \bm{y},\bm{z}\rangle =\bm{y}^T M\bm{z}$ for any $\bm{y},\bm{z}\in\mathbb{R}^R$.
Then
\begin{align*}
\frac{1}{2}\expect[B_1(\steady)]&\le(1-\epsilon)\left(\max_r\weight_r\right)\sum_r(\lambda^{(0)}_r/\weight_r)\langle(\bm{e}^{(r)})_{\shortparallel}^s,(\bm{e}^{(r)})_{\shortparallel}^s\rangle\\
&=(1-\epsilon)\left(\max_r\weight_r\right)\sum_r(\lambda^{(0)}_r/\weight_r)\langle \bm{e}^{(r)},(\bm{e}^{(r)})_{\shortparallel}^s\rangle\\
&=(1-\epsilon)\left(\max_r\weight_r\right)\sum_r\langle M^{-1}\bm{e}^{(r)},(\bm{e}^{(r)})_{\shortparallel}^s\rangle\\
&=(1-\epsilon)\left(\max_r\weight_r\right)\sum_r(\bm{e}^{(r)})^T(\bm{e}^{(r)})_{\shortparallel}^s.
\end{align*}
We now express $(\bm{e}^{(r)})_{\shortparallel}^s$ in a matrix form. Let $B_s$ denote the matrix whose rows are $(\bm{b}^{(\ell)})^T$'s with $\ell\in\satlinks$. Then $(\bm{e}^{(r)})_{\shortparallel}^s=B_s^T(B_sMB_s^T)^{-1}B_sM\bm{e}^{(r)}$. Thus
\begin{align}
\frac{1}{2}\expect[B_1(\steady)]&\le(1-\epsilon)\left(\max_r\weight_r\right)\sum_r (\bm{e}^{(r)})^TB_s^T(B_sMB_s^T)^{-1}B_sM\bm{e}^{(r)}\nonumber\\
&=(1-\epsilon)\left(\max_r\weight_r\right)\textrm{tr}(B_s^T(B_sMB_s^T)^{-1}B_sM)\nonumber\\
&=(1-\epsilon)\left(\max_r\weight_r\right)\textrm{tr}(B_sMB_s^T(B_sMB_s^T)^{-1})\label{eq-trace-exchangable}\\
&=(1-\epsilon)\left(\max_r\weight_r\right)L_s,\nonumber
\end{align}
\end{enumerate}
where the notation $\textrm{tr}(\cdot)$ denotes the trace of a matrix, and \eqref{eq-trace-exchangable} follows from that $\textrm{tr}(XY)=\textrm{tr}(YX)$ for any matrices $X$ and $Y$, and recall that $L_s$ is the number of critically loaded links in the network. Similarly, we can show that
\begin{equation*}
\frac{1}{2}\expect[B_1(\steady)]\ge(1-\epsilon)\left(\min_r\weight_r\right)L_s.
\end{equation*}

Combining (i), (ii) and (iii) for the terms in \eqref{eq-drift-equal-1} yields
\begin{equation*}
\frac{L_s\cdot \min_r \weight_r}{\epsilon}+o\biggl(\frac{1}{\epsilon}\biggr)\le \expect\left[\sum_r \weight_r \overline{N}_r^{(\epsilon)}\right]
\le\frac{L_s\cdot \max_r \weight_r}{\epsilon}+o\biggl(\frac{1}{\epsilon}\biggr).
\end{equation*}
The proof will be completed after we prove Lemma~\ref{lem-continuity2-exp} below.

\proof{Proof of Lemma~\ref{lem-continuity2-exp}.}
We first give the following bounds on the rate allocation and the Lagrange multipliers.
\begin{claim}\label{claim-continuity-instant-exp}
There exist positive constants $B_4$ and $B_5$ such that for any state $\bm{n}$,
\begin{equation*}
\sum_{\ell} p_{\ell}\|\widehat{\bm{b}}^{(\ell)}-\bm{b}^{(\ell)}\|^2\le B_4\|\nperp\|,\text{ and }\sum_{\ell:\ell\in r,\ell\in\mathcal{L}\setminus\satlinks}p_{\ell}\le B_5\|\nperp\|,\forall r.
\end{equation*}
\end{claim}
\proof{Proof of Claim~\ref{claim-continuity-instant-exp}.}
Consider the term $\langle \bm{n},\bm{\lambda}^{(0)}-\bm{nx\mu}\rangle$.  By the proof of Lemma~\ref{lem-nperp-drift-exp} in Section~\ref{subsec-proof-lem-nperp-drift-exp}, we know that
\begin{align}
\langle \bm{n},\bm{\lambda}^{(0)}-\bm{nx\mu}\rangle \le -\weight_{\min}\mu_{\min}\sum_{\ell}p_{\ell}\|\widehat{\bm{b}}^{(\ell)}-\bm{b}^{(\ell)}\|^2-\sum_{\ell\in\mathcal{L}\setminus\satlinks}p_{\ell}\delta_{\ell}.\label{eq-left-exp}
\end{align}
On the other hand, by the duality principle for minimum norm problems \citep{Lue_69},
\begin{equation*}
\|\nperp\|=\sup_{\bm{y}\in\mathcal{K}^{\circ}\colon \|\bm{y}\|\le 1}\langle \bm{n},\bm{y}\rangle,
\end{equation*}
where $\mathcal{K}^{\circ}$ is the polar cone of the cone $\mathcal{K}$. Let $\bm{y}=\frac{-\bm{\lambda}^{(0)}+\bm{nx\mu}}{\|\bm{\lambda}^{(0)}-\bm{nx\mu}\|}$.
We can easily verify that $\bm{y}\in\mathcal{K}^{\circ}$ since $\langle \bm{b}^{(\ell)},-\bm{\lambda}^{(0)}+\bm{nx\mu}\rangle = -U_{\ell}\le 0$ for all $\ell\in\satlinks$ by \ref{property-P1}. Thus
\begin{align}
-\langle\bm{n}, \bm{\lambda}^{(0)}-\bm{nx\mu}\rangle=\|\bm{\lambda}^{(0)}-\bm{nx\mu}\|\langle \bm{n},\bm{y}\rangle
\le \|\bm{\lambda}^{(0)}-\bm{nx\mu}\|\|\nperp\|.\label{eq-duality-exp}
\end{align}
Combining \eqref{eq-left-exp} and \eqref{eq-duality-exp} gives
\begin{equation*}
\weight_{\min}\mu_{\min}\sum_{\ell}p_{\ell}\|\widehat{\bm{b}}^{(\ell)}-\bm{b}^{(\ell)}\|^2+\sum_{\ell\in\mathcal{L}\setminus\satlinks}p_{\ell}\delta_{\ell}\le \|\bm{\lambda}^{(0)}-\bm{nx\mu}\|\|\nperp\|.
\end{equation*}

Since each summand on the left-hand-side above is nonnegative, we have that
\begin{equation*}
\sum_{\ell}p_{\ell}\|\widehat{\bm{b}}^{(\ell)}-\bm{b}^{(\ell)}\|^2\le\frac{\|\bm{\lambda}^{(0)}-\bm{nx\mu}\|}{\weight_{\min}\mu_{\min}}\|\nperp\|,
\end{equation*}
and for each $r$,
\begin{equation*}
\sum_{\ell:\ell\in r,\ell\in\mathcal{L}\setminus\satlinks}p_{\ell}\le\frac{1}{\delta_{\min}}\sum_{\ell:\ell\in r,\ell\in\mathcal{L}\setminus\satlinks}p_{\ell}\delta_{\ell}\le\frac{\|\bm{\lambda}^{(0)}-\bm{nx\mu}\|}{\delta_{\min}}\|\nperp\|,
\end{equation*}
where $\delta_{\min}>0$ is defined as $\delta_{\min}=\min\{\delta_{\ell}\colon \ell\in\mathcal{L}\setminus \satlinks\}$.
Note that each entry of the rate allocation $\bm{nx}$ can be bounded using a constant independent of $\bm{n}$ and $\epsilon$ (e.g., total bandwidth capacity in the network), there exist positive constant $B_3$ and $B_4$ such that $\frac{\|\bm{\lambda}^{(0)}-\bm{nx\mu}\|}{\weight_{\min}\mu_{\min}}\le B_4$ and $\frac{\|\bm{\lambda}^{(0)}-\bm{nx\mu}\|}{\delta_{\min}}\le B_5$. Therefore,
\begin{equation*}
\sum_{\ell} p_{\ell}\|\widehat{\bm{b}}^{(\ell)}-\bm{b}^{(\ell)}\|^2\le B_4\|\nperp\|,\text{ and }\sum_{\ell:\ell\in r,\ell\in\mathcal{L}\setminus\satlinks}p_{\ell}\le B_5\|\nperp\|,\forall r,
\end{equation*}
which completes the proof of the claim. \Halmos
\endproof

With Claim~\ref{claim-continuity-instant-exp}, it is easy to verify \eqref{eq-bound-p-exp} in Lemma~\ref{lem-continuity2-exp} by setting $B_3=B_5R$ where $R$ is the total number of routes. So next we focus on bounding $|\alpha_{\ell}^s-p_{\ell}|$ for $\ell\in\satlinks$ and thus proving \eqref{eq-diff-alpha-p-exp} in Lemma~\ref{lem-continuity2-exp}. We first write each $(n_{\perp}^s)_r=n_r-(n_{\shortparallel}^s)_r$ in the following form
\begin{align*}
(n_{\perp}^s)_r&=n_r-(n_{\shortparallel}^s)_r\\
&=\frac{n_rx_r}{\weight_r}\sum_{\ell:\ell\in r}p_{\ell}-\frac{\rho_r^{(0)}}{\weight_r}\sum_{\ell:\ell\in r,\ell\in\satlinks}\alpha_{\ell}^s\\
&=\frac{\rho_r^{(0)}}{\weight_r}\sum_{\ell:\ell\in r,\ell\in\satlinks}(p_{\ell}-\alpha_{\ell}^s)+\frac{n_rx_r-\rho_r^{(0)}}{\weight_r}\sum_{\ell:\ell\in r}p_{\ell}+\frac{\rho_r^{(0)}}{\weight_r}\sum_{\ell:\ell\in r,\ell\in\mathcal{L}\setminus\satlinks}p_{\ell}.
\end{align*}
Then for each route $r$,
\begin{align}
\Biggl|\sum_{\ell:\ell\in r,\ell\in\satlinks}(\alpha_{\ell}^s-p_{\ell})\Biggr|\le\frac{1}{\rho_r^{(0)}}\Bigl|n_rx_r-\rho_r^{(0)}\Bigr|\sum_{\ell:\ell\in r}p_{\ell}+B_5\|\nperp\|+\Biggl|\frac{\weight_r(n_{\perp}^s)_r}{\rho_r^{(0)}}\Biggr|,\label{eq-diff-bound-exp}
\end{align}
where we have used the second bound in Claim~\ref{claim-continuity-instant-exp}.  Due to the equivalence of norms, there exists a constant $B_6$ such that $\sum_{r:\ell\in r}(n_rx_r-\rho_r^{(0)})^2\le B_6\|\widehat{\bm{b}}^{(\ell)}-\bm{b}^{(\ell)}\|^2$.
Then
\begin{align}
\sum_{r}(n_rx_r-\rho_r^{(0)})^2\sum_{\ell:\ell\in r}p_{\ell}\le B_6\sum_{\ell} p_{\ell}\|\widehat{\bm{b}}^{(\ell)}-\bm{b}^{(\ell)}\|^2\le B_4B_6\|\nperp\|,\nonumber
\end{align}
where we have used the first bound in Claim~\ref{claim-continuity-instant-exp}.  Since each summand on the left hand side is nonnegative, we have that for each $r$,
\begin{equation*}
(n_rx_r-\rho_r^{(0)})^2\sum_{\ell:\ell\in r}p_{\ell}\le B_4B_6\|\nperp\|.
\end{equation*}
Inserting this to \eqref{eq-diff-bound-exp} we get
\begin{align}
\Biggl|\sum_{\ell:\ell\in r,\ell\in\satlinks}(\alpha_{\ell}^s-p_{\ell})\Biggr|
&\le\frac{\sqrt{B_4B_6}}{\rho_r^{(0)}}\|\nperp\|^{1/2}\Biggl(\sum_{\ell:\ell\in r}p_{\ell}\Biggr)^{1/2}+B_5\|\nperp\|+\Biggl|\frac{\weight_r(n_{\perp}^s)_r}{\rho_r^{(0)}}\Biggr|\nonumber\\
&\le\frac{\sqrt{B_4B_6}}{\rho_r^{(0)}\sqrt{C_{\min}}}\|\nperp\|^{1/2}\Biggl(\sum_r\weight_r n_r\Biggr)^{1/2}+B_5\|\nperp\|+\Biggl|\frac{\weight_r(n_{\perp}^s)_r}{\rho_r^{(0)}}\Biggr|,\label{eq-diff-bound-1-exp}
\end{align}
where \eqref{eq-diff-bound-1-exp} follows from that $\sum_{\ell:\ell\in r}p_{\ell}\le \sum_{\ell} p_{\ell}$ and $\sum_{\ell} p_{\ell}C_{\ell} = \sum_r\weight_r n_r$. Since $\|\nperpspace\|\le \|\nperp\|\le \|\bm{n}\|$, \eqref{eq-diff-bound-1-exp} indicates that there exists a constant $B_7$ such that
\begin{equation*}
\Biggl|\sum_{\ell:\ell\in r,\ell\in\satlinks}(\alpha_{\ell}^s-p_{\ell})\Biggr|\le B_7\|\nperp\|^{1/2}\Biggl(\sum_r\weight_r n_r\Biggr)^{1/2}.
\end{equation*}

Next we bound $|\alpha_{\ell}^s-p_{\ell}|$ for each link $\ell\in\satlinks$. Recall that $H$ is the routing matrix defined as $H=(h_{\ell r})_{\ell\in\mathcal{L},r\in\mathcal{R}}$ with $h_{\ell r}=1$ if $\ell\in r$, and $h_{\ell r}=0$ otherwise. Let $H_s$ be the submatrix of $H$ such that its rows correspond to links in $\satlinks$. Since we assume that $H_s$ has full row rank, $H_sH_s^T$ is invertible. Let $\overline{h}^{(\ell)T}$ be the $\ell$th row of $(H_sH_s^T)^{-1}H_s$ for $\ell\in\satlinks$. Then $\overline{h}^{(\ell)T}H_s^T=\overline{e}^{(\ell)}$, where $\overline{e}^{(\ell)}$ is a $L_s\times 1$ vector with the $\ell$th entry being $1$ and other entries being $0$. Thus
\begin{align*}
\alpha_{\ell}^s-p_{\ell}&=\overline{h}^{(\ell)T}H_s^T(\bm{\alpha}^s-\bm{p})=\sum_r\overline{h}^{(\ell)}_r\sum_{\ell':\ell'\in r,\ell'\in\satlinks}(\alpha_{\ell'}^s-p_{\ell'}).
\end{align*}
Therefore, we have the following bound for each $\ell\in\satlinks$
\begin{align*}
|\alpha_{\ell}^s-p_{\ell}|\le\sum_r\Bigl|\overline{h}^{(\ell)}_r\Bigr|\Biggl|\sum_{\ell':\ell'\in r,\ell'\in\satlinks}(\alpha_{\ell'}^s-p_{\ell'})\Biggr|\le\biggl(\sum_r\Bigl|\overline{h}^{(\ell)}_r\Bigr|\biggr)B_7\|\nperp\|^{1/2}\Biggl(\sum_r\weight_r n_r\Biggr)^{1/2}.
\end{align*}
Note that $\overline{h}^{(\ell)}_r$'s are constants independent of $\epsilon$, i.e., $B_2=(\sum_r|\overline{h}^{(\ell)}_r|)B_7$ is a constant. Thus,
\begin{equation*}
|\alpha_{\ell}^s-p_{\ell}|\le B_2\|\nperp\|^{1/2}\Biggl(\sum_r\weight_r n_r\Biggr)^{1/2},\quad \forall \ell \in\satlinks,
\end{equation*}
which completes the proof of Lemma~\ref{lem-continuity2-exp}. \Halmos
\endproof

\section{Phase-type file size distributions.}\label{sec-phase-type}
In this section we generalize our results for weighted proportionally fair policies to a class of phase-type file size distributions. The assumptions on the arrival processes and the heavy-traffic regime are the same as in previous sections. To address the phase-type file size distributions, our main idea is to construct a proper inner product in the state space that allows us to establish results by applying the drift method.  We will reuse many symbols in our notation, but redefine them for the setting of phase-type distributions.

\paragraph{A class of phase-type distributions.}
We assume that for each route~$r$, the file size distribution belongs to a special class of phase-type distributions specified below, denoted by $\mathcal{D}$.  It can be proved that any probability distribution on $[0,\infty)$ can be approximated arbitrarily closely by a distribution in $\mathcal{D}$, i.e., the class $\mathcal{D}$ is \emph{dense} in the space of probability distributions on $[0,\infty)$.  A phase-type distribution can be specified by the absorption time of a Markov chain.  For each route $r$:
\begin{itemize}[leftmargin=2em]
\item The Markov chain has $\Phase_r+1$ states, where state $0$ is an absorbing state and states $1,2,\dots,\Phase_r$ are transient states (or \emph{phases}).

\item The initial distribution is $(\pi_0,\bm{\pi}_r)$ with $\pi_0=0$, where $\bm{\pi}_r$ is a $1\times \Phase_r$ (row) vector.

\item The transition rate matrix is
	\begin{equation}
	\begin{bmatrix}
	0 & \bm{0}\\
	\bm{s}_r & \transition_r
	\end{bmatrix},
	\end{equation}
	where $\bm{s}_r$ is a $\Phase_r\times 1$ vector and $\transition_r$ is a $\Phase_r \times \Phase_r$ matrix.
\end{itemize}
We assume that $\transition_r$ is a block-diagonal matrix in the following form:
\begin{equation}\label{eq-transition_r-block}
\transition_r=
\begin{bmatrix}
\transition_r^{(1)} & 0 & \cdots & 0\\
0 & \transition_r^{(2)} & \cdots & 0\\
\vdots & \vdots & \ddots & \vdots \\
0 & 0 & \cdots & \transition_r^{(B_r)}
\end{bmatrix},
\end{equation}
where each matrix $\transition_r^{(b)}$ with $1\le b\le B_r$ has the following form:
\begin{equation}
\transition_r^{(b)}=
\begin{bmatrix}
-\mu_r^{(b)} & \mu_r^{(b)} & 0 & \cdots & 0\\
0 & -\mu_r^{(b)} & \mu_r^{(b)} & \cdots & 0\\
\vdots & \ddots & \ddots & \ddots & \vdots \\
0 & \cdots & 0 & -\mu_r^{(b)} & \mu_r^{(b)}\\
0 & \cdots & 0 & 0 & -\mu_r^{(b)}
\end{bmatrix}
\end{equation}
with distinct $\mu_r^{(b)}$'s for different $b$'s.  The initial distribution $\bm{\pi}_r$ has the following structure: $\pi_{r,\phase}>0$ for each phase $\phase$ that corresponds to the first row of some $\transition_r^{(b)}$, and $\pi_{r,\phase}\ge 0$ for other phases.  Note that if $\pi_{r,\phase}=0$ for those other phases, then the phase-type distribution is a finite mixture of Erlang distributions \cite{Asm_03}.
{
Then the fact that the class $\mathcal{D}$ we specified is dense has an intuitive explanation that is similar to the explanation for the set of finite mixtures of Erlangs to be dense.  In particular, it is known that any point mass can be approximated by an Erlang distribution by increasing the number of stages of the Erlang while keeping its mean unchanged; then since any distribution can be approximated arbitrarily closely by a distribution with a finite support, it can be approximated arbitrarily closely by a finite mixture of Erlang distributions.
}

For the phase-type distribution with parameter $(\bm{\pi}_r,S_r)$, the expected flow size is given by
\begin{equation}
\frac{1}{\mu_r}=\bm{\pi}_r(-\transition_r)^{-1}\bm{1}_{\Phase_r\times 1}.
\end{equation}
Then the load on route $r$ is $\rho_r=\lambda_r/\mu_r$.

\paragraph{State representation.}
With phase-type file size distributions, a Markovian representation of the flow dynamics consists of the flow counts for \emph{every phase} on every route. Let $N_{r,\phase}(t)$ denote the number of flows present on route $r$ that are in phase $\phase$ at time~$t$. Let
\begin{gather}
\bm{N}_r(t)=[N_{r,1}(t),\dots,N_{r,\Phase_r}(t)]^T,\\
\bm{N}(t)=[(\bm{N}_1(t))^T,\dots,(\bm{N}_R(t))^T]^T,
\end{gather}
i.e., $\bm{N}_r(t)$ is a vector stacking together the $N_{r,\phase}(t)$'s, and $\bm{N}(t)$ is a vector concatenating the $\bm{N}_r(t)$'s. Note that we have redefined the notation $\bm{N}(t)$ in this section. We still keep the notation $N_r(t)$, which is the total number of flows on route $r$ at time $t$, i.e., $N_r(t)=\sum_{\phase\in[\Phase_r]}N_{r,\phase}(t)$. A weighted proportionally fair policy allocates rates according to the $N_r(t)$'s. Now the flow count vector $\bm{N}(t)$ is a $\Phase$-dimensional vector with $\Phase=\sum_{r}\Phase_r$, and the flow count process $(\bm{N}(t)\colon t\ge 0)$ is a Markov chain.

Below we give the state transition rates for this Markov chain.  Let $\bm{e}^{(r,\phase)}\in\mathbb{R}^{\Phase}$ be a vector in the state space whose entry that corresponds to phase $\phase$ of route $r$ is equal to $1$ and other entries are equal to~$0$.  Then the transition rate $q_{\bm{n}\bm{n}'}$ from a state $\bm{n}$ to a state $\bm{n}'\neq\bm{n}$ is as follows:
\begin{equation}\label{eq-transition-rates}
q_{\bm{n}\bm{n}'}=
\begin{cases}
\lambda_r\pi_{r,\phase} &\text{if }\bm{n}'=\bm{n}+\bm{e}^{(r,\phase)},\\
n_{r,\phase_1}x_r(\transition_r)_{\phase_1,\phase_2}&\text{if }\bm{n}'=\bm{n}-\bm{e}^{(r,\phase_1)}+\bm{e}^{(r,\phase_2)}, \phase_1\neq\phase_2,\\
n_{r,\phase}x_r\sum
_{\phase'}(-\transition_{r})_{\phase,\phase'}&\text{if }\bm{n}'=\bm{n}-\bm{e}^{(r,\phase)},n_{r,\phase}>0,\\
0&\text{otherwise}.
\end{cases}
\end{equation}

It can be easily verified that the flow count process $(\bm{N}(t)\colon t\ge 0)$ is irreducible and aperiodic. 
We can further show that $(\bm{N}(t)\colon t\ge 0)$ is positive recurrent by establishing drift conditions in Lemma~\ref{LEM-SUM-NRK}.
Therefore, the flow count process $(\bm{N}(t)\colon t\ge 0)$ has a unique stationary distribution.

\paragraph{An $L$-dimensional cone.}
\begin{sloppypar}
We will show that the flow count process $(\bm{N}(t)\colon t\ge 0)$ still collapses to an $L_s$-dimensional cone. This cone, still denoted by $\mathcal{K}$, is also finitely generated by a set of vectors $\{\bm{b}^{(\ell)},\ell\in\satlinks\}\subseteq\{\bm{b}^{(\ell)},\ell\in\mathcal{L}\}$, i.e.,
\begin{equation}\label{eq-cone}
\mathcal{K}=\biggl\{\bm{y}\in\mathbb{R}^{\Phase}\colon \bm{y}=\sum_{\ell\in\satlinks} \alpha_{\ell}\bm{b}^{(\ell)},\alpha_{\ell}\ge 0\text{ for all }\ell\in\satlinks\biggr\},
\end{equation}
where the $\bm{b}^{(\ell)}$'s are now defined in a more complicated manner below based on the loads. For each route $r$, we define the load vector as follows:
\begin{equation}
\bm{\rho}^{(\epsilon)}_r=\lambda_r^{(\epsilon)}(-\transition_r)^{-T}\bm{\pi}_r^T.
\end{equation}
The $\phase$-th entry of this vector, $\rho^{(\epsilon)}_{r,\phase}$, can be thought of as the \emph{load of phase $\phase$ on route $r$}, since the $\phase$-th entry of $(-\transition_r)^{-T}\bm{\pi}_r^T$ is the expected time a flow spends in phase $\phase$ if given a unit of bandwidth. We can verify that the load on route $r$, $\rho_r^{(\epsilon)}$, is the sum of loads of phases on this route, i.e., $\rho_r^{(\epsilon)}=\sum_{\phase\in[\Phase_r]}\rho^{(\epsilon)}_{r,\phase}$. Let $\bm{\rho}^{(\epsilon)}$ be a vector concatenating the $\bm{\rho}_r^{(\epsilon)}$'s, i.e.,
\begin{equation}\label{eq-load-vector}
\bm{\rho}^{(\epsilon)}=[(\bm{\rho}_1^{(\epsilon)})^T,\dots,(\bm{\rho}_R^{(\epsilon)})^T]^T.
\end{equation}
Now we construct a $\Phase\times 1$ vector $\bm{b}^{(\ell)}$ from $\bm{\rho}^{(0)}$ for each link $\ell$: we index the entries of $\bm{b}^{(\ell)}$ using the route and phase $(r,\phase)$, and let
\begin{equation}
b^{(\ell)}_{r,\phase}=\frac{\rho_{r,\phase}^{(0)}\mathbbm{1}_{\{\ell\in r\}}}{\weight_r},
\end{equation}
where $\mathbbm{1}_{\{\ell\in r\}}$ is equal to $1$ when route $r$ uses link $\ell$ and equal to $0$ otherwise, and recall that $\weight_r$'s are the weights in the weighted proportionally fair policy. That is, we keep the entries of $\bm{\rho}^{(0)}$ that correspond to phases of the routes that use link $\ell$, and set other entries to zero. We give a concrete example below to explain this structure of $\bm{b}^{(\ell)}$'s.
\end{sloppypar}
\begin{example}
Consider the network illustrated in Figure~\ref{fig-example}.
\begin{figure}[h]
\vspace{-0.1in}
\centering
\includegraphics[scale=0.25]{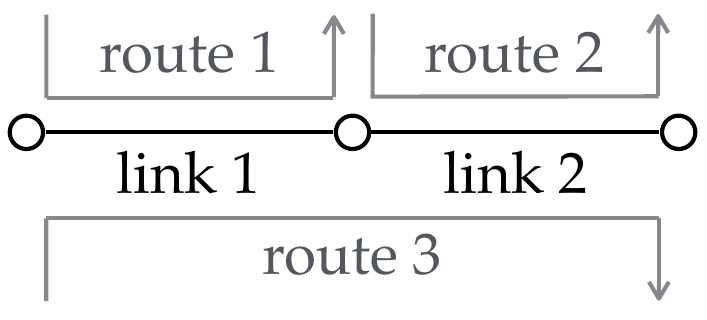}
\caption{Example with two links and three routes.}
\label{fig-example}
\end{figure}
The routing matrix is
$H =
\begin{bmatrix}
1&0&1\\
0&1&1
\end{bmatrix}$.
Suppose the service time distributions of routes $1,2$ and $3$ have $\Phase_1=2, \Phase_2=1$ and $\Phase_3=3$ phases, respectively. Then
\begin{equation*}
\bm{\rho}^{(0)}=
\begin{bmatrix}
\rho^{(0)}_{1,1} & \rho^{(0)}_{1,2} & \rho^{(0)}_{2,1} & \rho^{(0)}_{3,1} & \rho^{(0)}_{3,2} & \rho^{(0)}_{3,3}
\end{bmatrix},
\end{equation*}
and
\begin{equation*}
\bm{b}^{(1)}=\begin{bmatrix}
\frac{\rho^{(0)}_{1,1}}{\weight_1} & \frac{\rho^{(0)}_{1,2}}{\weight_1} & 0 & \frac{\rho^{(0)}_{3,1}}{\weight_3} & \frac{\rho^{(0)}_{3,2}}{\weight_3} & \frac{\rho^{(0)}_{3,3}}{\weight_3}
\end{bmatrix}^T,\quad
\bm{b}^{(2)}=\begin{bmatrix}
0 & 0 & \frac{\rho^{(0)}_{2,1}}{\weight_2} & \frac{\rho^{(0)}_{3,1}}{\weight_3} & \frac{\rho^{(0)}_{3,2}}{\weight_3} & \frac{\rho^{(0)}_{3,3}}{\weight_3}
\end{bmatrix}^T.
\end{equation*}
\end{example}

\paragraph{An inner product.}
The construction of a proper inner product in $\mathbb{R}^{\Phase}$ is a key step in our approach. We consider the following weighted inner product defined by a block-diagonal matrix $M$:
\begin{equation}\label{eq-inner}
\langle \bm{y},\bm{z}\rangle = \bm{y}^TM\bm{z}, \quad \bm{y},\bm{z}\in\mathbb{R}^{\Phase},
\end{equation}
where
\begin{equation}\label{eq-M}
M=
\begin{bmatrix}
M_1 & 0 & \cdots & 0\\
0 & M_2 & \cdots & 0\\
\vdots & \vdots & \ddots & \vdots\\
0 & 0 & \cdots & M_R
\end{bmatrix},
\end{equation}
and each $M_r$ is a $\Phase_r\times\Phase_r$ matrix defined as follows:
\begin{equation}\label{eq-Mr}
M_r=\frac{\weight_r}{\lambda_r^{(0)}}\int_0^{+\infty}\frac{\exp(\transition_r\sigma)\bm{1}_{\Phase_r\times 1}\bm{1}_{\Phase_r\times 1}^T\exp(\transition_r^T\sigma)}{\bm{\pi}_r(-\transition_r)^{-1}\exp(\transition_r\sigma)\bm{1}_{\Phase_r\times 1}}d\sigma.
\end{equation}
Then the induced norm is defined as:
\begin{equation*}
\|\bm{y}\|=\sqrt{\langle \bm{y},\bm{y}\rangle} =\sqrt{ \bm{y}^TM\bm{y}},\quad \bm{y}\in\mathbb{R}^{\Phase}.
\end{equation*}
Again, although this inner product is not the usual dot product, we still keep the notation $\langle \cdot,\cdot\rangle$ and $\|\cdot\|$ for conciseness.  
\iftoggle{complete}{%
We justify the validity of the constructed inner product Appendix~\ref{app-proof-lem-inner-well-defined}.
}{%
We can verify that the constructed inner product is well-defined.  The detailed proof is given in our technical report \cite{WanMagSri_20} due to space limitations.
}%
In the proof, we make an interesting connection to the Popov-Belevitch-Hautus (PBH) test, which is a linear algebraic result well-known to control theorists.

Note that the projection we work with in this section is then defined with respect to this constructed inner product.  For a state $\bm{n}$, which now is a $\Phase$-dimensional vector, we still use $\npara$ to denote its projection onto the cone $\mathcal{K}$ and $\nperp$ to denote its perpendicular component.

If the phase-type distribution only has one phase, i.e., it is an exponential distribution, then this inner product is reduced to the inner product defined in \eqref{eq-inner-exp} for exponential file size distributions. For exponential file size distributions, properties \ref{property-P1} and \ref{property-P2} of the inner product are what the proofs need. For phase-type file size distributions, we will identify two properties of the inner product in Section~\ref{subsec-inner}.  These two properties, indexed as \ref{cond-C1} and \ref{cond-C2}, play an analogous role as \ref{property-P1} and \ref{property-P2} in the proofs.

We remark that the Lyapunov function $\|\bm{n}\|^2$ under this norm, where $\bm{n}$ is a flow count vector, is equivalent to the Lyapunov function in \cite{PagTanFer_12} with a parameter there (denoted by $K$ there) chosen to be $1$. But this particular choice of the parameter is not allowed in \cite{PagTanFer_12}.  To study weighted delay in switches, \cite{LuMagSqu_17} also considers a weighted norm defined by a diagonal matrix.

\subsection{Results.}
The following two theorems, Theorems~\ref{THM-SSC} and \ref{THM-BOUNDS}, generalize the results in Theorems~\ref{THM-SSC-EXP} and \ref{THM-BOUNDS-EXP}, respectively. The proofs are given in Sections~\ref{subsec-ssc} and \ref{subsec-bounds}, respectively.

\begin{theorem}[State-Space Collapse]\label{THM-SSC}
Consider a sequence of bandwidth sharing networks under a weighted proportionally fair policy, indexed by a parameter $\epsilon$ with $0<\epsilon<1$.  The file size distributions belong to a class of phase-type distributions, $\mathcal{D}$.  The arrival rate vector in the $\epsilon$-th system satisfies that $\bm{\lambda}^{(\epsilon)}=(1-\epsilon)\bm{\lambda}^{(0)}$ for some $\bm{\lambda}^{(0)}$ such that a set of $L_s$ links are critically loaded. 
Let $\steady^{(\epsilon)}$ denote a random vector whose distribution is the stationary distribution of the flow count process $(\bm{N}^{(\epsilon)}(t)\colon t\ge 0)$, and $\|\steadyperp^{(\epsilon)}\|$ denote its distance to the $L_s$-dimensional cone $\mathcal{K}$ defined in \eqref{eq-cone} under the constructed inner product. Then in the heavy-traffic regime where $\epsilon\to 0^+$, the $m$-th moment of $\|\steadyperp^{(\epsilon)}\|$ for any nonnegative integer $m$ can be bounded as follows:
\begin{equation*}
\expect\Bigl[\Bigl\|\steadyperp^{(\epsilon)}\Bigr\|^m\Bigr]=O\Biggl(\biggl(\frac{1}{\sqrt\epsilon}\biggr)^m\Biggr).
\end{equation*}
\end{theorem}

\begin{theorem}[Bounds on Flow Counts]\label{THM-BOUNDS}
Consider a sequence of bandwidth sharing networks indexed by a parameter $\epsilon$ with $0<\epsilon<1$.  The file size distributions belong to a class of phase-type distributions, $\mathcal{D}$.  The arrival rate vector in the $\epsilon$-th system satisfies that $\bm{\lambda}^{(\epsilon)}=(1-\epsilon)\bm{\lambda}^{(0)}$ for some $\bm{\lambda}^{(0)}$ such that a set of $L_s$ links are critically loaded. Suppose that a weighted proportionally fair policy with weights $\weight_1,\dots,\weight_R$ is used.  Let $\steady^{(\epsilon)}$ denote a random vector whose distribution is the stationary distribution of the flow count process $(\bm{N}^{(\epsilon)}(t)\colon t\ge 0)$. Then in the heavy-traffic regime where $\epsilon\to 0^+$,
\begin{equation*}
\frac{L_s\cdot \min_r \weight_r}{\epsilon}+o\biggl(\frac{1}{\epsilon}\biggr)\le \expect\left[\sum_r \weight_r \overline{N}_r^{(\epsilon)}\right]
\le\frac{L_s\cdot \max_r \weight_r}{\epsilon}+o\biggl(\frac{1}{\epsilon}\biggr).
\end{equation*}
\end{theorem}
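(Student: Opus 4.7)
The plan is to mirror the drift-method argument of Section~\ref{subsec-bounds-exp}, now with the Lyapunov function $V(\bm{n}) = \|\nparaspace\|^2$ interpreted via the $M$-inner product of \eqref{eq-inner}--\eqref{eq-Mr}, with $\nparaspace$ the projection onto the subspace spanned by $\{\bm{b}^{(\ell)} : \ell \in \satlinks\}$. First I would establish a phase-type analog of Lemma~\ref{LEM-SUM-NR-EXP} giving positive recurrence and finite-moment control on $\sum_r \weight_r \overline{N}_r^{(\epsilon)}$; this both validates $\expect[\Delta V(\steady^{(\epsilon)})] = 0$ in steady state and is used throughout the subsequent estimates.

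Next I would expand the drift using the transition rates in \eqref{eq-transition-rates}. Linearity and self-adjointness of subspace projection give $\Delta V(\bm{n}) = 2\langle \nparaspace, \bm{d}(\bm{n})\rangle + R(\bm{n})$, where $\bm{d}(\bm{n})$ is the conditional mean drift of $\bm{N}(t)$ in $\mathbb{R}^{\Phase}$ and $R(\bm{n}) = \sum_{\bm{n}' \neq \bm{n}} q_{\bm{n}\bm{n}'}\|(\bm{n}' - \bm{n})_{\shortparallel}^s\|^2$. The block structure of $M$ should yield an identity of the form $\langle \bm{n}, \bm{\lambda}\bm{\pi}^{(0)}\rangle = \sum_r \weight_r n_r$, where $\bm{\lambda}\bm{\pi}^{(0)}$ is the $\Phase$-dimensional phase-wise arrival-rate vector. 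After separating $(1-\epsilon)\bm{\lambda}\bm{\pi}^{(0)}$ from the arrival piece of $\bm{d}(\bm{n})$, setting $\expect[\Delta V(\steady)] = 0$ then leads to a three-term identity analogous to \eqref{eq-drift-equal-0}.

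The first two terms can be handled exactly as in Section~\ref{subsec-bounds-exp}. The perpendicular contribution is $O(\sqrt{\epsilon})$ by Cauchy--Schwarz together with Theorem~\ref{THM-SSC}. The term $\expect[\langle \steadyparaspace, \bm{\lambda}\bm{\pi}^{(0)} - (\text{transition part})\rangle]$ expands as $\sum_{\ell \in \satlinks} \alpha_\ell^s \langle \bm{b}^{(\ell)}, \cdot\rangle$, which by the phase-type analog of property~\ref{property-P1} (property C1 promised in Section~\ref{subsec-inner}) collapses to $\sum_{\ell \in \satlinks}\alpha_\ell^s U_\ell$. Combining $p_\ell U_\ell = 0$ with a phase-type version of Lemma~\ref{lem-continuity2-exp}---bounding $|\alpha_\ell^s - p_\ell|$ by $\|\nperp\|^{1/2}\bigl(\sum_r \weight_r n_r\bigr)^{1/2}$, where the Popov--Belevitch--Hautus test provides the needed invertibility under the full-rank assumption---then yields an $o(1/\epsilon)$ bound via Theorem~\ref{THM-SSC} and H\"older's inequality.

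The main obstacle will be the quadratic noise term $\tfrac{1}{2}\expect[R(\steady)]$. In the exponential case only arrivals and departures contribute, and $\expect[\overline{N}_r x_r] = \rho_r$ reduces $R$ to $\sum_r \lambda_r \|(\bm{e}^{(r)})_{\shortparallel}^s\|^2$, after which the trace identity \eqref{eq-trace-exchangable} gives the sharp $L_s \cdot \weight_r$ bounds. With phase-type distributions, $R(\bm{n})$ picks up additional terms from phase jumps $-\bm{e}^{(r,\phase_1)} + \bm{e}^{(r,\phase_2)}$ and from absorption transitions, and the integral formula for $M_r$ in \eqref{eq-Mr} is chosen precisely so that, after averaging against the stationary distribution and using the vector balance $\lambda_r \bm{\pi}_r^T + \expect[x_r \transition_r^T \bm{N}_r] = \bm{0}$, the phase-jump pieces combine with the arrival and absorption pieces into a clean quadratic form intrinsic to the subspace. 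The target is to show this reduces to an expression of the form $(1-\epsilon)\sum_r \lambda_r^{(0)}\langle(\bm{e}^{(r,\phase)})_{\shortparallel}^s,(\bm{e}^{(r,\phase)})_{\shortparallel}^s\rangle$-type, bracketed by $\min_r\weight_r$ and $\max_r\weight_r$, and then apply the trace identity $\textrm{tr}\bigl(B_s^T(B_sMB_s^T)^{-1}B_sM\bigr) = L_s$ as in \eqref{eq-trace-exchangable} to obtain $(1-\epsilon)L_s\min_r\weight_r \le \tfrac{1}{2}\expect[R(\steady)] \le (1-\epsilon)L_s\max_r\weight_r$. Combining the three term bounds completes the proof.
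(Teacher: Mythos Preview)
Your overall architecture matches the paper: same Lyapunov function $\|\nparaspace\|^2$ under the $M$-inner product, same three-term split, the phase-type analog of Lemma~\ref{lem-continuity2-exp} (Lemma~\ref{lem-continuity2} in the paper), and the trace identity for the noise term via Lemma~\ref{LEM-ROTATE}. Your treatment of the second and third terms is essentially what the paper does.

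The gap is in the first term. The identity $\langle \bm{n},\bm{\lambda}\bm{\pi}^{(0)}\rangle=\sum_r\weight_r n_r$ that you invoke to carry over the exponential-case step~(i) is \emph{false} for general phase-type distributions: it would require $\lambda_r^{(0)}M_r\bm{\pi}_r^T=\weight_r\bm{1}$, but $M_r$ is designed so that instead $(\bm{\rho}_r^{(0)})^TM_r(-\transition_r^T)=\weight_r\bm{1}^T$ (Lemma~\ref{LEM-ROTATE}), and these two statements are not equivalent when $\transition_r$ is non-symmetric---already an Erlang-$2$ block breaks it. Consequently the paper does \emph{not} split $\nparaspace=\bm{n}-\nperpspace$ here. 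It keeps $\langle\steadyparaspace,(-\transition^T)\bm{\rho}^{(0)}\rangle=\sum_{\ell\in\satlinks}\alpha_\ell^s C_\ell$ (via Lemma~\ref{LEM-ROTATE}), uses the KKT identity $\sum_r\weight_r\overline{N}_r=\sum_\ell p_\ell C_\ell$, and bounds the difference by $\sum_{\ell\in\satlinks}|\alpha_\ell^s-p_\ell|C_\ell+\sum_{\ell\notin\satlinks}p_\ell C_\ell$, both controlled by Lemma~\ref{lem-continuity2}. So that lemma is invoked twice---once for term~(i) and once for term~(ii)---and state-space collapse enters term~(i) through the $\|\steadyperp\|$-dependence in Lemma~\ref{lem-continuity2}, not through a bare Cauchy--Schwarz on $\|\steadyperp^s\|$. (Minor aside: the PBH test is used only to establish that $M$ is positive definite; the full-rank assumption enters separately, through inverting $H_sH_s^T$ in the proof of Lemma~\ref{lem-continuity2}.)
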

The dominant terms in the upper and lower bounds in Theorem~\ref{THM-BOUNDS}, $L_s\cdot\max_r\weight_r/\epsilon$ and $L_s\cdot\min_r\weight_r/\epsilon$, do not depend on the specific forms of the file size distributions in the class $\mathcal{D}$ except for their means.  Therefore, we say that the upper and lower bounds are \emph{heavy-traffic insensitive}.  

\paragraph{Implication for interchange of limits.}
As a by-product, our state-space collapse result in Theorem~\ref{THM-SSC} provides a key component needed in establishing interchange of limits to justify the diffusion approximation for proportionally fair sharing (\emph{$\weight_r=1$ for all $r$}) derived by \citet{VlaZhaZwa_14}.  Interested readers are referred to \cite{VlaZhaZwa_14} for further details of the diffusion approximation. In this section we restrict our discussions to the heavy-traffic regime where \emph{all} the links are critically loaded since this is the setting studied in \cite{VlaZhaZwa_14}.

In \cite{VlaZhaZwa_14} it is assumed that the file size distributions are in a class of phase-type distributions such that the initial distribution has a positive probability on all the phases.  To take advantage of both the results in \cite{VlaZhaZwa_14} and in our paper, in this section we study the class of phase-type distributions that satisfy the assumptions in both papers, i.e., we study the subclass of $\mathcal{D}$ such that the initial distribution $\pi_{r,\phase}>0$ for all route $r$ and phase $\phase$.  It should be noted that this class is still dense in the space of probability distributions on $[0,\infty)$.

Let $(\hat{\bm{N}}^{(\epsilon)}(t)\colon t\ge 0)$ denote the diffusion-scaled flow-count process in the $\epsilon$-th system, i.e.,
\begin{equation*}
\hat{\bm{N}}^{(\epsilon)}(t)=\epsilon \bm{N}^{(\epsilon)}(t/\epsilon^2).
\end{equation*}
\citet{VlaZhaZwa_14} have shown that $(\hat{\bm{N}}^{(\epsilon)}(t)\colon t\ge 0)$ converges weakly to a process $(\hat{\bm{N}}(t)\colon t\ge 0)$ that is determined by a semimartingale reflected Brownian motion (SRBM) (Theorem~6.1 in \cite{VlaZhaZwa_14}), and have derived the stationary distribution of $(\hat{\bm{N}}(t)\colon t\ge 0)$ (Theorem~7.1 in \cite{VlaZhaZwa_14}).  Let $\hat{\bm{N}}^{(\epsilon)}(\infty)$ and $\hat{\bm{N}}(\infty)$ denote two random variables that follow the stationary distributions of $(\hat{\bm{N}}^{(\epsilon)}(t)\colon t\ge 0)$ and $(\hat{\bm{N}}(t)\colon t\ge 0)$, respectively.  The diagram in Figure~\ref{fig-interchange} summarizes these results.  The missing component in \cite{VlaZhaZwa_14} is a convergence result along the arrow at the bottom of the diagram, which is usually referred to as ``interchange of limits'' since it implies that the two limits as $\epsilon\to 0$ and as $T\to\infty$ can be interchanged. We establish interchange of limits in the following theorem, which formally connects the stationary distributions of the flow-count processes to the stationary distribution of the diffusion limit. The proof is given in Section~\ref{sec-interchange}.
\begin{figure}
\centering
\begin{tikzcd}[column sep=12em, row sep=6em]
\hat{\bm{N}}^{(\epsilon)}(\cdot)|_{[0,T]}
\arrow[r, "\text{\footnotesize $\epsilon\to 0$}" inner sep=1ex, "\text{\footnotesize (Theorem~6.1 in \cite{VlaZhaZwa_14})}"' inner sep=1ex]
\arrow[d, "\text{\footnotesize $T\to\infty$}"' inner sep=1ex, pos=0.25, "\text{\footnotesize (Lemma~\ref{LEM-SUM-NRK} in our paper}"' pos=0.5, "\text{\footnotesize or Theorem~1 in \cite{Mas_07})}"' pos=0.75]
&
\hat{\bm{N}}(\cdot)|_{[0,T]}
\arrow[d, "\text{\footnotesize $T\to\infty$}" inner sep=1ex, pos=0.3, "\text{\footnotesize (Theorem~7.1 in \cite{VlaZhaZwa_14})}" pos=0.6]\\
\hat{\bm{N}}^{(\epsilon)}(\infty)
\arrow[r, dashed, "\text{\footnotesize $\epsilon\to 0$}" inner sep=1ex, "\text{\footnotesize Our result: Theorem~\ref{THM-INTERCHANGE}}"' inner sep=1ex]
&
\hat{\bm{N}}(\infty)
\end{tikzcd}
\caption{Interchange of limits.}
\label{fig-interchange}
\end{figure}

\begin{theorem}[Interchange of Limits]\label{THM-INTERCHANGE}
Under the proportionally fair policy, the stationary distributions of the diffusion-scaled flow-count processes converge weakly to the stationary distribution of the diffusion limit in the heavy-traffic regime as $\epsilon\to 0^+$, i.e.,
\begin{equation*}
\hat{\bm{N}}^{(\epsilon)}(\infty)\Rightarrow\hat{\bm{N}}(\infty),\quad\text{as }\epsilon\to 0^+.
\end{equation*}
\end{theorem}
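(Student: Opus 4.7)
The plan is to follow the classical tightness-plus-uniqueness scheme for interchange of limits: first show that the family $\{\hat{\bm{N}}^{(\epsilon)}(\infty)\}_{\epsilon>0}$ is tight, then identify any subsequential weak limit with $\hat{\bm{N}}(\infty)$ by initializing the pre-limit process at stationarity and combining the process-level functional limit theorem of \cite{VlaZhaZwa_14} with the ergodicity of the limiting SRBM, and finally conclude full-sequence weak convergence from the uniqueness of the subsequential limit.

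For tightness, Theorem~\ref{THM-BOUNDS} (applied with $\weight_r=1$ for all $r$) gives $\expect\bigl[\sum_r\overline{N}_r^{(\epsilon)}\bigr]=O(1/\epsilon)$, so the laws of $\hat{\bm{N}}^{(\epsilon)}(\infty)=\epsilon\bm{N}^{(\epsilon)}(\infty)$ have uniformly bounded first moment, which yields tightness on $\mathbb{R}_+^{\Phase}$ by Markov's inequality. Moreover, Theorem~\ref{THM-SSC} yields $\expect\bigl[\epsilon\|\steadyperp^{(\epsilon)}\|\bigr]=O(\sqrt{\epsilon})\to 0$, so every subsequential limit $\bm{X}^*$ of $\hat{\bm{N}}^{(\epsilon)}(\infty)$ is almost surely supported on the cone $\mathcal{K}$. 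This matches the state space on which the SRBM $(\hat{\bm{N}}(t)\colon t\ge 0)$ of \cite{VlaZhaZwa_14} lives, and is precisely the missing ingredient needed to connect steady-state behavior to the diffusion limit.

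For identification, fix any subsequence $\epsilon_k\to 0^+$ along which $\hat{\bm{N}}^{(\epsilon_k)}(\infty)\Rightarrow \bm{X}^*$, and initialize the $\epsilon_k$-th flow-count process from its stationary distribution, so that $\hat{\bm{N}}^{(\epsilon_k)}(0)\stackrel{d}{=}\hat{\bm{N}}^{(\epsilon_k)}(\infty)$ and consequently $\hat{\bm{N}}^{(\epsilon_k)}(T)\stackrel{d}{=}\hat{\bm{N}}^{(\epsilon_k)}(\infty)$ for every $T>0$. Applying Theorem~6.1 of \cite{VlaZhaZwa_14} along this subsequence — whose hypothesis on the initial data reduces to the weak convergence $\hat{\bm{N}}^{(\epsilon_k)}(0)\Rightarrow\bm{X}^*$ already established — produces functional convergence $\hat{\bm{N}}^{(\epsilon_k)}(\cdot)\Rightarrow\hat{\bm{N}}(\cdot)$ on $D([0,T];\mathbb{R}^{\Phase})$, where $\hat{\bm{N}}(\cdot)$ is the SRBM started from $\bm{X}^*$. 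Projecting onto time $T$ gives $\bm{X}^*\stackrel{d}{=}\hat{\bm{N}}(T)$ for every $T>0$. Since Theorem~7.1 of \cite{VlaZhaZwa_14} establishes existence and uniqueness of the stationary law of the SRBM (together with its explicit product form), the SRBM is positive recurrent and $\hat{\bm{N}}(T)\Rightarrow\hat{\bm{N}}(\infty)$ as $T\to\infty$ irrespective of the initial law. Therefore $\bm{X}^*\stackrel{d}{=}\hat{\bm{N}}(\infty)$, and since this identifies every subsequential limit with the same distribution, the full family $\hat{\bm{N}}^{(\epsilon)}(\infty)$ converges weakly to $\hat{\bm{N}}(\infty)$.

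The main obstacle will be Step~3, namely verifying that Theorem~6.1 of \cite{VlaZhaZwa_14} can be invoked along the subsequence $\{\epsilon_k\}$ with the random, $\epsilon_k$-dependent stationary initial data $\hat{\bm{N}}^{(\epsilon_k)}(0)$. Two points require care: (i) the theorem in \cite{VlaZhaZwa_14} is stated for initial conditions converging in an appropriate mode, so one has to check that weak convergence of the initial laws (plus uniform integrability from the $O(1)$ first-moment bound above) is sufficient, which can be handled via Skorohod coupling and the continuity properties of the Skorohod problem used in \cite{VlaZhaZwa_14}; and (ii) the limit $\bm{X}^*$ must lie in the domain of the SRBM, which is exactly what our state-space collapse Theorem~\ref{THM-SSC} guarantees. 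The ergodicity step, while standard once the stationary law of the SRBM has been shown to be unique, should be justified by either invoking the explicit reversible stationary distribution from Theorem~7.1 of \cite{VlaZhaZwa_14} together with a Harris-recurrence or Lyapunov argument for SRBMs in a polyhedron.
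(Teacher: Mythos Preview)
Your proposal is correct and follows essentially the same tightness-plus-identification scheme as the paper's proof (the paper cites Lemma~\ref{LEM-SUM-NRK} rather than Theorem~\ref{THM-BOUNDS} for the first-moment bound, but either works). One simplification worth noting: once you have $\bm{X}^*\stackrel{d}{=}\hat{\bm{N}}(T)$ for all $T\ge 0$, the limit process started from $\bm{X}^*$ is itself stationary, so $\bm{X}^*$ is \emph{a} stationary law of the SRBM and uniqueness (Theorem~7.1 of \cite{VlaZhaZwa_14}) immediately gives $\bm{X}^*\stackrel{d}{=}\hat{\bm{N}}(\infty)$ --- the detour through ergodicity/Harris recurrence you flag as an obstacle is not needed.
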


\subsection{Properties of the inner product.}\label{subsec-inner}
\begin{sloppypar}
In this section we identify two properties of the inner product constructed for phase-type file size distributions.  These two properties allow us to prove Theorems~\ref{THM-SSC} and \ref{THM-BOUNDS} analogously to the proofs for exponential file size distributions.  Similarly as in the setting with exponential file size distributions, these two properties here are also concerned with the difference between the loads and the instantaneous rate allocation.  We again introduce a set of vectors $\{\widehat{\bm{b}}^{(\ell)},\ell\in\mathcal{L}\}$ defined based on the rate allocation as follows. Recall that a state $\bm{n}$ is a $\Phase\times 1$ vector that has the form $\bm{n}=[\bm{n}_1^T,\dots,\bm{n}_R^T]^T$ with $\bm{n}_r=[n_{r,1},\dots,n_{r,\Phase_r}]^T$, and $x_r$ is the bandwidth allocated to each flow on route $r$ based on $\bm{n}$ by proportionally fair sharing. Let $\bm{nx}=[\bm{n}_1^Tx_1,\dots,\bm{n}_R^Tx_R]^T$, whose $(r,\phase)$-th entry, $n_{r,\phase}x_r$, is the total bandwidth allocated to the flows in phase $\phase$ on route $r$. Then $\widehat{\bm{b}}^{(\ell)}$ is defined as
\begin{equation*}
\widehat{b}^{(\ell)}_{r,\phase}=n_{r,\phase}x_r\mathbbm{1}_{\{\ell\in r\}}/\weight_r.
\end{equation*}
Recall that the vector $\bm{b}^{(\ell)}$ is defined based on the loads on the phases as ${b}^{(\ell)}_{r,\phase}=\rho^{(0)}_{r,\phase}\mathbbm{1}_{\{\ell\in r\}}/\weight_r$.
We claim that the constructed inner product satisfies the following two properties, where the norm is the induced norm:
\begin{enumerate}[label=(P\arabic*'),leftmargin=3.8em]
\item \label{cond-C1} For each link $\ell$,
\begin{equation*}
\langle \bm{b}^{(\ell)},(-\transition^T)(\bm{\rho}^{(0)}-\bm{nx})\rangle = U_{\ell}-\delta_{\ell},
\end{equation*}
where $U_{\ell}$ is still the unused bandwidth on link $\ell$.
\item \label{cond-C2} For each link $\ell$,
\begin{equation*}
\langle\bm{b}^{(\ell)}-\widehat{\bm{b}}^{(\ell)},(-\transition^T)(\bm{\rho}^{(0)}-\bm{nx})\rangle\ge\weight_{\min}\eta_{\min}\|\bm{b}^{(\ell)}-\widehat{\bm{b}}^{(\ell)}\|^2,
\end{equation*}
where $\weight_{\min}$ is a positive constant with $\weight_{\min}=\min_r\{\weight_r\}$, $\eta_{\min}$ is a positive constant to be defined later, and
\begin{equation*}
\transition =
\begin{bmatrix}
\transition_1 & 0 & \cdots & 0\\
0 & \transition_2 & \cdots & 0\\
\vdots & \vdots & \ddots & \vdots\\
0 & 0 & \cdots & \transition_R
\end{bmatrix}.
\end{equation*}
\end{enumerate}
\end{sloppypar}

We remark that in both properties, the difference between the load vector $\bm{\rho}^{(0)}$ and the bandwidth allocation vector $\bm{nx}$ is rotated and scaled by $-\transition^T$. Property \ref{cond-C1} requires the projection of this altered difference $(-\transition^T)(\bm{\rho}^{(0)}-\bm{nx})$ onto the vector $\bm{b}^{(\ell)}$ to be the unused bandwidth minus the capacity surplus. For property \ref{cond-C2}, observe that under the regular dot product of Euclidean space, $\langle\bm{b}^{(\ell)}-\widehat{\bm{b}}^{(\ell)},\bm{\rho}^{(0)}-\bm{nx}\rangle_{\text{Euclidean}}=\|\bm{b}^{(\ell)}-\widehat{\bm{b}}^{(\ell)}\|^2_{\text{Euclidean}}$. Property \ref{cond-C2} requires that this Euclidean inner product is not diminished by the matrix $(-\transition^T)$ under the constructed inner product.

Properties \ref{cond-C1} and \ref{cond-C2} are not hard to verify once we identify the two lemmas below. 
\iftoggle{complete}{%
The proofs are given in Appendix~\ref{app-proof-lem-inner-well-defined}.
}{%
The detailed proof is given in our technical report \cite{WanMagSri_20} due to space limitations.
}%
\begin{lemma}\label{LEM-ROTATE}
For each route $r$, $\frac{1}{\weight_r}(\bm{\rho}^{(0)}_r)^TM_r(-\transition_r^T)=\bm{1}_{\Phase_r\times 1}^T$.
\end{lemma}
\begin{lemma}\label{LEM-LARGER-NORM}
For each route $r$, there exists a constant $\eta_r>0$, such that the matrix $\frac{1}{2}M_r(-\transition_r^T)+\frac{1}{2}(-\transition_r)M_r-\eta_rM_r$ is positive semi-definite.
\end{lemma}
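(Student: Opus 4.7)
The plan is to first reduce the statement to the positive definiteness of a single symmetric matrix, and then establish that positive definiteness by combining an integration-by-parts identity with a controllability argument. Let $Q_r = \frac{1}{2}(M_r(-\transition_r^T) + (-\transition_r)M_r)$, which is symmetric because $M_r$ is. Since $M_r \succ 0$ by Lemma~\ref{LEM-INNER-PRODUCT}, the existence of a positive constant $\eta_r$ with $Q_r - \eta_r M_r \succeq 0$ is equivalent to $Q_r \succ 0$: one may simply take $\eta_r$ to be the smallest generalized eigenvalue of the pencil $(Q_r, M_r)$, i.e., $\eta_r = \min_{y \ne 0} (y^T Q_r y)/(y^T M_r y)$, which is strictly positive whenever $Q_r$ is. So the whole task reduces to showing $Q_r \succ 0$.

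To compute $Q_r$, set $f(\sigma) = e^{\transition_r \sigma}\bm{1}\bm{1}^T e^{\transition_r^T \sigma}$ and $g(\sigma) = \bm{\pi}_r(-\transition_r)^{-1} e^{\transition_r \sigma}\bm{1}$. Then $f'(\sigma) = \transition_r f(\sigma) + f(\sigma)\transition_r^T$, so
\begin{equation*}
M_r(-\transition_r^T) + (-\transition_r)M_r = -\frac{\weight_r}{\lambda_r^{(0)}}\int_0^{\infty}\frac{f'(\sigma)}{g(\sigma)}\,d\sigma.
\end{equation*}
Integration by parts, using the boundary values $f(0)=\bm{1}\bm{1}^T$, $g(0)=\bm{\pi}_r(-\transition_r)^{-1}\bm{1}=1/\mu_r$, the decay $f(\sigma)/g(\sigma)\to 0$ as $\sigma\to\infty$ (holds because $\transition_r$ is Hurwitz and $f$ decays at twice the rate of $g$), and the identity $g'(\sigma) = -\bm{\pi}_r e^{\transition_r \sigma}\bm{1} < 0$, then yields
\begin{equation*}
2Q_r = \frac{\weight_r}{\lambda_r^{(0)}}\left[\mu_r \bm{1}\bm{1}^T + \int_0^{\infty}\frac{\bm{\pi}_r e^{\transition_r \sigma}\bm{1}}{g(\sigma)^2}\, e^{\transition_r \sigma}\bm{1}\bm{1}^T e^{\transition_r^T \sigma}\,d\sigma\right].
\end{equation*}
Both terms on the right are positive semidefinite, so $Q_r \succeq 0$.

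To upgrade this to strict positive definiteness, take any $y \ne 0$ and examine $y^T(2Q_r)y = (\weight_r/\lambda_r^{(0)})[\mu_r(\bm{1}^T y)^2 + \int_0^{\infty}(\bm{\pi}_r e^{\transition_r\sigma}\bm{1}/g(\sigma)^2)(y^T e^{\transition_r\sigma}\bm{1})^2\,d\sigma]$. If $\bm{1}^T y \ne 0$, the first term is already strictly positive. Otherwise it suffices to show that $\sigma\mapsto y^T e^{\transition_r \sigma}\bm{1}$ is not identically zero on $[0,\infty)$. By the Taylor expansion at $\sigma=0$, vanishing of this scalar function would force $y^T \transition_r^k \bm{1} = 0$ for every $k\ge 0$, placing $y$ in the orthogonal complement of the controllability subspace of the pair $(\transition_r,\bm{1})$. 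But that pair is controllable: by the Popov-Belevitch-Hautus test, the eigenvalues of the block-diagonal Hurwitz matrix $\transition_r$ are $-\mu_r^{(b)}$ for $b=1,\dots,B_r$ (all distinct by hypothesis), and the unique (up to scaling) left eigenvector associated with $-\mu_r^{(b)}$ is supported on the last phase of block $b$, whose inner product with $\bm{1}$ equals $1$. Hence $y = 0$, a contradiction, so the integral is strictly positive and $Q_r \succ 0$, producing the desired $\eta_r > 0$.

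The main technical obstacle will be the integration-by-parts step: carefully justifying the vanishing of the boundary term at infinity (which rests on the differing decay rates of $f$ and $g$), and correctly isolating the coefficient at $0$ via the mean file-size identity $\bm{\pi}_r(-\transition_r)^{-1}\bm{1} = 1/\mu_r$. The subsequent PBH computation is routine once the eigenstructure of the block-bidiagonal $\transition_r$ has been identified, and fits the paper's stated connection between its inner product construction and linear-algebraic controllability.
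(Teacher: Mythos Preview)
Your proof is correct, and the integration-by-parts identity you derive for $2Q_r$ is exactly the identity the paper obtains (the paper's equation \eqref{eq-Lyapunov-equation}, after letting $t\to\infty$). Where you diverge is in how you extract a positive $\eta_r$ from that identity. You argue abstractly: both summands are positive semidefinite, and the PBH/controllability argument (reused from Lemma~\ref{LEM-INNER-PRODUCT}) forces $Q_r\succ 0$, whence $\eta_r$ is the smallest generalized eigenvalue of $(Q_r,M_r)$. The paper instead proves a concrete hazard-rate lower bound, namely that the scalar weight $\bm{\pi}_r e^{\transition_r\sigma}\bm{1}\big/\bigl(\bm{\pi}_r(-\transition_r)^{-1}e^{\transition_r\sigma}\bm{1}\bigr)\ge 2\eta_r$ for all $\sigma\ge 0$; with that bound, the integral term in $2Q_r$ directly dominates $2\eta_r M_r$, and the leftover $\bm{1}\bm{1}^T$ term is discarded as nonnegative.

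Your route is cleaner for this lemma in isolation and avoids the Jordan-form computation for the hazard rate. The paper's route, however, buys something you do not: the hazard-rate inequality \eqref{eq-bound-hazard} is reused verbatim in the proof of Lemma~\ref{LEM-SUM-NRK} to lower-bound $\lambda_r^{(0)} M_r\bm{\pi}_r^T$ entrywise. If you adopt your argument for Lemma~\ref{LEM-LARGER-NORM}, you would need a separate (short) argument at that point---essentially re-proving the hazard-rate bound anyway, since the abstract eigenvalue $\eta_r$ does not carry that entrywise information.
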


\subsection{Proof of Theorem~\ref{THM-SSC} (State-Space Collapse).}\label{subsec-ssc}
In this proof we repeat the arguments in the proof of Theorem~\ref{THM-SSC-EXP} almost verbatim, by virtue of the properties \ref{cond-C1} and \ref{cond-C2} of the constructed inner product. Again, we fix an $\epsilon$ and omit the superscript~$^{(\epsilon)}$ when it is clear from context for conciseness.  As with the proof of Theorem~\ref{THM-SSC-EXP}, the main part of the proof of Theorem~\ref{THM-SSC} is to show proper bounds on the drift of $\|\nperp\|$. Below Lemma~\ref{LEM-SUM-NRK} first gives drift bounds for $\|\bm{n}\|$ and obtain tail bounds.
\iftoggle{complete}{%
The proof of Lemma~\ref{LEM-SUM-NRK} is given in Appendix~\ref{app-proof-lem-sum-nrk}.
}{%
The proof of Lemma~\ref{LEM-SUM-NRK} is given in our technical report \cite{WanMagSri_20} due to space limitations.
}
Then Lemma~\ref{lem-nperp-drift} establishes drift bounds for $\|\nperp\|$ under phase-type file size distributions, which is an analogy to Lemma~\ref{lem-nperp-drift-exp} for exponential file size distributions.

\begin{lemma}[Drift Bounds for $\|\bm{n}\|$ and Tail Bounds]\label{LEM-SUM-NRK}
For any nonnegative $\epsilon$ with $\epsilon<1$, the drift of the Lyapunov function $\|\bm{n}\|$ satisfies that
\begin{equation*}
\Delta\|\bm{n}\|\le -\frac{\eta_{\min}A_3\epsilon}{\mu_{\max}},\quad\text{when }\|\bm{n}\|\ge\frac{\mu_{\max}A_1}{\eta_{\min}A_3\epsilon},
\end{equation*}
and
\begin{equation*}
\Delta\|\bm{n}\|\le \nu\zeta,\quad\text{for all }\bm{n},
\end{equation*}
where $\mu_{\max},A_3,\nu$ and $\zeta$ are positive constants.
Consequently, the flow count process $(\bm{N}(t)\colon t\ge 0)$ is positive recurrent, and the distribution of $\sum_{r,\phase} \weight_r\overline{N}_{r,\phase}$ has the following exponential tail bound: for any nonnegative integer $j$,
\begin{align*}
\Pr\biggl(\sum_{r,\phase}\weight_r\overline{N}_{r,\phase}>\frac{\mu_{\max}A_1A_4}{\eta_{\min}A_3\epsilon }+2\nu A_4j\biggr)\le\beta^{j+1},
\end{align*}
where $A_4$ is a positive constant, and $\beta=\frac{\zeta \nu }{\zeta \nu +\epsilon\eta_{\min}A_3/\mu_{\max}}<1$.
\end{lemma}

\begin{lemma}[Drift Bounds for $\|\nperp\|$]\label{lem-nperp-drift}
In the $\epsilon$-th system, the drift of the Lyapunov function $\|\nperp\|$ satisfies that
\begin{equation}
\Delta^{(\epsilon)} \|\nperp\|\le -\sqrt{\epsilon}
\end{equation}
when
\begin{equation}\label{eq-conditions-drift}
\epsilon\le \epsilon_{\max},\quad\|\nperp\|\ge\frac{A_1}{\xi_1\sqrt{\epsilon}},\quad\frac{\|\nperp\|}{\sum_r\weight_r n_r}\ge\frac{\xi_2\sqrt{\epsilon}}{A_2},
\end{equation}
and
\begin{equation}
\Delta^{(\epsilon)} \|\nperp\|\le(\xi_1+1)\sqrt{\epsilon}
\end{equation}
when
\begin{equation}
\epsilon\le \epsilon_{\max},\quad\|\nperp\|\ge\frac{A_1}{\xi_1\sqrt{\epsilon}},
\end{equation}
where $\epsilon_{\max},\xi_1,\xi_2,A_1,A_2$ are positive constants.
\end{lemma}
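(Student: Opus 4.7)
The plan is to mirror almost verbatim the proof of Lemma~\ref{lem-nperp-drift-exp}, substituting properties \ref{cond-C1} and \ref{cond-C2} wherever \ref{property-P1} and \ref{property-P2} were invoked. The key observation that makes this possible is that, although the phase-type dynamics add phase-change transitions to the Markov chain, the mean drift vector of $\bm{n}$ takes the clean form $\bm{v} = (-\transition^T)(\bm{\rho}^{(0)} - \bm{nx}) - \epsilon(-\transition^T)\bm{\rho}^{(0)}$. Indeed, the $(r,\phase)$-entry of $(-\transition^T)\bm{\rho}^{(0)}$ is precisely $\lambda_r^{(0)}\pi_{r,\phase}$ (the phase-$\phase$ arrival rate into route $r$ at nominal load), while $(-\transition^T)\bm{nx}$ gathers the phase-transition inflows/outflows and absorptions. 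This is exactly the combination that appears in both \ref{cond-C1} and \ref{cond-C2}, so the same inner-product manipulations will carry through.

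I would begin by using concavity of the square root to write $\Delta\|\nperp\| \le \frac{1}{2\|\nperp\|}\bigl(\Delta\|\bm{n}\|^2 - \Delta\|\npara\|^2\bigr)$. For each transition $\bm{n}\to\bm{n}+\bm{\xi}$ in \eqref{eq-transition-rates} we have $\|\bm{n}+\bm{\xi}\|^2-\|\bm{n}\|^2 = 2\langle \bm{n},\bm{\xi}\rangle + \|\bm{\xi}\|^2$, so summing against transition rates yields $\Delta\|\bm{n}\|^2 = 2\langle \bm{n},\bm{v}\rangle + R(\bm{n})$, where $R(\bm{n})$ is the second-order remainder. Each $\|\bm{e}^{(r,\phase)}\|$ and $\|\bm{e}^{(r,\phase_2)}-\bm{e}^{(r,\phase_1)}\|$ is a fixed constant; the weights $n_{r,\phase}x_r$ appearing in the phase-change and absorption rates satisfy $n_{r,\phase}x_r \le n_r x_r \le C_{\min}$ by the capacity constraint, and the entries of $\transition_r$ and $\bm{s}_r$ are constants. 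Hence $R(\bm{n}) \le A_1$ uniformly in $\bm{n}$, giving $\Delta\|\bm{n}\|^2 \le 2\langle \bm{n},\bm{v}\rangle + A_1$. For the projection, the argument $\|(\bm{n}+\bm{\xi})_{\shortparallel}\|^2 - \|\npara\|^2 \ge 2\langle \npara,\bm{\xi}\rangle$ based on $\langle \npara,\nperp\rangle = 0$ and $\langle \npara,(\bm{n}+\bm{\xi})_\perp\rangle \le 0$ (polar-cone inclusion) holds for arbitrary $\bm{\xi}$, so summing weighted by transition rates gives $\Delta\|\npara\|^2 \ge 2\langle \npara,\bm{v}\rangle$.

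Combining the two drifts produces $\Delta\|\nperp\| \le \frac{1}{\|\nperp\|}\langle \nperp,\bm{v}\rangle + \frac{A_1}{\|\nperp\|}$. The $\epsilon$-contribution in $\bm{v}$ contributes at most $\epsilon\bigl\|(-\transition^T)\bm{\rho}^{(0)}\bigr\|$ by Cauchy–Schwarz. For the main contribution $\langle \nperp,(-\transition^T)(\bm{\rho}^{(0)}-\bm{nx})\rangle = \langle \bm{n},\cdot\rangle - \langle \npara,\cdot\rangle$, the weighted-proportionally-fair KKT relation still yields the representation $\bm{n} = \sum_\ell p_\ell\widehat{\bm{b}}^{(\ell)}$ phase by phase, since $\sum_{\ell:\ell\in r}p_\ell\cdot n_{r,\phase}x_r/\weight_r = n_{r,\phase}$. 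Applying \ref{cond-C2} term by term and \ref{cond-C1} together with complementary slackness ($p_\ell U_\ell = 0$, $\delta_\ell = 0$ for $\ell\in\satlinks$), and then bounding $\|\nperp\|^2 \le \bigl\|\bm{n}-\sum_{\ell\in\satlinks}p_\ell\bm{b}^{(\ell)}\bigr\|^2$ via Cauchy–Schwarz and the duality identity $\sum_\ell p_\ell C_\ell = \sum_r\weight_r n_r$, I would obtain $\langle \bm{n},(-\transition^T)(\bm{\rho}^{(0)}-\bm{nx})\rangle \le -A_2\|\nperp\|^2/\sum_r \weight_r n_r$ for a positive constant $A_2$. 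Meanwhile, writing $\npara = \sum_{\ell\in\satlinks}\alpha_\ell\bm{b}^{(\ell)}$ with $\alpha_\ell\ge 0$ and using \ref{cond-C1} gives $\langle \npara,(-\transition^T)(\bm{\rho}^{(0)}-\bm{nx})\rangle = \sum_{\ell\in\satlinks}\alpha_\ell U_\ell \ge 0$.

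Assembling everything yields
\[
\Delta\|\nperp\| \;\le\; -A_2\,\frac{\|\nperp\|}{\sum_r \weight_r n_r} + \epsilon\bigl\|(-\transition^T)\bm{\rho}^{(0)}\bigr\| + \frac{A_1}{\|\nperp\|},
\]
and choosing $\xi_1,\xi_2>0$ with $\xi_2-\xi_1=2$ and setting $\epsilon_{\max}=1/\|(-\transition^T)\bm{\rho}^{(0)}\|^2$ gives both claimed drift bounds exactly as in the exponential case. I expect the two delicate points to be: first, the bookkeeping of the second-order remainder $R(\bm{n})$, to verify that the \emph{phase-change} transitions (which are absent in the exponential case and whose rates grow linearly in $n_{r,\phase}$) do not contribute an $\|\bm{n}\|$-dependent error term—this requires leveraging the a priori bound $n_{r,\phase}x_r\le C_{\min}$ together with the fact that only differences of two standard basis vectors appear, so $\|\bm{e}^{(r,\phase_2)}-\bm{e}^{(r,\phase_1)}\|$ is a constant under the constructed norm; and second, checking that the representation $\bm{n}=\sum_\ell p_\ell\widehat{\bm{b}}^{(\ell)}$ correctly reinstates the geometric role that property \ref{property-P1} played in the exponential analysis once the $(-\transition^T)$ rotation built into the phase-type version of the properties is accounted for.
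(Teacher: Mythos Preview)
Your proposal is correct and mirrors the paper's own proof essentially verbatim: the same concavity step, the same use of the polar-cone inequality to lower-bound $\Delta\|\npara\|^2$, the same representation $\bm{n}=\sum_\ell p_\ell\widehat{\bm{b}}^{(\ell)}$, and the same substitution of \ref{cond-C1}--\ref{cond-C2} for \ref{property-P1}--\ref{property-P2}, leading to the identical final inequality and choice of $\xi_1,\xi_2,\epsilon_{\max}$. One small slip: the uniform bound on $n_{r,\phase}x_r$ should be $\max_\ell C_\ell$ (via $n_rx_r\le C_\ell$ for any $\ell\in r$), not $C_{\min}$, but this does not affect the argument since any fixed constant suffices for $A_1$.
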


\proof{Proof of Lemma~\ref{lem-nperp-drift} (Drift Bounds for $\|\nperp\|$).}
\begin{sloppypar}
This proof follows arguments similar to those in the proof of Lemma~\ref{lem-nperp-drift-exp}. Basically we need to replace the vector $\bm{\lambda}^{(0)}-\bm{nx\mu}$ in the proof of Lemma~\ref{lem-nperp-drift-exp} with its generalized form $(-\transition^T)(\bm{\rho}^{(0)}-\bm{nx})$, and use the properties \ref{cond-C1} and \ref{cond-C2} in place of the properties \ref{property-P1} and \ref{property-P2}.
\end{sloppypar}

We also start by stating the following claim, the proof of which is given at the end of this proof.
\begin{claim}\label{claim-nperp-drift}
The drift, $\Delta \|\nperp\|$, is upper bounded as follows:
\begin{equation*}
\Delta\|\nperp\|\le\frac{1}{\|\nperp\|}\langle \bm{n}-\npara,(-\transition^T)(\bm{\rho}^{(0)}-\bm{nx})\rangle+\epsilon\|(-\transition^T)\bm{\rho}^{(0)}\|+\frac{A_1}{\|\nperp\|},
\end{equation*}
where $A_1$ is a constant.
\end{claim}

\begin{sloppypar}
Then we analyze the terms in Claim~\ref{claim-nperp-drift}.
First consider the $\langle \bm{n},(-\transition^T)(\bm{\rho}^{(0)}-\bm{nx})\rangle$ in the first term. Recall that $n_{r,\phase}$ can be written in the following form according to the weighted proportionally fair policy: $n_{r,\phase}=\frac{n_{r,\phase}x_r}{\weight_r}\sum_{\ell:\ell\in r} p_{\ell}$,
where $\weight_r$ is the weight used and $p_{\ell}$ is the Lagrange multiplier of the capacity constraint of link $\ell$. We can further write this equality in a vector form using the vectors $\{\widehat{\bm{b}}^{(\ell)},\ell\in\mathcal{L}\}$: $\bm{n}=\sum_{\ell} p_{\ell}\widehat{\bm{b}}^{(\ell)}.$
Note that by property \ref{cond-C1} and complementary slackness, $p_{\ell}\langle \bm{b}^{(\ell)},(-\transition^T)(\bm{\rho}^{(0)}-\bm{nx})\rangle = p_{\ell}U_{\ell}-p_{\ell}\delta_{\ell}=-p_{\ell}\delta_{\ell}$,
where recall that $\delta_{\ell}=0$ for $\ell\in\satlinks$. Thus
\begin{align*}
&\mspace{23mu}\langle \bm{n}, (-\transition^T)(\bm{\rho}^{(0)}-\bm{nx})\rangle\\
& = \sum_{\ell} p_{\ell}\langle \widehat{\bm{b}}^{(\ell)}-\bm{b}^{(\ell)},(-\transition^T)(\bm{\rho}^{(0)}-\bm{nx})\rangle+\sum_{\ell\in\mathcal{L}\setminus\satlinks}p_{\ell}\langle\bm{b}^{(\ell)},(-\transition^T)(\bm{\rho}^{(0)}-\bm{nx})\rangle\nonumber\\
&\le-\weight_{\min}\eta_{\min}\sum_{\ell}p_{\ell}\|\widehat{\bm{b}}^{(\ell)}-\bm{b}^{(\ell)}\|^2-\sum_{\ell\in\mathcal{L}\setminus\satlinks}p_{\ell}\delta_{\ell},
\end{align*}
where the inequality follows from \ref{cond-C2}. Then following the same arguments as those in the proof of Lemma~\ref{lem-nperp-drift-exp} (\eqref{eq-perp-bound-1-exp}--\eqref{eq-perp-bound-3-exp}), we have
$\|\nperp\|^2
\le\frac{2}{C_{\min}}\biggl(\sum_{r,\phase}\weight_rn_{r,\phase}\biggr)\biggl(\sum_{\ell} p_{\ell}\|\widehat{\bm{b}}^{(\ell)}-\bm{b}^{(\ell)}\|^2+\sum_{\ell\in\mathcal{L}\setminus\satlinks}p_{\ell}\|\bm{b}^{(\ell)}\|^2\biggr).$
Let $A_2=\min\biggl\{\weight_{\min}\eta_{\min},\min_{\ell\in\mathcal{L}\setminus\satlinks}\frac{\delta_{\ell}}{\|\bm{b}^{(\ell)}\|^2}\biggr\}\frac{C_{\min}}{2}$.
Then $A_2$ is a positive constant independent of $\epsilon$ and there holds
\begin{equation}\label{eq-bound-npart}
\langle \bm{n}, (-\transition^T)(\bm{\rho}^{(0)}-\bm{nx})\rangle\le-A_2\frac{\|\nperp\|^2}{\sum_{r,\phase}\weight_r n_{r,\phase}}.
\end{equation}
\end{sloppypar}

We can prove that $\langle\npara,(-\transition^T)(\bm{\rho}^{(0)}-\bm{nx})\rangle$ is nonnegative following similar arguments as those in the proof of Lemma~\ref{lem-nperp-drift-exp}. Then combining the bounds for $\langle \bm{n},(-\transition^T)(\bm{\rho}^{(0)}-\bm{nx})\rangle$ and $\langle\npara,(-\transition^T)(\bm{\rho}^{(0)}-\bm{nx})\rangle$ yields:
\begin{align*}
\Delta \|\nperp\|&\le -A_2\frac{\|\nperp\|}{\sum_{r,\phase}\weight_r n_{r,\phase}} +\epsilon\|(-\transition^T)\bm{\rho}^{(0)}\|+\frac{A_1}{\|\nperp\|}.
\end{align*}
We choose any constants $\xi_1>0,\xi_2>0$ such that $\xi_2-\xi_1=2$.
Then when
\begin{equation*}
\epsilon\le\epsilon_{\max}\triangleq \frac{1}{\|(-\transition^T)\bm{\rho}^{(0)}\|^2},\quad\|\nperp\|\ge\frac{A_1}{\xi_1\sqrt{\epsilon}},\quad
\frac{\|\nperp\|}{\sum_{r,\phase}\weight_r n_{r,\phase}}\ge\frac{\xi_2\sqrt{\epsilon}}{A_2},
\end{equation*}
we have $\Delta \|\nperp\|\le -\xi_2\sqrt{\epsilon}+\sqrt{\epsilon}+\xi_1\sqrt{\epsilon}=-\sqrt{\epsilon}$; and when $\epsilon\le\epsilon_{\max},\|\nperp\|\ge\frac{A_1}{\xi_1\sqrt{\epsilon}},$
we have $\Delta \|\nperp\|\le \sqrt{\epsilon}+\xi_1\sqrt{\epsilon}=(\xi_1+1)\sqrt{\epsilon}$, which are the drift bounds in Lemma~\ref{lem-nperp-drift}.

Lastly, we prove the Claim~\ref{claim-nperp-drift} at the beginning of this proof. 
We first bound $\Delta\|\nperp\|$ in the form below following arguments similar to those in the proof of Claim~\ref{claim-nperp-drift-exp}: $\Delta \|\nperp\|\le\frac{1}{2\|\nperp\|}(\Delta \|\bm{n}\|^2-\Delta\|\npara\|^2)$.
We then analyze the drifts $\Delta\|\bm{n}\|^2$ and $\Delta\|\npara\|^2$. First,
\begin{align}
\Delta \|\bm{n}\|^2&= \sum_r\Biggl(\sum_{\phase\in[\Phase_r]}\lambda_r\pi_{r,\phase}\Bigl(\|\bm{n}+\bm{e}^{(r,\phase)}\|^2-\|\bm{n}\|^2\Bigr)\label{eq-bound-n2-drift-1}\\
&\mspace{23mu}+\sum_{\substack{\phase_1,\phase_2\in[\Phase_r]\\\phase_1\neq\phase_2}}n_{r,\phase_1}x_r(\transition_{r})_{\phase_1,\phase_2}\Bigl(\|\bm{n}-\bm{e}^{(r,\phase_1)}+\bm{e}^{(r,\phase_2)}\|^2-\|\bm{n}\|^2\Bigr)\nonumber\\
&\mspace{23mu}+\sum_{\phase\in[\Phase_r]}n_{r,\phase}x_r\sum_{\phase'}(-\transition_r)_{\phase,\phase'}\Bigl(\|\bm{n}-\bm{e}^{(r,\phase)}\|^2-\|\bm{n}\|^2\Bigr)\Biggr)\nonumber\\
&\le 2\langle \bm{n},
\begin{bmatrix}
\lambda_1\bm{\pi}_1^T\\
\lambda_2\bm{\pi}_2^T\\
\vdots\\
\lambda_R\bm{\pi}_R^T
\end{bmatrix}
-
\begin{bmatrix}
-\transition_1^T & 0 & \cdots & 0\\
0 & -\transition_2^T & \cdots & 0\\
\vdots & \vdots & \ddots & \vdots\\
0 & 0 & \cdots & -\transition_R^T
\end{bmatrix}
\begin{bmatrix}
\bm{n}_1x_1\\
\bm{n}_2x_2\\
\vdots\\
\bm{n}_Rx_R
\end{bmatrix}\rangle + 2A_1,\label{eq-bound-n2-drift-2}\\
&=2\langle \bm{n}, (-\transition^T)(\bm{\rho}-\bm{nx})\rangle+2A_1.\label{eq-bound-n2-drift-3}
\end{align}
where \eqref{eq-bound-n2-drift-1} follows from the transition rates of the flow count process $(\bm{N}(t)\colon t\ge 0)$ under the weighted proportionally fair sharing, \eqref{eq-bound-n2-drift-2} is obtained by expressing norms in terms of inner products and bounding the sum of the terms with $\|\bm{e}^{(r,\phase)}\|^2$'s by $2A_1$ for a constant $A_1$, and \eqref{eq-bound-n2-drift-3} follows from the definition of $\bm{\rho}$ in \eqref{eq-load-vector}.  We can derive a lower bound on $\Delta\|\npara\|^2$ in a similar way:
\begin{align}
\Delta \|\npara\|^2&= \sum_r\Biggl(\sum_{\phase\in[\Phase_r]}\lambda_r\pi_{r,\phase}\Bigl(\|(\bm{n}+\bm{e}^{(r,\phase)})_\shortparallel\|^2-\|\npara\|^2\Bigr)\label{eq-bound-npara2-drift-1}\\
&\mspace{66mu}+\mspace{-9mu}\sum_{\substack{\phase_1,\phase_2\in[\Phase_r]\\\phase_1\neq\phase_2}}\mspace{-12mu}n_{r,\phase_1}x_r(\transition_{r})_{\phase_1,\phase_2}\Bigl(\|(\bm{n}-\bm{e}^{(r,\phase_1)}+\bm{e}^{(r,\phase_2)})_{\shortparallel}\|^2-\|\npara\|^2\Bigr)\nonumber\\
&\mspace{66mu}+\sum_{\phase\in[\Phase_r]}n_{r,\phase}x_r\sum_{\phase'}(-\transition_r)_{\phase,\phase'}\Bigl(\|(\bm{n}-\bm{e}^{(r,\phase)})_{\shortparallel}\|^2-\|\npara\|^2\Bigr)\Biggr)\nonumber\\
&\ge\sum_r\Biggl(\sum_{\phase\in[\Phase_r]}\lambda_r\pi_{r,\phase}\Bigl(2\langle \npara,\bm{e}^{(r,\phase)}\rangle \Bigr)
+\sum_{\substack{\phase_1,\phase_2\in[\Phase_r]\\\phase_1\neq\phase_2}}\mspace{-12mu}n_{r,\phase_1}x_r(\transition_{r})_{\phase_1,\phase_2}\Bigl(2\langle \npara,-\bm{e}^{(r,\phase_1)}+\bm{e}^{(r,\phase_2)}\rangle\Bigr)\label{eq-bound-npara2-drift-2}\\
&\mspace{66mu}+\sum_{\phase\in[\Phase_r]}n_{r,\phase}x_r\sum_{\phase'}(-\transition_r)_{\phase,\phase'}\Bigl(2\langle \npara,-\bm{e}^{(r,\phase)}\rangle\Bigr)\Biggr)\nonumber\\
&=2\langle \npara, (-\transition^T)(\bm{\rho}-\bm{nx})\rangle\label{eq-bound-npara2-drift-3},
\end{align}
where \eqref{eq-bound-npara2-drift-1} still follows from the transitions rates of the flow count process, \eqref{eq-bound-npara2-drift-2} follows from that $\langle \npara,\nperp\rangle = 0$ and $\langle \npara,(\bm{n}+\bm{e}^{(r,\phase)})_{\perp}\rangle\le 0$, $\langle \npara,(\bm{n}-\bm{e}^{(r,\phase_1)}+\bm{e}^{(r,\phase_2)})_{\perp}\rangle\le 0$, $\langle \npara,(\bm{n}-\bm{e}^{(r,\phase)})_{\perp}\rangle\le 0$ since perpendicular components are in the polar cone of the cone $\mathcal{K}$, and \eqref{eq-bound-npara2-drift-3} still follows from the definition of $\bm{\rho}$. Combining the above bounds yields
\begin{align*}
\Delta\|\nperp\|
&\le\frac{1}{\|\nperp\|}\langle \bm{n}-\npara,(-\transition^T)(\bm{\rho}-\bm{nx})\rangle+\frac{A_1}{\|\nperp\|}\\
&=\frac{1}{\|\nperp\|}\langle \bm{n}-\npara,(-\transition^T)(\bm{\rho}^{(0)}-\bm{nx})\rangle-\epsilon\frac{\langle \nperp,(-\transition^T)\bm{\rho}^{(0)}\rangle}{\|\nperp\|}+\frac{A_1}{\|\nperp\|}\\
&\le\frac{1}{\|\nperp\|}\langle \bm{n}-\npara,(-\transition^T)(\bm{\rho}^{(0)}-\bm{nx})\rangle+\epsilon\|(-\transition^T)\bm{\rho}^{(0)}\|+\frac{A_1}{\|\nperp\|},
\end{align*}
which completes the proof of the claim.

Now we have established proper drift bounds for $\|\nperp\|$ in Lemma~\ref{lem-nperp-drift}.
We can then obtain the moment bounds on $\|\steadyperp\|$ in Theorem~\ref{THM-SSC} following arguments similar to those in the proof of Lemma~\ref{lem-distr-bound-exp}, noting a tail bound on $\sum_{r,\phase}\weight_r\overline{N}_{r,\phase}$ in Lemma~\ref{LEM-SUM-NRK}, which is similar to the bound in Lemma~\ref{LEM-SUM-NR-EXP}.
The proof of Theorem~\ref{THM-SSC} is thus completed.
\Halmos
\endproof

\subsection{Proof of Theorem~\ref{THM-BOUNDS} (Bounds on Flow Counts).}\label{subsec-bounds}
As with the proof of Theorem~\ref{THM-BOUNDS-EXP}, in this proof we also obtain the bounds by setting the steady-state drift of the Lyapunov function $V(\bm{n})=\|\nparaspace\|$ to $0$, where $\nparaspace$ is the projection of the state $\bm{n}$ onto the \emph{subspace} where the cone $\mathcal{K}$ lies in, i.e., the subspace spanned by $\bm{b}^{(\ell)}$'s, denoted by $\mathcal{S}$. Again we use the superscript~$^s$ to indicate when the projection is onto the subspace.

Again we fix an $\epsilon>0$ and temporarily omit the superscript $^{(\epsilon)}$ for conciseness. Following steps similar to those in \eqref{eq-bound-npara2-drift-1}, we write the drift of $\|\nparaspace\|$ as follows:
\begin{align}
\Delta \|\nparaspace\|^2&= \sum_r\Biggl(\sum_{\phase\in[\Phase_r]}\lambda_r\pi_{r,\phase}\Bigl(\|(\bm{n}+\bm{e}^{(r,\phase)})_\shortparallel^s\|^2-\|\nparaspace\|^2\Bigr)\nonumber\\
&\mspace{23mu}+\mspace{-9mu}\sum_{\substack{\phase_1,\phase_2\in[\Phase_r]\\\phase_1\neq\phase_2}}\mspace{-12mu}n_{r,\phase_1}x_r(\transition_{r})_{\phase_1,\phase_2}\Bigl(\|(\bm{n}-\bm{e}^{(r,\phase_1)}+\bm{e}^{(r,\phase_2)})_{\shortparallel}^s\|^2-\|\nparaspace\|^2\Bigr)\nonumber\\
&\mspace{23mu}+\sum_{\phase\in[\Phase_r]}n_{r,\phase}x_r\sum_{\phase'}(-\transition_r)_{\phase,\phase'}\Bigl(\|(\bm{n}-\bm{e}^{(r,\phase)})_{\shortparallel}^s\|^2-\|\nparaspace\|^2\Bigr)\Biggr)\nonumber\\
&= \sum_r\Biggl(\sum_{\phase\in[\Phase_r]}\lambda_r\pi_{r,\phase}\Bigl(2\langle \nparaspace,(\bm{e}^{(r,\phase)})_\shortparallel^s\rangle +\|(\bm{e}^{(r,\phase)})_\shortparallel^s\|^2\Bigr)\nonumber\\
&\mspace{23mu}+\mspace{-12mu}\sum_{\substack{\phase_1,\phase_2\in[\Phase_r]\\\phase_1\neq\phase_2}}n_{r,\phase_1}x_r(\transition_{r})_{\phase_1,\phase_2}\Bigl(2\langle \nparaspace,-(\bm{e}^{(r,\phase_1)})_{\shortparallel}^s+(\bm{e}^{(r,\phase_2)})_{\shortparallel}^s\rangle+\|-(\bm{e}^{(r,\phase_1)})_{\shortparallel}^s+(\bm{e}^{(r,\phase_2)})_{\shortparallel}^s\|^2\Bigr)\nonumber\\
&\mspace{23mu}+\sum_{\phase\in[\Phase_r]}n_{r,\phase}x_r\biggl(-\sum_{\phase'}(\transition_r)_{\phase,\phase'}\biggr)\Bigl(-2\langle \nparaspace,(\bm{e}^{(r,\phase)})_{\shortparallel}^s\rangle+
\|(\bm{e}^{(r,\phase)})_{\shortparallel}^s\|^2\Bigr)\Biggr)\nonumber\\
&=2\langle\nparaspace,(-\transition^T)(\bm{\rho}-\bm{nx})\rangle + B_1(\bm{n})\label{eq-bound-nparaspace2-drift}\\
&=-2\epsilon\langle \nparaspace,(-\transition^T)\bm{\rho}^{(0)}\rangle +2\langle\nparaspace,(-\transition^T)(\bm{\rho}^{(0)}-\bm{nx})\rangle + B_1(\bm{n}),\nonumber
\end{align}
where in \eqref{eq-bound-nparaspace2-drift} we have used the fact that $\langle \nparaspace,(\bm{e}^{(r,\phase)})_{\shortparallel}^s\rangle = \langle \nparaspace,\bm{e}^{(r,\phase)}\rangle$ since $\langle \nparaspace,(\bm{e}^{(r,\phase)})_{\perp}^s\rangle =0$ and the definition of $\bm{\rho}$, and $B_1(\cdot)$ is defined as
\begin{align*}
B_1(\bm{n})&=\sum_r\Biggl(\sum_{\phase\in[\Phase_r]}\lambda_r\pi_{r,\phase}\|(\bm{e}^{(r,\phase)})_\shortparallel^s\|^2+\mspace{-12mu}\sum_{\substack{\phase_1,\phase_2\in[\Phase_r]\\\phase_1\neq\phase_2}}\mspace{-12mu}n_{r,\phase_1}x_r(\transition_{r})_{\phase_1,\phase_2}\|-(\bm{e}^{(r,\phase_1)})_{\shortparallel}^s+(\bm{e}^{(r,\phase_2)})_{\shortparallel}^s\|^2\\
&\mspace{66mu}+\sum_{\phase\in[\Phase_r]}n_{r,\phase}x_r\biggl(-\sum_{\phase'}(\transition_r)_{\phase,\phase'}\biggr)
\|(\bm{e}^{(r,\phase)})_{\shortparallel}^s\|^2\Biggr).
\end{align*}
When the system is in steady state, we have $\expect[\Delta \|\steadyparaspace\|^2]=0$. Therefore,
\begin{equation}\label{eq-Lyapunov=0}
\epsilon\expect[\langle \steadyparaspace,(-\transition^T)\bm{\rho}^{(0)}]=\expect[\langle \steadyparaspace,(-\transition^T)(\bm{\rho}^{(0)}-\bm{\steady x})]+\frac{1}{2}\expect[B_1(\steady)].
\end{equation}

The remainder of this proof analyzes the three terms in \eqref{eq-Lyapunov=0} term by term. To facilitate the analysis, as with the proof of Theorem~\ref{THM-BOUNDS-EXP}, we first need to bound the difference $|\alpha_{\ell}^s-p_{\ell}|$, where recall that $p_{\ell}$ is the Lagrange multiplier for the capacity constraint of link $\ell$. Lemma~\ref{lem-continuity2} below is an analogy to Lemma~\ref{lem-continuity2-exp}. The proof of Lemma~\ref{lem-continuity2} is given at the end of this proof.

\begin{lemma}\label{lem-continuity2}
There exist positive constants $B_2$ and $B_3$ such that for any state $\bm{n}$,
\begin{equation}
|\alpha_{\ell}^s-p_{\ell}|\le B_2\|\nperp\|^{1/2}\Biggl(\sum_{r,\phase}\weight_r n_{r,\phase}\Biggr)^{1/2},\forall \ell \in\satlinks,\text{ and }\sum_{\ell\in\mathcal{L}\setminus\satlinks}p_{\ell}\le B_3\|\nperp\|,
\end{equation}
where the $\alpha_{\ell}^s$'s are the coefficients in the projection $\nparaspace=\sum_{\ell\in\satlinks}\alpha_{\ell}\bm{b}^{(\ell)}$ and the $p_{\ell}$'s are the Lagrange multipliers for the capacity constraints.
\end{lemma}

Now we analyzes the three terms in \eqref{eq-Lyapunov=0}.
\begin{enumerate}[leftmargin=0em,itemindent=3.2em,label=({\roman*})]
\item We first consider the term $\epsilon\expect[\langle \steadyparaspace,(-\transition^T)\bm{\rho}^{(0)}\rangle]$ and show that it is close to $\epsilon\expect[\sum_{r,\phase}\weight_r\overline{N}_{r,\phase}]$. Recall that since $\steadyparaspace$ is in the subspace $\mathcal{S}$, it can be written as $\steadyparaspace=\sum_{\ell\in\satlinks}\alpha_{\ell}^s\bm{b}^{(\ell)}$, where the coefficients $\alpha_{\ell}^s$'s can be negative.
Thus
\begin{align}
\langle \steadyparaspace,(-\transition^T)\bm{\rho}^{(0)}\rangle =\sum_{\ell\in\satlinks}\alpha_{\ell}^s\langle \bm{b}^{(\ell)},(-\transition^T)\bm{\rho}^{(0)}\rangle=\sum_{\ell\in\satlinks} \alpha_{\ell}^s C_{\ell},\label{eq-nparaspace2-term1}
\end{align}
\iftoggle{complete}{%
where \eqref{eq-nparaspace2-term1} follows from arguments similar to those in the proof of property \ref{cond-C1} in \eqref{eq-condition-C1-proof-1}--\eqref{eq-condition-C1-proof-5} noting that $\delta_{\ell}=0$ for $\ell\in\satlinks$.
}{%
where \eqref{eq-nparaspace2-term1} follows from arguments similar to those in the proof of property \ref{cond-C1} noting that $\delta_{\ell}=0$ for $\ell\in\satlinks$.
}
We also know that $\sum_{r,\phase}\weight_r \overline{N}_{r,\phase}=\sum_{\ell} p_{\ell}C_{\ell}$. Let $\widetilde{C}=\sum_{\ell} C_{\ell}$. Then
\begin{align*}
\biggl|\sum_{r,\phase}\weight_r \overline{N}_{r,\phase}-\langle \steadyparaspace,(-\transition^T)\bm{\rho}^{(0)}\rangle\biggr|
&\le\sum_{\ell\in\satlinks}|\alpha_{\ell}^s-p_{\ell}|C_{\ell}+\sum_{\ell\in\mathcal{L}\setminus\satlinks}p_{\ell}C_{\ell}\\
&
\le \widetilde{C}B_2\|\steadyperp\|^{1/2}\Biggl(\sum_{r,\phase}\weight_r \overline{N}_{r,\phase}\Biggr)^{1/2}+\widetilde{C}B_3\|\steadyperp\|,
\end{align*}
where the second inequality follows from Lemma~\ref{lem-continuity2}.  Therefore, by Cauchy-Schwarz inequality,
\begin{align*}
\epsilon\expect\Biggl[\biggl|\sum_{r,\phase}\weight_r \overline{N}_{r,\phase}-\langle \steadyparaspace,(-\transition^T)\bm{\rho}^{(0)}\rangle\biggr|\Biggr]
&\le\epsilon \widetilde{C}B_2\Bigl(\expect[\|\steadyperp\|]\Bigr)^{1/2}\Biggl(\expect\Biggl[\sum_{r,\phase}\weight_r\overline{N}_{r,\phase}\Biggr]\Biggr)^{1/2}+\epsilon\widetilde{C} B_3\expect[\|\steadyperp\|]\\
&=O(\epsilon^{1/4}),
\end{align*}
where the equality follows from the state-space collapse result in Theorem~\ref{THM-SSC} and the bound on $\expect[\sum_{r,\phase}\weight_r\overline{N}_{r,\phase}]$ indicated by Lemma~\ref{LEM-SUM-NRK}.

\item Next, we bound the term $\expect[\langle \steadyparaspace,(-\transition^T)(\bm{\rho}^{(0)}-\steady\bm{x})]$. Again, since $\steadyparaspace\in\mathcal{S}$ and recall the property \ref{cond-C1} for the inner product, we have
\begin{align*}
\langle \steadyparaspace,(-\transition^T)(\bm{\rho}^{(0)}-\steady\bm{x})\rangle=\sum_{\ell\in\satlinks} \alpha_{\ell}^s\langle \bm{b}^{(\ell)},(-\transition^T)(\bm{\rho}^{(0)}-\steady\bm{x})\rangle=\sum_{\ell\in\satlinks} \alpha_{\ell}^sU_{\ell}.
\end{align*}
Consider an $\ell\in\satlinks$. By Lemma~\ref{lem-continuity2} and H\"{o}lder's inequality:
\begin{align*}
\expect[|\alpha_{\ell}^sU_{\ell}|]=\expect[|(\alpha_{\ell}^s-p_{\ell})U_{\ell}|]\le B_2\Biggl(\expect\Biggl[\|\steadyperp\|^{\frac{\tau_1}{2}}\biggl(\sum_{r,\phase}\weight_r \overline{N}_{r,\phase}\biggr)^{\frac{\tau_1}{2}}\Biggr]\Biggr)^{\frac{1}{\tau_1}}\Bigl(\expect[U_{\ell}^{\tau_2}]\Bigr)^{\frac{1}{\tau_2}},
\end{align*}
where we pick $\tau_1$ and $\tau_2$ such that $\tau_1$ is an even integer with $\tau_1>4$ and $\frac{1}{\tau_1}+\frac{1}{\tau_2}=1$. Using Cauchy-Schwarz inequality we have
\begin{align*}
\Biggl(\expect\Biggl[\|\steadyperp\|^{\frac{\tau_1}{2}}\biggl(\sum_{r,\phase}\weight_r \overline{N}_{r,\phase}\biggr)^{\frac{\tau_1}{2}}\Biggr]\Biggr)^{\frac{1}{\tau_1}}&\le(\expect[\|\steadyperp\|^{\tau_1}])^{\frac{1}{2\tau_1}}
\Biggl(\expect\Biggl[\biggl(\sum_{r,\phase}\weight_r \overline{N}_r\biggr)^{\tau_1}\Biggr]\Biggr)^{\frac{1}{2\tau_1}}=O(\epsilon^{-\frac{3}{4}}),
\end{align*}
where again the last equality follows from the state-space collapse result in Theorem~\ref{THM-SSC} and the bound on $\expect[\sum_{r,\phase}\weight_r\overline{N}_{r,\phase}]$ is indicated by Lemma~\ref{LEM-SUM-NRK}. Next we bound $\expect[U_{\ell}^{\tau_2}]$. We can prove that $\expect[U_{\ell}]=\epsilon C_{\ell}$ by considering the Lyapunov function $w_{\ell}(\bm{n})=\langle \bm{b}^{(\ell)},\bm{n}\rangle$. Its drift is
\begin{align*}
\Delta w_{\ell}(\bm{n})&=-\epsilon\langle \bm{b}^{(\ell)},(-\transition^T)\bm{\rho}^{(0)}\rangle+\langle \bm{b}^{(\ell)},(-\transition^T)(\bm{\rho}^{(0)}-\bm{nx})\rangle\\
&=-\epsilon C_{\ell}+U_{\ell}.
\end{align*}
Since in the steady state $\expect[\Delta w_{\ell}(\steady)]=0$, we have $\expect[U_{\ell}]=\epsilon C_{\ell}$.
Since $0\le U_{\ell}\le C_{\ell}$, there holds
\begin{align*}
\Bigl(\expect\Bigl[U_{\ell}^{\tau_2}\Bigr]\Bigr)^{\frac{1}{\tau_2}}&\le\Bigl(\expect\Bigl[U_{\ell}\cdot C_{\ell}^{\tau_2-1}\Bigr]\Bigr)^{\frac{1}{\tau_2}}=\epsilon^{\frac{1}{\tau_2}} C_{\ell}.
\end{align*}
Combining these bounds we have $\expect[|\alpha_{\ell}^sU_{\ell}|]=O(\epsilon^{\frac{1}{4}-\frac{1}{\tau_1}})$ for each $\ell\in\satlinks$, and thus
\begin{align*}
\expect[|\langle \steadyparaspace,(-\transition^T)(\bm{\rho}^{(0)}-\steady\bm{x})|]&=\expect\Biggl[\biggl|\sum_{\ell}\alpha_{\ell}^sU_{\ell}\biggr|\Biggr]=O(\epsilon^{\frac{1}{4}-\frac{1}{\tau_1}}).
\end{align*}

\item Lastly, we bound the last term $\expect[B_1(\steady)]/2$. Recall that
\begin{align*}
B_1(\steady)&=\sum_r\Biggl(\underbrace{\sum_{\phase\in[\Phase_r]}\lambda_r\pi_{r,\phase}\|(\bm{e}^{(r,\phase)})_\shortparallel^s\|^2}_{\mathcal{T}_{r,1}}+\underbrace{\sum_{\substack{\phase_1,\phase_2\in[\Phase_r]\\\phase_1\neq\phase_2}}\overline{N}_{r,\phase_1}x_r(\transition_{r})_{\phase_1,\phase_2}\|-(\bm{e}^{(r,\phase_1)})_{\shortparallel}^s+(\bm{e}^{(r,\phase_2)})_{\shortparallel}^s\|^2}_{\mathcal{T}_{r,2}}\\
&\mspace{54mu}+\underbrace{\sum_{\phase\in[\Phase_r]}\overline{N}_{r,\phase}x_r\biggl(-\sum_{\phase'}(\transition_r)_{\phase,\phase'}\biggr)\|(\bm{e}^{(r,\phase)})_{\shortparallel}^s\|^2}_{\mathcal{T}_{r,3}}\Biggr).
\end{align*}
Note that
\begin{align}
\mathcal{T}_{r,1}&\le (\max_r \weight_r) \sum_{\phase\in[\Phase_r]}(\lambda_r\pi_{r,\phase}/\weight_r)\|(\bm{e}^{(r,\phase)})_\shortparallel^s\|^2,\nonumber\\
\mathcal{T}_{r,2}&\le (\max_r \weight_r) \sum_{\substack{\phase_1,\phase_2\in[\Phase_r]\\\phase_1\neq\phase_2}}(\overline{N}_{r,\phase_1}x_r/\weight_r)(\transition_{r})_{\phase_1,\phase_2}\|-(\bm{e}^{(r,\phase_1)})_{\shortparallel}^s+(\bm{e}^{(r,\phase_2)})_{\shortparallel}^s\|^2,\label{eq-boundTr2}\\
\mathcal{T}_{r,3}&\le (\max_r \weight_r) \sum_{\phase\in[\Phase_r]}(\overline{N}_{r,\phase}x_r/\weight_r)\biggl(-\sum_{\phase'}(\transition_r)_{\phase,\phase'}\biggr)\|(\bm{e}^{(r,\phase)})_{\shortparallel}^s\|^2,\label{eq-boundTr3}
\end{align}
where \eqref{eq-boundTr2} holds since $(\transition_{r})_{\phase_1,\phase_2}\ge 0$ when $\phase_1\neq\phase_2$, and \eqref{eq-boundTr3} is true since $\sum_{\phase'}(\transition_r)_{\phase,\phase'}\le 0$. It can be easily verified that $\expect[\overline{N}_{r,\phase}x_r]=\rho_{r,\phase}$ using the fact that $\expect[\Delta \overline{N}_{r,\phase}]=0$. Then
\begin{align*}
\expect[\mathcal{T}_{r,2}]&\le
(\max_r \weight_r) \sum_{\substack{\phase_1,\phase_2\in[\Phase_r]\\\phase_1\neq\phase_2}}(\rho_{r,\phase_1}/\weight_r)(\transition_{r})_{\phase_1,\phase_2}\|-(\bm{e}^{(r,\phase_1)})_{\shortparallel}^s+(\bm{e}^{(r,\phase_2)})_{\shortparallel}^s\|^2\\
&=\underbrace{(\max_r \weight_r)\sum_{\phase_1}(\rho_{r,\phase_1}/\weight_r)\biggl(\sum_{\phase_2\neq\phase_1}(\transition_r)_{\phase_1,\phase_2}\biggr)\|(\bm{e}^{(r,\phase_1)})_{\shortparallel}^s\|^2}_{\mathcal{T}_{r,2a}}\\
&\mspace{23mu}+\underbrace{(\max_r \weight_r)\sum_{\phase_2}\Biggl(\sum_{\phase_1\neq\phase_2} (\rho_{r,\phase_1}/\weight_r)(\transition_r)_{\phase_1,\phase_2}\Biggr)\|(\bm{e}^{(r,\phase_2)})_{\shortparallel}^s\|^2}_{\mathcal{T}_{r,2b}}\\
&\mspace{23mu}+\underbrace{2(\max_r \weight_r)\sum_{\phase_1,\phase_2:\phase_1\neq\phase_2}(\rho_{r,\phase_1}/\weight_r)(-\transition_r)_{\phase_1,\phase_2}\langle (\bm{e}^{(r,\phase_1)})_{\shortparallel}^s,(\bm{e}^{(r,\phase_2)})_{\shortparallel}^s\rangle}_{\mathcal{T}_{r,2c}},\\
\expect[\mathcal{T}_{r,3}]&\le(\max_r \weight_r)\sum_{\phase_1}(\rho_{r,\phase_1}/\weight_r)\biggl(-\sum_{\phase_2}(\transition_r)_{\phase_1,\phase_2}\biggr)\|(\bm{e}^{(r,\phase_1)})_{\shortparallel}^s\|^2.
\end{align*}
Thus,
\begin{equation*}
\mathcal{T}_{r,2a}+\expect[\mathcal{T}_{r,3}]\le(\max_r \weight_r)\sum_{\phase}(\rho_{r,\phase}/\weight_r)(-\transition_r)_{\phase,\phase}\|(\bm{e}^{(r,\phase)})_{\shortparallel}^s\|^2.
\end{equation*}
Since $(-\transition_r)^T\bm{\rho}_r=\lambda_r\bm{\pi}_r^T$, we have $\sum_{\phase_1:\phase_1\neq\phase_2}\rho_{r,\phase_1}(\transition_r)_{\phase_1,\phase_2}=-\lambda_r\pi_{r,\phase_2}+\rho_{r,\phase_2}(-\transition_r)_{\phase_2,\phase_2},\forall \phase_2$.
Therefore,
\begin{equation*}
\mathcal{T}_{r,1}+\mathcal{T}_{r,2b}\le(\max_r \weight_r)\sum_{\phase}(\rho_{r,\phase}/\weight_r)(-\transition_r)_{\phase,\phase}\|(\bm{e}^{(r,\phase)})_{\shortparallel}^s\|^2.
\end{equation*}
Arranging these terms we have the following bound for $\expect[B_1(\steady)]$:
\begin{align*}
\expect[B_1(\steady)]&\le 2(\max_r \weight_r)\sum_r\sum_{\phase_1,\phase_2}(\rho_{r,\phase_1}/\weight_r)(-\transition_r)_{\phase_1,\phase_2}\langle (\bm{e}^{(r,\phase_1)})_{\shortparallel}^s,(\bm{e}^{(r,\phase_2)})_{\shortparallel}^s\rangle.
\end{align*}
Since $\langle (\bm{e}^{(r,\phase_1)})_{\perp}^s, (\bm{e}^{(r,\phase_2)})_{\shortparallel}^s\rangle=0$, we further have
\begin{align*}
\frac{1}{2}\expect[B_1(\steady)]
&\le (\max_r \weight_r)\sum_r\sum_{\phase_1,\phase_2}(\rho_{r,\phase_1}/\weight_r)(-\transition_r)_{\phase_1,\phase_2}\langle \bm{e}^{(r,\phase_1)}, (\bm{e}^{(r,\phase_2)})_{\shortparallel}^s\rangle\\
&=(1-\epsilon)(\max_r \weight_r)\sum_r\sum_{\phase_1,\phase_2}(\rho^{(0)}_{r,\phase_1}/\weight_r)(-\transition_r)_{\phase_1,\phase_2}\langle \bm{e}^{(r,\phase_1)}, (\bm{e}^{(r,\phase_2)})_{\shortparallel}^s\rangle\\
&=(1-\epsilon)(\max_r \weight_r)\sum_{r,\phase_2}\langle \textrm{diag}(\bm{\rho}^{(0)}/\bm{\weight})(-\transition)\bm{e}^{(r,\phase_2)}, (\bm{e}^{(r,\phase_2)})_{\shortparallel}^s\rangle,
\end{align*}
\begin{sloppypar}
where $\bm{\rho}^{(0)}/\bm{\weight}$ denotes a vector whose $(r,\phase)$-th entry is $\rho^{(0)}_{r,\phase}/\weight_r$, and $\textrm{diag}(\bm{\rho}^{(0)}/\bm{\weight})$ denotes a diagonal matrix whose diagonal consists of entries of the vector $\bm{\rho}^{(0)}/\bm{\weight}$.

We now express $(\bm{e}^{(r,\phase)})_{\shortparallel}^s$ in a matrix form. Let $B_s$ denote the matrix whose rows are $(\bm{b}^{(\ell)})^T$'s with $\ell\in\satlinks$. Then $(\bm{e}^{(r,\phase)})_{\shortparallel}^s=B_s^T(B_sMB_s^T)^{-1}B_sM\bm{e}^{(r,\phase)}$, so we have
\begin{align}
\frac{1}{2}\expect[B_1(\steady)]
&\le (1-\epsilon)(\max_r \weight_r)\cdot\sum_{r,\phase}(\bm{e}^{(r,\phase)})^T(-\transition^T)\textrm{diag}(\bm{\rho}^{(0)}/\bm{\weight})M(B_s^T(B_sMB_s^T)^{-1}B_sM)\bm{e}^{(r,\phase)}\nonumber\\
&=(1-\epsilon)(\max_r \weight_r)\textrm{tr}\Bigl((-\transition^T)\textrm{diag}(\bm{\rho}^{(0)}/\bm{\weight})MB_s^T(B_sMB_s^T)^{-1}B_sM\Bigr)\nonumber\\
&=(1-\epsilon)(\max_r \weight_r)\textrm{tr}\Bigl(B_sM(-\transition^T)\textrm{diag}(\bm{\rho}^{(0)}/\bm{\weight})MB_s^T(B_sMB_s^T)^{-1}\Bigr)\label{eq-B2-bound-1}\\
&=(1-\epsilon)(\max_r \weight_r)\textrm{tr}(B_sMB_s^T(B_sMB_s^T)^{-1})\label{eq-B2-bound-2}\\
&=(1-\epsilon)(\max_r \weight_r)L_s\nonumber,
\end{align}
where \eqref{eq-B2-bound-1} follows from that $\textrm{tr}(XY)=\textrm{tr}(YX)$ for any matrices $X$ and $Y$, \eqref{eq-B2-bound-2} follows from Lemma~\ref{LEM-ROTATE}, and recall that $L_s$ is the number of critically loaded links in the network. Similarly, we can show that
\begin{equation*}
\frac{1}{2}\expect[B_1(\steady)]\ge (1-\epsilon)(\min_r\weight_r) L_s.
\end{equation*}
\end{sloppypar}
\end{enumerate}

Combining (i), (ii) and (iii) for the terms in \eqref{eq-Lyapunov=0} gives
\begin{equation*}
\frac{L_s\cdot \min_r \weight_r}{\epsilon}+o\biggl(\frac{1}{\epsilon}\biggr)\le \expect\left[\sum_r \weight_r \overline{N}_r^{(\epsilon)}\right]
\le\frac{L_s\cdot \max_r \weight_r}{\epsilon}+o\biggl(\frac{1}{\epsilon}\biggr).
\end{equation*}
The proof will be completed after we prove Lemma~\ref{lem-continuity2} below.

\proof{Proof of Lemma~\ref{lem-continuity2}.}
\begin{sloppypar}
This proof is analogous to the proof of Lemma~\ref{lem-continuity2-exp}. We first give the following bounds on the rate allocation and the Lagrange multipliers.
\end{sloppypar}

\begin{claim}\label{claim-continuity-instant}
There exist positive constants $B_4$ and $B_5$ such that for any state $\bm{n}$,
\begin{equation*}
\sum_{\ell} p_{\ell}\|\widehat{\bm{b}}^{(\ell)}-\bm{b}^{(\ell)}\|^2\le B_4\|\nperp\|,\text{ and }\sum_{\ell:\ell\in r,\ell\in\mathcal{L}\setminus\satlinks}p_{\ell}\le B_5\|\nperp\|,\forall r.
\end{equation*}
\end{claim}
\proof{Proof of Claim~\ref{claim-continuity-instant}}
Consider the term $\langle \bm{n}, (-\transition^T)\allowbreak(\bm{\rho}^{(0)}-\bm{nx})\rangle$.  By the proof of Lemma~\ref{lem-nperp-drift}, we have
\begin{align}
\langle \bm{n}, (-\transition^T)(\bm{\rho}^{(0)}-\bm{nx})\rangle \le -\weight_{\min}\eta_{\min}\sum_{\ell}p_{\ell}\|\widehat{\bm{b}}^{(\ell)}-\bm{b}^{(\ell)}\|^2-\sum_{\ell\in\mathcal{L}\setminus\satlinks}p_{\ell}\delta_{\ell}.\label{eq-left}
\end{align}
Then as in the proof of Lemma~\ref{claim-continuity-instant-exp}, we consider $\bm{y}=\frac{\transition^T(\bm{\rho}^{(0)}-\bm{nx})}{\|\transition^T(\bm{\rho}^{(0)}-\bm{nx})\|}$. We can verify that $\bm{y}\in\mathcal{K}^{\circ}$ since $\langle \bm{b}^{(\ell)},\transition^T(\bm{\rho}^{(0)}-\bm{nx})\rangle = -U_{\ell}\le 0$ for all $\ell\in\satlinks$ by \ref{cond-C1}. Then by arguments similar to those in Lemma~\ref{claim-continuity-instant-exp}, we have
\begin{equation*}
\weight_{\min}\eta_{\min}\sum_{\ell}p_{\ell}\|\widehat{\bm{b}}^{(\ell)}-\bm{b}^{(\ell)}\|^2+\sum_{\ell\in\mathcal{L}\setminus\satlinks}p_{\ell}\delta_{\ell}
\le \|\transition^T(\bm{\rho}^{(0)}-\bm{nx})\|\cdot\|\nperp\|.
\end{equation*}

Since each summand on the left-hand-side above is nonnegative, we have that
\begin{equation*}
\sum_{\ell}p_{\ell}\|\widehat{\bm{b}}^{(\ell)}-\bm{b}^{(\ell)}\|^2\le\frac{\|\bm{\lambda}^{(0)}-\bm{nx\mu}\|}{\weight_{\min}\eta_{\min}}\|\nperp\|,
\end{equation*}
and for each $r$,
\begin{equation*}
\sum_{\ell:\ell\in r,\ell\in\mathcal{L}\setminus\satlinks}p_{\ell}\le\frac{1}{\delta_{\min}}\sum_{\ell:\ell\in r,\ell\in\mathcal{L}\setminus\satlinks}p_{\ell}\delta_{\ell}\le\frac{\|\bm{\lambda}^{(0)}-\bm{nx\mu}\|}{\delta_{\min}}\|\nperp\|,
\end{equation*}
where $\delta_{\min}>0$ is defined as $\delta_{\min}=\min\{\delta_{\ell}\colon \ell\in\mathcal{L}\setminus \satlinks\}$. Since each entry of the rate allocation $\bm{nx}$ can be bounded using a constant independent of $\bm{n}$ and $\epsilon$, there exist positive constants $B_4$ and $B_5$ such that $\frac{\|\transition^T(\bm{\rho}^{(0)}-\bm{nx})\|}{\weight_{\min}\eta_{\min}}\le B_4$ and $\frac{\|\transition^T(\bm{\rho}^{(0)}-\bm{nx})\|}{\delta_{\min}\eta_{\min}}\le B_5$. Therefore,
\begin{equation*}
\sum_{\ell}p_{\ell}\|\widehat{\bm{b}}^{(\ell)}-\bm{b}^{(\ell)}\|^2\le B_4\|\nperp\|,\text{ and }\sum_{\ell:\ell\in r,\ell\in\mathcal{L}\setminus\satlinks}p_{\ell}\le B_5\|\nperp\|,\forall r,
\end{equation*}
which completes the proof of the claim. \Halmos
\endproof

With Claim~\ref{claim-continuity-instant}, the remainder of the proof for Lemma~\ref{lem-continuity2} follows arguments similar to those in  the proof of Lemma~\ref{lem-continuity2-exp}, noting the equivalence of all the norms in the space $\mathbb{R}^{\Phase}$. \Halmos
\endproof

\subsection{Proof of Theorem~\ref{THM-INTERCHANGE} (Interchange of Limits).}\label{sec-interchange}
The proof is fairly standard (see, e.g., the proof of interchange of limits in \cite{ShaTsiZho_14}).  We will first show the tightness of the set of probability distributions of $\hat{\bm{N}}^{(\epsilon)}(\infty)$'s, using the bound on $\expect\bigl[\sum_{r,\phase}\overline{N}_{r,\phase}^{(\epsilon)}\bigr]$ indicated by the tail bound in Lemma~\ref{LEM-SUM-NRK}.  We will then use our state-space collapse result in steady state to show that every convergent subsequence of $(\hat{\bm{N}}^{(\epsilon)}(\infty)\colon\epsilon>0)$ converges to $\hat{\bm{N}}(\infty)$.  Then the convergence $\hat{\bm{N}}^{(\epsilon)}(\infty)\Rightarrow\hat{\bm{N}}(\infty)$ as $\epsilon\to 0^+$ will follow from the Prokhorov's theorem \cite{Bil_71}.

\emph{Tightness.}  Note that $\hat{\bm{N}}^{(\epsilon)}(\infty)\stackrel{d}{=}\epsilon\steady^{(\epsilon)}$, where ``$\stackrel{d}{=}$'' denotes being identically distributed.  The tail bound in Lemma~\ref{LEM-SUM-NRK} indicates that there exists a constant $B_8$ such that for any $\epsilon>0$,
\begin{equation*}
\expect\Biggl[\sum_{r,\phase}\hat{N}^{(\epsilon)}_{r,\phase}(\infty)\Biggr]=\epsilon\expect\Biggl[\sum_{r,\phase}\overline{N}^{(\epsilon)}_{r,\phase}\Biggr]\le B_8.
\end{equation*}
This implies the tightness of the set of probability distributions of $\hat{\bm{N}}^{(\epsilon)}(\infty)$'s.

\emph{Limit of a convergent subsequence.}  We arbitrarily pick a convergent subsequence $(\hat{\bm{N}}^{(\epsilon_i)}(\infty)\colon i=1,2,\dots)$ and assume that $\hat{\bm{N}}^{(\epsilon_i)}(\infty)\Rightarrow \tilde{\bm{N}}$ as $i\to\infty$.  We will show that $\tilde{\bm{N}}\stackrel{d}{=}\hat{\bm{N}}(\infty)$.  Let each $\epsilon_i$-th system start from steady state, i.e., $\hat{\bm{N}}^{(\epsilon_i)}(0)\stackrel{d}{=}\hat{\bm{N}}^{(\epsilon_i)}(\infty)$.  We wish to apply Theorem~6.1 in \cite{VlaZhaZwa_14}, so we need to verify that $\tilde{\bm{N}}$ is in the cone $\mathcal{K}$ almost surely, i.e., $\tilde{\bm{N}}_{\perp}$ is the all-zero vector almost surely.  Since the function that maps a vector to its perpendicular component is continuous, we have $\hat{\bm{N}}^{(\epsilon_i)}_{\perp}(\infty)\Rightarrow \tilde{\bm{N}}_{\perp}$ as $i\to\infty$.  By our state-space collapse result in Theorem~\ref{THM-SSC},
\begin{align*}
\expect[\|\hat{\bm{N}}^{(\epsilon_i)}_{\perp}(\infty)\|]=\epsilon_i\expect[\|\steady^{(\epsilon_i)}_{\perp}\|]=O(\sqrt{\epsilon_i})\to 0,\quad \text{as }i\to \infty.
\end{align*}
Therefore, $\hat{\bm{N}}^{(\epsilon_i)}_{\perp}(\infty)\Rightarrow 
\bm{0}$ as $i\to\infty$, and thus $\tilde{\bm{N}}_{\perp}=\bm{0}$ almost surely.  Now we can apply Theorem~6.1 in \cite{VlaZhaZwa_14}.  Recall that we let each process $(\hat{\bm{N}}^{(\epsilon_i)}(t)\colon t\ge 0)$ start from steady state, so $\hat{\bm{N}}^{(\epsilon_i)}(t)\stackrel{d}{=}\hat{\bm{N}}^{(\epsilon_i)}(\infty)$ for all $t\ge 0$.  Therefore, by the convergence of $(\hat{\bm{N}}^{(\epsilon_i)}(t)\colon t\ge 0)$ to $(\hat{\bm{N}}(t)\colon t\ge 0)$ in Theorem~6.1 in \cite{VlaZhaZwa_14}, $\hat{\bm{N}}(t)\stackrel{d}{=}\tilde{\bm{N}}$ for all $t\ge 0$.  This shows that the process $(\hat{\bm{N}}(t)\colon t\ge 0)$ is also stationary, and thus $\tilde{\bm{N}}\stackrel{d}{=}\hat{\bm{N}}(\infty)$, which completes the proof.

\section{Conclusions and future work.}\label{sec-conclusions}
In this paper, we studied the weighted proportionally fair policy for bandwidth sharing networks under the assumption that the file size distributions belong to a dense class of phase-type distributions.  We directly analyzed the steady state of the system using the drift method. We established a multiplicative-type state-space collapse result in steady state and obtained explicit-form upper and lower bounds on the weighted sum of the expected number of flows on different routes, where the weights are the same as those used in the weighted proportionally fair policy. These bounds are heavy-traffic insensitive in the sense that their dominant terms do not depend on the specific forms of the file size distributions in the class. The state-space collapse result also implies the interchange of limits for the diffusion approximation result of \citet{VlaZhaZwa_14}. Our results in this paper demonstrated that the drift method can be applied to this sophisticated model, and can give explicit-form bounds even in settings where the diffusion approximation approach cannot.  An interesting direction that deserves further exploration is to derive higher moment bounds using the drift method. \citet{ErySri_12} have shown such a potential since they have derived higher moment bounds for routing and scheduling algorithms in a different setting.

\section*{Acknowledgments.}
\begin{sloppypar}
This work was supported in part by NSF Grants ECCS-1609202, ECCS-1739344, ECCS-1739189, ECCS-1609370, CMMI-1562276, CIF-1409106, CNS-2007733, the U.S.\ Army Research Office (ARO Grant No.\ W911NF-16-1-0259), and the U.S.\ Office of Naval Research (ONR Grant No.\ N00014-15-1-2169).
\end{sloppypar}


\bibliographystyle{informs2014} 
\bibliography{refs-weina} 


\iftoggle{complete}{%
\begin{APPENDICES}

\section{Tail and moment bounds for continuous-time Markov chains.}\label{app-tail}
\begin{lemma}\label{LEM-TAIL}
Let $(X(t)\colon t\ge 0)$ be a continuous-time Markov chain over a countable state space $\mathcal{X}$. Suppose that it is irreducible, nonexplosive and positive-recurrent, and it converges in distribution to a random variable $\overline{X}$. Consider a Lyapunov function $V\colon \mathcal{X}\rightarrow \mathbb{R}_+$ and define the drift of $V$ at a state $i\in\mathcal{X}$ as
\begin{equation*}
\Delta V(i)=\sum_{i'\in\mathcal{X}:i'\neq i}q_{ii'}(V(i')-V(i)),
\end{equation*}
where $q_{ii'}$ is the transition rate from $i$ to $i'$. Suppose that the drift satisfies the following conditions:
\begin{enumerate}[label=(\roman*),leftmargin=3.2em]
\item There exist constants $\gamma>0$ and $B>0$ such that $\Delta V(i)\le -\gamma$ for any $i\in\mathcal{X}$ with $V(i)>B$.

\item $\displaystyle
\nu_{\max}\triangleq\sup_{i,i'\in\mathcal{X}\colon q_{ii'}>0}|V(i')-V(i)|<+\infty.$

\item $\displaystyle \overline{q}\triangleq\sup_{i\in\mathcal{X}}(-q_{ii})<+\infty.$
\end{enumerate}
Then for any nonnegative integer $j$, we have
\begin{equation}\label{eq-tail}
\Pr(V(\overline{X})>B+2\nu_{\max}j)\le\biggl(\frac{q_{\max}\nu_{\max}}{q_{\max}\nu_{\max}+\gamma}\biggr)^{j+1},
\end{equation}
where
\begin{equation*}
q_{\max}=\sup_{i\in\mathcal{X}}\sum_{i'\in\mathcal{X}\colon V(i)<V(i')}q_{ii'}.
\end{equation*}
As a result, the $m$-th moment of $V(\overline{X})$ for any $m\in\mathbb{Z}_+$ can be bounded as follows:
\begin{equation}
\expect[V(\overline{X})^m]\le (2B)^m+(4\nu_{\max})^m\biggl(\frac{q_{\max}\nu_{\max}+\gamma}{\gamma}\biggr)^mm!.
\end{equation}
\end{lemma}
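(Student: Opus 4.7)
The plan is to reduce the continuous-time statement to a discrete-time exponential-drift inequality via uniformization, and then pass from the resulting tail bound to the moment bound by a standard integration.

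Since $\overline{q}<\infty$ by hypothesis~(iii), I would uniformize the chain at rate $\Lambda=\overline{q}$, obtaining the discrete-time chain $\{Y_n\}$ with $p_{ii'}=q_{ii'}/\Lambda$ for $i'\ne i$ and $p_{ii}=1+q_{ii}/\Lambda$. Its invariant law is the law of $\overline{X}$, the one-step drift satisfies $\widetilde\Delta V(i)=\Delta V(i)/\Lambda\le -\gamma/\Lambda$ whenever $V(i)>B$, the jump size is still bounded by $\nu_{\max}$, and the one-step probability of an upward move in $V$ is at most $q_{\max}/\Lambda$. The core step is then to prove an exponential one-step estimate: there exists $\theta_0>0$ such that, for every $i$ with $V(i)>B$,
\[
\expect\!\left[e^{\theta_0(V(Y_1)-V(Y_0))}\,\Big|\,Y_0=i\right]\le 1.
\]
This is the standard exponential-drift argument of Hajek and Bertsimas--Gamarnik--Tsitsiklis: split the expectation into the contribution of upward jumps, which is at most $(q_{\max}/\Lambda)(e^{\theta_0\nu_{\max}}-1)$, and the remainder, controlled via $\widetilde\Delta V(i)\le -\gamma/\Lambda$ together with a bound such as $e^{\theta_0 y}\le 1+\theta_0 y+\tfrac12\theta_0^2 y^2 e^{\theta_0\nu_{\max}}$. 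Choosing $\theta_0$ so that the combined bound equals $1$ produces the geometric decay rate $\phi=q_{\max}\nu_{\max}/(q_{\max}\nu_{\max}+\gamma)$; the uniformization constant $\Lambda$ cancels in the relevant ratios, which is why the final constant depends only on $q_{\max}$, $\nu_{\max}$, and $\gamma$. Iterating this inequality in stationarity and applying Markov's inequality then yields \eqref{eq-tail}.

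For the moment bound, I would write $\expect[V(\overline{X})^m]=\int_0^\infty m v^{m-1}\Pr(V(\overline{X})>v)\,dv$, split the integral at $v=2B$, and on $[2B,\infty)$ integrate the geometric tail \eqref{eq-tail} via the change of variables $v=2B+2\nu_{\max}s$. The resulting Gamma-type integral produces $m!$ together with a factor $(2\nu_{\max}/\log(1/\phi))^m$; bounding $1/\log(1/\phi)\le 1/(1-\phi)=(q_{\max}\nu_{\max}+\gamma)/\gamma$ and expanding $(u+2B)^{m-1}\le 2^{m-1}(u^{m-1}+(2B)^{m-1})$ accounts for the factors of $4$ and $(2B)^m$ in the stated bound.

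The main obstacle will be pinning down the sharp decay rate $\phi=q_{\max}\nu_{\max}/(q_{\max}\nu_{\max}+\gamma)$ in the exponential one-step estimate, rather than a coarser but more easily obtained geometric rate. This requires using the refined upward-rate bound $\sum_{i'\colon V(i')>V(i)}q_{ii'}\le q_{\max}$ in place of the crude total rate $\overline{q}$ and an appropriately optimized $\theta_0$ (intuitively, the chain is dominated above level $B$ by a reflected random walk whose up and down rates produce precisely this geometric ratio); everything else is routine.
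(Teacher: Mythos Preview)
Your reduction by uniformization at rate $\overline{q}$ is exactly what the paper does; the paper then simply invokes Theorem~1 of Bertsimas--Gamarnik--Tsitsiklis for the discrete-time chain (noting $p_{\max}=q_{\max}/\overline{q}$ so that $\overline{q}$ cancels) and refers to Lemma~3 of Maguluri--Srikant for the moment bound, rather than re-deriving either.

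One point worth flagging, since you yourself identify it as the main obstacle: the Hajek-style exponential supermartingale you sketch (bounding $\expect[e^{\theta_0(V(Y_1)-V(Y_0))}]$ and optimizing over $\theta_0$) does not naturally deliver the sharp ratio $q_{\max}\nu_{\max}/(q_{\max}\nu_{\max}+\gamma)$; optimizing $\theta_0$ in an inequality of the form you wrote produces a messier constant. The way BGT actually obtain this precise rate is different: in stationarity they set the drift of the truncated function $\hat V(\cdot)=\max\{c,V(\cdot)\}$ to zero, split states according to whether $V$ lies below $c-\nu_{\max}$, in $(c-\nu_{\max},c+\nu_{\max}]$, or above $c+\nu_{\max}$, and read off the one-step recursion
\[
\Pr\bigl(V(\overline X)>c+\nu_{\max}\bigr)\;\le\;\frac{q_{\max}\nu_{\max}}{q_{\max}\nu_{\max}+\gamma}\,\Pr\bigl(V(\overline X)>c-\nu_{\max}\bigr),
\]
which iterates to \eqref{eq-tail} directly. (The paper in fact uses exactly this truncated-$\hat V$ device later in the proof of Lemma~\ref{lem-distr-bound-exp}.) So your framework is right, but for the sharp constant you want the stationarity-recursion argument rather than the exponential-martingale one; everything else, including your moment computation, is fine.
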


\proof{Proof.}
For the continuous-time Markov chain $(X(t)\colon t\ge 0)$, we consider the uniformized \cite{Kei_12} discrete-time Markov chain $(\hat{X}(t)\colon t=1,2,\dots)$ with $\overline{q}$ as the uniform rate parameter, i.e., its transition probability from $i$ to $i'$ is defined by
\begin{equation*}
p_{i i'}=
\begin{cases}
\frac{q_{i i'}}{\overline{q}}& i'\neq i,\\
1- \frac{\sum_{j\neq i}q_{ij}}{\overline{q}}& i'=i.
\end{cases}
\end{equation*}
Then $(\hat{X}(t)\colon t=1,2,\dots)$ has the same stationary distribution as $(X(t)\colon t\ge 0)$. So it suffices to prove that the same Lyapunov function $V$ for this discrete-time Markov chain $(\hat{X}(t)\colon t=1,2,\dots)$ has the tail bound \eqref{eq-tail}. We prove this by applying \citet{BerGamTsi_01}'s tail bound to this discrete-time Markov chain. Below we verify that $V$ satisfies the following three conditions required in \cite{BerGamTsi_01}, where $\Delta_d V(i)$ denotes its drift with respect to $(\hat{X}(t)\colon t=1,2,\dots)$:
\begin{enumerate}[leftmargin=3.4em,label=(\roman*')]
\item The drift $\Delta_d V(i)\le -\frac{\gamma}{\overline{q}}$ for any $i\in\mathcal{X}$ with $V(i)>B$.

\item $\displaystyle
\nu_{\max}\triangleq\sup_{i,i'\in\mathcal{X}\colon p_{ii'}>0}|V(i')-V(i)|<+\infty.$

\item $\displaystyle \expect[V(\overline{X})]<+\infty.$
\end{enumerate}

To prove (i'), we write the drift $\Delta_d V(i)$ as follows:
\begin{align*}
\Delta_d V(i)&=\sum_{i'\neq i}p_{i i'}(V(i')-V(i))\\
&=\frac{1}{\overline{q}}\sum_{i'\neq i}q_{i i'}(V(i')-V(i)).
\end{align*}
Then (i') follows from condition (i) since for any $i$ with $V(i)>B$,
\begin{equation*}
\Delta_d V(i)=\frac{1}{\overline{q}}\Delta V(i)\le -\frac{\gamma}{\overline{q}}.
\end{equation*}

Condition (ii') follows from the fact that $p_{i i'}>0$ either when $q_{i i'}>0$ or when $i'=i$.

Condition (iii') follows from the results in \cite{Haj_82}, since the required conditions there are satisfied due to (i') and (ii').

Now we have verified the conditions required in \cite{BerGamTsi_01}. Then by Theorem~1 in \cite{BerGamTsi_01}, for any nonnegative integer $j$, we have
\begin{equation*}
\Pr(V(\overline{X})>B+2\nu_{\max}j)\le\biggl(\frac{p_{\max}\nu_{\max}}{p_{\max}\nu_{\max}+\gamma/\overline{q}}\biggr)^{j+1},
\end{equation*}
where
\begin{equation*}
p_{\max}=\sup_{i\in\mathcal{X}}\sum_{i'\in\mathcal{X}\colon V(i)<V(i')}p_{ii'}.
\end{equation*}
By the definition of the transition probabilities, $p_{\max}=q_{\max}/\overline{q}$. Thus
\begin{equation*}
\Pr(V(\overline{X})>B+2\nu_{\max}j)\le\biggl(\frac{q_{\max}\nu_{\max}}{q_{\max}\nu_{\max}+\gamma}\biggr)^{j+1},
\end{equation*}
which is the tail bound in \eqref{eq-tail}.

With the tail bound, the moment bounds follow from the proof of Lemma~3 in \citep{MagSri_16}. \Halmos
\endproof

\section{Proof of Lemma~\ref{LEM-SUM-NR-EXP}.}\label{app-proof-lem-sum-nr-exp}
\proof{Proof.}
We first derive a bound on the drift $\Delta\|\bm{n}\|$. By the proof of Lemma~\ref{lem-nperp-drift-exp}, inserting \eqref{eq-bound-npart-exp} into \eqref{eq-bound-n2-drift-exp} gives
\begin{align*}
\Delta \|\bm{n}\|&\le \frac{\Delta \|\bm{n}\|^2}{2\|\bm{n}\|}\\
&\le -\epsilon\frac{\langle \bm{n},\bm{\lambda}^{(0)}\rangle}{\|\bm{n}\|}+\frac{A_1}{\|\bm{n}\|}&\\
&=-\epsilon\frac{\sum_r\weight_rn_r}{\|\bm{n}\|}+\frac{A_1}{\|\bm{n}\|}.
\end{align*}
Since norms are equivalent in $\mathbb{R}^{R}$, there exist positive constants $A_3$ and $A_4$ such that for any $\bm{n}$, $
A_3\|\bm{n}\| \le \sum_r\weight_rn_r\le A_4\|\bm{n}\|$.
Therefore,
\begin{equation*}
\Delta\|\bm{n}\|\le -\epsilon A_3+\frac{A_1}{\|\bm{n}\|}.
\end{equation*}
Then
\begin{equation*}
\Delta \|\bm{n}\|\le -\frac{\epsilon A_3}{2},\quad\text{when }\|\bm{n}\|\ge \frac{2A_1}{\epsilon A_3}.
\end{equation*}
Recall that
\begin{equation*}
\sup_{\bm{n},\bm{n}'\colon q_{\bm{n}\bm{n}'}>0} \bigl|\|\bm{n}'\|-\|\bm{n}\|\bigr|\le \nu,\quad
\sup_{\bm{n}}\sum_{\bm{n}'\colon \|\bm{n}\|<\|\bm{n}'\|}q_{\bm{n}\bm{n}'}\le\zeta,
\end{equation*}
and $\nu <+\infty$ and $\overline{q}<+\infty$. Then by Lemma~\ref{LEM-TAIL}, for any nonnegative integer~$j$,
\begin{equation*}
\Pr\biggl(\|\steady\|>\frac{2A_1}{\epsilon A_3 }+2\nu j\biggr)\le\Biggl(\frac{\zeta \nu }{\zeta \nu +\epsilon A_3/2}\Biggr)^{j+1}.
\end{equation*}
Let
\begin{equation*}
\beta=\frac{\zeta \nu }{\zeta \nu +\epsilon A_3/2}.
\end{equation*}
Then,
\begin{align*}
&\mspace{23mu}\Pr\biggl(\sum_r\weight_r\overline{N}_r>\frac{2A_1A_4}{\epsilon A_3}+2\nu A_4j\biggr)\\
&\le\Pr\biggl(\|\steady\|>\frac{2A_1}{\epsilon A_3 }+2\nu j\biggr)\\
&\le\beta^{j+1}.
\end{align*}
This completes the proof of Lemma~\ref{LEM-SUM-NR-EXP}. \Halmos
\endproof

\section{Validity of the constructed inner product.}\label{app-proof-lem-inner-well-defined}
\begin{lemma}\label{LEM-INNER-PRODUCT}
The inner product defined by the matrix $M$ is well-defined, i.e., the matrix $M$ is well-defined and positive definite.
\end{lemma}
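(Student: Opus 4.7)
The plan is to exploit the block-diagonal structure $M = \mathrm{diag}(M_1,\dots,M_R)$: it suffices to show that each $M_r$ is a well-defined, symmetric, positive-definite $\Phase_r \times \Phase_r$ matrix. Symmetry of $M_r$ is immediate since the integrand factors as $\bigl(\exp(\transition_r\sigma)\bm{1}\bigr)\bigl(\exp(\transition_r\sigma)\bm{1}\bigr)^T$, and nonnegativity of the associated quadratic form is immediate from the same identity. Throughout the rest of the plan, fix a route $r$ and drop the subscript.

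For well-definedness I need the matrix integral to converge. A change of variables using $\int_0^\infty \exp(\transition s)\,ds = (-\transition)^{-1}$ identifies the denominator as
$$
f(\sigma) \triangleq \bm{\pi}(-\transition)^{-1}\exp(\transition\sigma)\bm{1} = \int_\sigma^\infty \Pr(T > t)\,dt = \mathbb{E}\bigl[(T-\sigma)^+\bigr],
$$
where $T$ is the absorption time of the underlying phase-type chain started from $\bm{\pi}$. Since $T$ has unbounded support, $f(\sigma) > 0$ for every $\sigma \ge 0$. All eigenvalues of $\transition$ lie in $\{-\mu^{(b)}\colon b = 1,\dots,B_r\}$ and are strictly negative, so with $\mu^* \triangleq \min_b \mu^{(b)}$ a spectral estimate shows that the numerator matrix decays like $e^{-2\mu^*\sigma}$ (up to polynomial factors coming from Jordan blocks) while $f(\sigma)$ decays like $e^{-\mu^*\sigma}$, producing an integrable ratio.

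For positive definiteness, for any $\bm{y} \in \mathbb{R}^{\Phase_r}$ I have
$$
\bm{y}^T M_r \bm{y} = \frac{\weight_r}{\lambda_r^{(0)}}\int_0^\infty \frac{\bigl(\bm{y}^T\exp(\transition\sigma)\bm{1}\bigr)^2}{f(\sigma)}\,d\sigma \ge 0,
$$
with equality iff the real-analytic function $g(\sigma) \triangleq \bm{y}^T\exp(\transition\sigma)\bm{1}$ is identically zero on $[0,\infty)$. Taking successive derivatives of $g$ at $\sigma = 0$, the latter is equivalent to $\bm{y}^T\transition^k\bm{1} = 0$ for all $k \ge 0$, i.e., $\bm{y}$ is orthogonal (in the standard dot product) to the Krylov subspace $\mathrm{span}\{\transition^k\bm{1}\colon k\ge 0\}$. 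It therefore suffices to show that this subspace is all of $\mathbb{R}^{\Phase_r}$, equivalently that the pair $(\transition,\bm{1})$ is controllable. Here I invoke the PBH test: controllability holds iff no left eigenvector $\bm{v}^T$ of $\transition$ satisfies $\bm{v}^T\bm{1} = 0$. Using the block-diagonal Jordan structure of $\transition$, each block $\transition^{(b)}$ is a single Jordan block at eigenvalue $-\mu^{(b)}$; its only (up to scalar) left eigenvector is the indicator of the \emph{last} phase in the block. Because the $\mu^{(b)}$'s are \emph{distinct} across blocks, the left eigenspace of $\transition$ at each eigenvalue $-\mu^{(b)}$ is one-dimensional and spanned by that indicator, and its pairing with $\bm{1}$ equals $1 \ne 0$. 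Hence PBH is satisfied, and $M_r \succ 0$.

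The main obstacle is precisely the step of closing the gap between ``$\bm{y}$ annihilates all Krylov iterates of $\bm{1}$'' and ``$\bm{y} = 0$'' for the structured but not simply diagonalizable $\transition_r$: this is where the hypotheses on $\transition_r$ (block-diagonal Jordan form plus distinctness of $\mu^{(b)}$'s) are decisive, and direct rank computations on the Krylov matrix would be cumbersome. The PBH detour replaces such a calculation with a one-line inspection of a distinguished family of left eigenvectors, and crucially uses distinctness to keep each left eigenspace one-dimensional, preventing cancellations in the pairing with $\bm{1}$. The well-definedness step is comparatively routine once the probabilistic identity for $f(\sigma)$ and spectral decay rates are in hand.
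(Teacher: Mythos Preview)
Your proposal is correct and follows essentially the same approach as the paper: both reduce to each block $M_r$, and both establish positive definiteness by recasting the vanishing of $\bm{y}^T\exp(\transition_r\sigma)\bm{1}$ as controllability of $(\transition_r,\bm{1})$ and then applying the PBH test, using the distinctness of the $\mu_r^{(b)}$'s to force every left eigenvector to be a single-coordinate indicator with nonzero pairing against $\bm{1}$. The only minor difference is in the well-definedness step, where the paper uses the cleaner entrywise bound $\sum_k \overline{\chi}_k P_k(\sigma)\ge \overline{\chi}_{k_2}P_{k_2}(\sigma)$ (with $P_k(\sigma)=(\exp(\transition_r\sigma)\bm{1})_k$ and $\overline{\chi}_k>0$ the expected time in phase $k$) to cancel one factor in the integrand, rather than comparing exponential decay rates of numerator and denominator as you do.
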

\proof{Proof.}
It suffices to prove that for each route $r$, $M_r$ is well-defined and positive definite since $M$ is block-diagonal.

We first prove that $M_r$ is well-defined, i.e., the integral below is (entry-wise) finite:
\begin{equation*}
M_r=\frac{\weight_r}{\lambda_r^{(0)}}\int_0^{+\infty}\frac{\exp(\transition_r\sigma)\bm{1}\bm{1}^T\exp(\transition_r^T\sigma)}{\bm{\pi}_r(-\transition_r)^{-1}\exp(\transition_r\sigma)\bm{1}}d\sigma.
\end{equation*}
Let $\bm{P}(\sigma)=\exp(\transition_r\sigma)\bm{1}$ and $\chi=(-\transition_r)^{-1}$. Then from the properties of phase-type distributions we know that $\chi_{\phase_1\phase_2}$ is the expected time spent in phase $\phase_2$ given that phase $\phase_1$ is the initial state. Let $\overline{\bm{\chi}}=(\bm{\pi}_r\chi)^T$.
Then $\overline{{\chi}}_{\phase}$ is the expected time spent in phase $\phase$ for the initial distribution $\bm{\pi}_r$. Therefore, $\overline{{\chi}}_{\phase}>0$ for all $\phase\in[\Phase_r]$. With the above notation, the $(\phase_1,\phase_2)$th entry of $M_r$ can be written as
\begin{align*}
(M_r)_{\phase_1,\phase_2}&=\frac{\weight_r}{\lambda_r^{(0)}}\int_{0}^{+\infty}\frac{P_{\phase_1}(\sigma)P_{\phase_2}(\sigma)}{\sum_{\phase}\overline{\chi}_{\phase}P_{\phase}(\sigma)}d\sigma\\
&\le\frac{\weight_r}{\lambda_r^{(0)}}\int_{0}^{+\infty}\frac{P_{\phase_1}(\sigma)P_{\phase_2}(\sigma)}{\overline{\chi}_{\phase_2}P_{\phase_2}(\sigma)}d\sigma\\
&=\frac{\weight_r}{\lambda_r^{(0)}\overline{\chi}_{\phase_2}}\int_{0}^{+\infty}P_{\phase_1}(\sigma)d\sigma.
\end{align*}
By our assumptions, $\transition_r$ is an upper triangular matrix with all the main diagonal entries being negative. So $\transition_r$ is invertible and $\lim_{\sigma\rightarrow +\infty}\exp(\transition_r^T\sigma)$ is an all-zero matrix. Therefore, $P_{\phase_1}(\sigma)$ is integrable and thus $M_r$ is well-defined.

Next we prove that $M_r$ is positive definite.  Let $G(u)$ denote the complementary cumulative distribution function (CCDF) of the file size distribution on route $r$. Then
\begin{equation*}
G(u)=\bm{\pi}_r\exp(\transition_r u)\bm{1},\quad \int_{0}^{+\infty}G(u)du=\frac{1}{\servrate_r}.
\end{equation*}
The denominator inside the integral of $M_r$ can be written as
\begin{equation*}
\bm{\pi}_r(-\transition_r)^{-1}\exp(\transition_r\sigma)\bm{1}=\frac{1}{\servrate_r}-\int_{0}^{\sigma}G(u)du,
\end{equation*}
which is positive for all $\sigma\ge 0$. Therefore, it is obvious that $M_r$ is positive semi-definite. Further, $M_r$ is positive definite if and only if there exists no $\bm{y}\neq \bm{0}$ such that
\begin{equation}\label{eq-all-zero}
\bm{y}^T\exp(\transition_r\sigma)\bm{1}=\bm{0},\text{ for all }\sigma\ge 0,
\end{equation}
where $\bm{0}$ is an all-zero vector with dimension $\Phase_r\times 1$. If we view the pair $(\transition_r,\bm{1})$ as the $(A,B)$ matrix of a control system, \eqref{eq-all-zero} is equivalent to the controllability of the system $(A,B)$ \citep{Son_90}. By the Popov-Belevitch-Hautus (PBH) test (also referred to as Hautus Lemma) in control theory \citep{Son_90}, this is equivalent to that $\text{rank}[\lambda I-\transition_r,\bm{1}]=\Phase_r$ for each eigenvalue $\lambda$ of the matrix $\transition_r$, which is further equivalent to that $\transition_r^T$ has no eigenvector $\bm{v}$ such that $\bm{v}^T\bm{1}=0$. Now let us look at the eigenvectors of $\transition_r^T$. Recall that $\transition_r^T$ has a block-diagonal structure given in \eqref{eq-transition_r-block} with rate $\mu_r^{(b)}$ for each block $b$, where $\mu_r^{(b)}$'s are positive and distinct. Then the $-\mu_r^{(b)}$'s are the eigenvalues of $\transition_r^T$. Let $\bm{v}$ be an eigenvector associated with the eigenvalue $-\mu_r^{(b)}$. Then $\bm{v}$ satisfies that
\begin{align*}
&\begin{bmatrix}
(\transition_r^{(1)})^T+\mu_r^{(b)}I & \mspace{-18mu} 0 & \cdots & 0\\
0 & \mspace{-18mu}(\transition_r^{(2)})^T+\mu_r^{(b)}I & \cdots & 0\\
\vdots & \mspace{-18mu}\vdots & \ddots & \vdots\\
0 & \mspace{-18mu}0 & \cdots & (\transition_r^{(B_r)})^T+\mu_r^{(b)}I
\end{bmatrix}
\bm{v}\\
&=\bm{0}.
\end{align*}
Since $(\transition_r^{(b')})^T+\mu_r^{(b)}I$ is full rank for all $b'\neq b$ and
\begin{equation*}
(\transition_r^{(b)})^T+\mu_r^{(b)}I=
\begin{bmatrix}
0 & 0 & \cdots & 0\\
2\mu_r^{(b)} & 0 & \cdots & 0\\
\vdots & \ddots & \ddots & \vdots\\
0 & \cdots & 2\mu_r^{(b)} & 0
\end{bmatrix},
\end{equation*}
the eigenvector $\bm{v}$ has only one nonzero entry. Then $\bm{v}^T\bm{1}\neq 0$. This completes the proof that $M_r$ is positive definite. \Halmos
\endproof

\section{Properties of the constructed inner product.}\label{app-proof-lems-inner}
Below we first prove Lemmas~\ref{LEM-ROTATE} and \ref{LEM-LARGER-NORM} and then prove Properties \ref{cond-C1} and \ref{cond-C2}.

\proof{Proof of Lemma~\ref{LEM-ROTATE}.}
By the definitions of $\bm{\rho}_r^{(0)}$ and $M_r$ in \eqref{eq-load-vector} and \eqref{eq-Mr}, respectively,
\begin{align*}
&\mspace{23mu}\frac{1}{\weight_r}(\bm{\rho}_r^{(0)})^TM_r(-\transition_r)^T\\
&=\bm{\pi}_r(-\transition_r)^{-1}\int_0^{+\infty}\frac{\exp(\transition_r\sigma)\bm{1}\bm{1}^T\exp(\transition_r^T\sigma)(-\transition_r)^T}{\bm{\pi}_r(-\transition_r)^{-1}\exp(\transition_r\sigma)\bm{1}}d\sigma\\
&=\int_0^{+\infty}\bm{1}^T\exp(\transition_r^T\sigma)(-\transition_r)^Td\sigma\\
&=-\bm{1}^T\exp(\transition_r^T\sigma)\Bigm|_0^{+\infty}\\
&=\bm{1}^T.
\end{align*}
Here we have used that fact that $\lim_{\sigma\rightarrow +\infty}\exp(\transition_r^T\sigma)$ is an all-zero matrix since $\transition_r$ is an upper triangular matrix and its main diagonal entries are all negative. This completes the proof. \Halmos
\endproof

\proof{Proof of Lemma~\ref{LEM-LARGER-NORM}.}
We first derive another representation of $M_r(-\transition_r^T)+(-\transition_r)M_r$. Let
\begin{equation*}
M_r(t)=\frac{\weight_r}{\lambda_r^{(0)}}\int_0^{t}\frac{\exp(\transition_r\sigma)\bm{1}\bm{1}^T\exp(\transition_r^T\sigma)}{\bm{\pi}_r(-\transition_r)^{-1}\exp(\transition_r\sigma)\bm{1}}d\sigma.
\end{equation*}
Then
\begin{equation*}
M_r(-\transition_r^T)+(-\transition_r)M_r=\mspace{-12mu}\lim_{t\rightarrow +\infty}\Bigl(M_r(t)(-\transition_r^T)+(-\transition_r)M_r(t)\Bigr).
\end{equation*}
We can verify that
\begin{align}
&\mspace{23mu}M_r(t)(-\transition_r^T)+(-\transition_r)M_r(t)\nonumber\\
&=-\frac{\weight_r}{\lambda_r^{(0)}}\frac{\exp(\transition_rt)\bm{1}\bm{1}^T\exp(\transition_r^Tt)}{\bm{\pi}_r(-\transition_r)^{-1}\exp(\transition_rt)\bm{1}}+\frac{\weight_r}{\overline{\rho}_r}\bm{1}\bm{1}^T\nonumber\\
&\mspace{21mu}+\frac{\weight_r}{\lambda_r^{(0)}}\int_0^{t}\frac{\exp(\transition_r\sigma)\bm{1}\bm{1}^T\exp(\transition_r^T\sigma)}{\bm{\pi}_r(-\transition_r)^{-1}\exp(\transition_r\sigma)\bm{1}}\cdot\frac{\bm{\pi}_r\exp(\transition_r\sigma)\bm{1}}{\bm{\pi}_r(-\transition_r)^{-1}\exp(\transition_r\sigma)\bm{1}}d\sigma.\label{eq-Lyapunov-equation}
\end{align}
We have proved that $M_r$ is well-defined, so
\begin{equation*}
\lim_{t\rightarrow 0}\biggl(-\frac{\weight_r}{\lambda_r^{(0)}}\frac{\exp(\transition_rt)\bm{1}\bm{1}^T\exp(\transition_r^Tt)}{\bm{\pi}_r(-\transition_r)^{-1}\exp(\transition_rt)\bm{1}}\biggr)=(0)_{\Phase_r\times\Phase_r},
\end{equation*}
where $(0)_{\Phase_r\times\Phase_r}$ is the all-zero $\Phase_r\times\Phase_r$ matrix.

Now it suffices to prove that there exists a constant $\eta_r>0$ such that for any $\sigma\ge 0$,
\begin{equation}\label{eq-bound-hazard}
\frac{\bm{\pi}_r\exp(\transition_r\sigma)\bm{1}}{\bm{\pi}_r(-\transition_r)^{-1}\exp(\transition_r\sigma)\bm{1}}\ge 2\eta_r,
\end{equation}
since combining this with \eqref{eq-Lyapunov-equation} implies that for any $\bm{y}\in\mathbb{R}^{\Phase_r}$,
\begin{align*}
&\mspace{23mu}\bm{y}^T\biggl(\frac{1}{2}M_r(-\transition_r^T)+\frac{1}{2}(-\transition_r)M_r-\eta_rM_r\biggr)\bm{y}\\
&=\lim_{t\rightarrow +\infty}\bm{y}^T\biggl(\frac{1}{2}M_r(t)(-\transition_r^T)+\frac{1}{2}(-\transition_r)M_r(t)-\eta_rM_r(t)\biggr)\bm{y}\\
&\ge\lim_{t\rightarrow+\infty}\frac{\weight_r}{2\lambda_r^{(0)}}\int_0^{t}\frac{\bm{y}^T\exp(\transition_r\sigma)\bm{1}\bm{1}^T\exp(\transition_r^T\sigma)\bm{y}}{\bm{\pi}_r(-\transition_r)^{-1}\exp(\transition_r\sigma)\bm{1}}\cdot(2\eta_r)d\sigma-\eta_r\bm{y}^TM_r\bm{y}+\frac{\weight_r}{2\overline{\rho}_r}(\bm{y}^T\bm{1})^2\\
&=\frac{\weight_r}{2\overline{\rho}_r}(\bm{y}^T\bm{1})^2\\
&\ge 0.
\end{align*}
Let $g(u)$ and $G(u)$ denote the probability density function (PDF) and the complementary cumulative distribution function (CCDF) of the file size distribution on route $r$, respectively. Then
\begin{equation*}
G(\sigma)=\int_{\sigma}^{+\infty}g(u)du=\bm{\pi}_r\exp(\transition_r \sigma)\bm{1},
\end{equation*}
and
\begin{equation*}
\int_{\sigma}^{+\infty}G(u)du=\bm{\pi}_r(-\transition_r)^{-1}\exp(\transition_r\sigma)\bm{1}.
\end{equation*}
Thus \eqref{eq-bound-hazard} is equivalent to that there exists a constant $\eta_r>0$ such that for any $u\ge 0$,
$\frac{g(u)}{G(u)}\ge 2\eta_r,$
i.e., the hazard function is lower bounded by $2\eta_r$. Note that the eigenvalues of $\transition_r$ are the rates of the phases, denoted by $\mu_{r,\phase}$'s with $\phase\in[\Phase_r]$, which are all positive. Consider the Jordan canonical form $\transition_r=\Phi J\Phi^{-1}$. Then $\exp(\transition_ru)=\Phi\exp(Ju)\Phi^{-1}$,
where the $(i,j)$th entry of $\exp(Ju)$ is either $e^{-\mu_{r,i}u}u^{j-1}/(j-1)!$ or $0$. So $G(u)$ can be written as
\begin{equation*}
G(u)=\sum_{i,j\in[\Phase_r]}c_{ij}e^{-\mu_{r,i}u}\frac{u^{j-1}}{(j-1)!}
\end{equation*}
for some constants $c_{ij}$, and thus
\begin{equation*}
g(u)=-G'(u)=\sum_{i,j\in[\Phase_r]}c_{ij}\mu_{r,i}e^{-\mu_{r,i}u}\frac{u^{j-1}}{(j-1)!}.
\end{equation*}
Therefore, $\lim_{u\rightarrow +\infty}\frac{g(u)}{G(u)}\ge\min_{\phase\in[\Phase_r]}\mu_{r,\phase}>0$.
It can be verified that $g(u)>0$ and $G(u)>0$ for any $u\ge 0$. Thus there exists a constant $\eta_r>0$ such that for any $u\ge 0$, $\frac{g(u)}{G(u)}\ge 2\eta_r$. \Halmos
\endproof

\proof{Proofs of Properties \ref{cond-C1} and \ref{cond-C2}.}
Note that the constant $\eta_{\min}$ in property~\ref{cond-C2} is defined as $\eta_{\min}=\min_r\{\eta_r\}$ with the $\eta$'s given in Lemma~\ref{LEM-LARGER-NORM}.

We now prove \ref{cond-C1}:
\begin{align}
&\mspace{23mu}\langle \bm{b}^{(\ell)},(-\transition)^T(\bm{\rho}^{(0)}-\bm{nx})\rangle\nonumber\\
&= (\bm{b}^{(\ell)})^TM(-\transition)^T(\bm{\rho}^{(0)}-\bm{nx})\label{eq-condition-C1-proof-1}
\\
&=\sum_{r:\ell\in r}\frac{1}{\weight_r}(\bm{\rho}_r^{(0)})^TM_r(-\transition_r)^T(\bm{\rho}_r^{(0)}-\bm{n}_rx_r)\label{eq-condition-C1-proof-2}\\
&=\sum_{r:\ell\in r}\bm{1}^T(\bm{\rho}_r^{(0)}-\bm{n}_rx_r)\label{eq-condition-C1-proof-3}\\
&=C_{\ell}-\delta_{\ell}-\sum_{r:\ell\in r}\sum_{\phase}n_{r,\phase}x_r\label{eq-condition-C1-proof-4}\\
&=U_{\ell}-\delta_{\ell},\label{eq-condition-C1-proof-5}
\end{align}
where \eqref{eq-condition-C1-proof-2} follows from the definition of $\bm{b}^{(\ell)}$, \eqref{eq-condition-C1-proof-3} follows from Lemma~\ref{LEM-ROTATE}, and \eqref{eq-condition-C1-proof-4} follows from the heavy-traffic condition.

Next we prove condition \ref{cond-C2}. The inner product can be written in the following form:
\begin{align*}
&\mspace{23mu}\langle \bm{b}^{(\ell)}-\widehat{\bm{b}}^{(\ell)},(-\transition^T)(\bm{\rho}^{(0)}-\bm{nx})\rangle\\
&=(\bm{b}^{(\ell)}-\widehat{\bm{b}}^{(\ell)})^TM(-\transition^T)(\bm{\rho}^{(0)}-\bm{nx})\\
&=\sum_{r:\ell\in r}\frac{1}{\weight_r}(\bm{\rho}_r^{(0)}-\bm{n}_rx_r)^TM_r(-\transition_r^T)(\bm{\rho}_r^{(0)}-\bm{n}_rx_r)\\
&=\sum_{r:\ell\in r}\frac{1}{2\weight_r}(\bm{\rho}_r^{(0)}-\bm{n}_rx_r)^T\bigl(M_r(-\transition_r^T)+(-\transition_r)M_r\bigr)(\bm{\rho}_r^{(0)}-\bm{n}_rx_r).
\end{align*}
Then by Lemma~\ref{LEM-LARGER-NORM},
\begin{align*}
&\mspace{23mu}\sum_{r:\ell\in r}\frac{1}{2\weight_r}(\bm{\rho}_r^{(0)}-\bm{n}_rx_r)^T(M_r(-\transition_r^T)+(-\transition_r)M_r)(\bm{\rho}_r^{(0)}-\bm{n}_rx_r)\\
&\ge\sum_{r:\ell\in r}\frac{\eta_r}{\weight_r} (\bm{\rho}_r^{(0)}-\bm{n}_rx_r)^TM_r(\bm{\rho}_r^{(0)}-\bm{n}_rx_r)\\
&\ge \weight_{\min}\eta_{\min
}\|\widehat{\bm{b}}^{(\ell)}-\bm{b}^{(\ell)}\|^2,
\end{align*}
where $\eta_{\min}=\min_r\{\eta_r\}>0$ and $\weight_{\min}=\min_r\{\weight_r\}>0$. Therefore,
\begin{align*}
\langle \bm{b}^{(\ell)}-\widehat{\bm{b}}^{(\ell)},(-\transition^T)(\bm{\rho}^{(0)}-\bm{nx})\rangle\ge\weight_{\min}\eta_{\min
}\|\widehat{\bm{b}}^{(\ell)}-\bm{b}^{(\ell)}\|^2,
\end{align*}
which completes the proof. \Halmos
\endproof

\section{Proof of Lemma~\ref{LEM-SUM-NRK}.}\label{app-proof-lem-sum-nrk}
\proof{Proof.}
We first show the bounds on the drift $\Delta\|\bm{n}\|$.  By the proof of Lemma~\ref{lem-nperp-drift}, inserting \eqref{eq-bound-npart} into \eqref{eq-bound-n2-drift-3} gives
\begin{align*}
\Delta \|\bm{n}\|&\le \frac{\Delta \|\bm{n}\|^2}{2\|\bm{n}\|}\\
&\le -\epsilon\frac{\langle \bm{n},(-\transition^T)\bm{\rho}^{(0)}\rangle}{\|\bm{n}\|}+\frac{A_1}{\|\bm{n}\|}&\\
&=-\epsilon\frac{\bm{n}^TM(-\transition^T)\bm{\rho}^{(0)}}{\|\bm{n}\|}+\frac{A_1}{\|\bm{n}\|}.
\end{align*}
We bound the term $\bm{n}^TM(-\transition^T)\bm{\rho}^{(0)}$ below. First, it can be written as follows
\begin{equation}\label{eq-linear-n}
\bm{n}^TM(-\transition^T)\bm{\rho}^{(0)} = \sum_r\lambda_r^{(0)}\bm{n}_r^TM_r\bm{\pi}_r^T.
\end{equation}
Then for each term for route $r$, we have
\begin{equation*}
\lambda_r^{(0)}M_r\bm{\pi}_r^T = \weight_r\int_0^{+\infty}\frac{\exp(\transition_r\sigma)\bm{1}\bigl(\bm{\pi}_r\exp(\transition_r\sigma)\bm{1}\bigr)^T}{\bm{\pi}_r(-\transition_r)^{-1}\exp(\transition_r\sigma)\bm{1}}d\sigma.
\end{equation*}
Recall that we have proved that there exists a constant $\eta_r>0$ such that for any $\sigma\ge 0$,
\begin{equation*}
\frac{\bm{\pi}_r\exp(\transition_r\sigma)\bm{1}}{\bm{\pi}_r(-\transition_r)^{-1}\exp(\transition_r\sigma)\bm{1}}\ge 2\eta_r,
\end{equation*}
in \eqref{eq-bound-hazard} of the proof of Lemma~\ref{LEM-LARGER-NORM}. Since each entry of $\exp(\transition_r\sigma)\bm{1}$ is nonnegative, we have
\begin{align*}
\lambda_r^{(0)}M_r\bm{\pi}_r^T &\ge 2\weight_r\eta_r\int_{0}^{+\infty}\exp(\transition_r\sigma)\bm{1}d\sigma\\
&=2\weight_r\eta_r(-\transition_r)^{-1}\bm{1},
\end{align*}
where the inequality is in an entry-wise sense. Recall that $\transition_r$ is a block-diagonal matrix in the form of \eqref{eq-transition_r-block}. Then each entry of $(-\transition_r)^{-1}\bm{1}$ is greater than or equal to the corresponding $\frac{1}{\mu_r^{(b)}}$, where $\mu_r^{(b)}$'s are the rates in the phase-type distributions. Applying this bound to \eqref{eq-linear-n} and letting
\begin{equation*}
 \mu_{\max}=\max_{r,b}\mu_r^{(b)},
\end{equation*}
we have
\begin{align*}
\bm{n}^TM(-\transition^T)\bm{\rho}^{(0)} & \ge \frac{2\eta_{\min}}{\mu_{\max}}\sum_{r,\phase}\weight_rn_{r,\phase},
\end{align*}
Since norms are equivalent in $\mathbb{R}^{\Phase}$, there exist positive constants $A_3$ and $A_4$ such that for any $\bm{n}$, $A_3\|\bm{n}\| \le \sum_{r,\phase}\weight_rn_{r,\phase}\le A_4\|\bm{n}\|$.
Therefore,
\begin{equation*}
\Delta\|\bm{n}\|\le -\frac{2\eta_{\min}A_3\epsilon}{\mu_{\max}}+\frac{A_1}{\|\bm{n}\|}.
\end{equation*}
Then
\begin{equation*}
\Delta \|\bm{n}\|\le -\frac{\eta_{\min}A_3\epsilon}{\mu_{\max}},\quad\text{when }\|\bm{n}\|\ge \frac{\mu_{\max}A_1}{\eta_{\min}A_3\epsilon},
\end{equation*}
which is the first upper bound on the drift in Lemma~\ref{LEM-SUM-NRK}.

Let
\begin{gather*}
\overline{q}= \sup_{\bm{n}}(-q_{\bm{nn}}),\\
\nu=2\max_{r,\phase}\|\bm{e}^{(r,\phase)}\|,\quad\zeta=2\left(\max_{\ell}C_{\ell}\right)\cdot\left(\max_{r,\phase_1,\phase_2:\phase_1\neq\phase_2}(\transition_{r})_{\phase_1,\phase_2}\right).
\end{gather*}
It can be verified that $\overline{q}<+\infty$, and that $\nu$ and $\zeta$ are positive constants such that
\begin{gather*}
\sup_{\bm{n},\bm{n}'\colon q_{\bm{n}\bm{n}'}>0} \bigl|\|\bm{n}'\|-\|\bm{n}\|\bigr|\le \nu,\quad 
\sup_{\bm{n}}\sum_{\bm{n}'\colon \|\bm{n}\|<\|\bm{n}'\|}q_{\bm{n}\bm{n}'}\le\zeta.
\end{gather*} 
Therefore, for all $\bm{n}$,
\begin{equation*}
\Delta\|\bm{n}\|\le\nu\zeta,
\end{equation*}
which is the second upper bound on the drift in Lemma~\ref{LEM-SUM-NRK}. Then by the Foster-Lyapunov theorem, the flow count process is positive recurrent.

By Lemma~\ref{LEM-TAIL}, for any nonnegative integer $j$,
\begin{equation*}
\Pr\biggl(\|\steady\|>\frac{\mu_{\max}A_1}{\eta_{\min}A_3\epsilon }+2\nu j\biggr)\le\Biggl(\frac{\zeta \nu }{\zeta \nu +\epsilon\eta_{\min}A_3/\mu_{\max}}\Biggr)^{j+1}.
\end{equation*}
Let
\begin{equation*}
\beta=\frac{\zeta \nu }{\zeta \nu +\epsilon\eta_{\min}A_3/\mu_{\max}}.
\end{equation*}
Then,
\begin{align*}
&\mspace{23mu}\Pr\biggl(\sum_{r,\phase}\weight_r\overline{N}_{r,\phase}>\frac{\mu_{\max}A_1A_4}{\eta_{\min}A_3\epsilon }+2\nu A_4j\biggr)\\
&\le\Pr\biggl(\|\steady\|>\frac{\mu_{\max}A_1}{\eta_{\min}A_3\epsilon }+2\nu j\biggr)\\
&\le\beta^{j+1}.
\end{align*}
This completes the proof of Lemma~\ref{LEM-SUM-NRK}. \Halmos
\endproof

\end{APPENDICES}
}{%
}%
\end{document}